\numberwithin{equation}{section}
\newtheorem{theorem}{Theorem}[section]
\newtheorem{proposition}[theorem]{Proposition}
\newtheorem{corollary}[theorem]{Corollary}
\theoremstyle{definition}
\newtheorem{example}[theorem]{Example}
\newtheorem{remark}[theorem]{Remark}
\newcommand{\Z}{{\mathbb Z}}
\newcommand{\C}{{\mathbb C}}
\newcommand{\st}{\!\otimes\!}
\newcommand{\grbold}[1]{\mbox{\boldmath{$#1$}}}
\renewcommand{\L}{{\mathscr L}}
\newcommand{\M}{{\mathscr M}}
\newcommand{\ot}{\otimes}
\begin{document}

\title[Multispecies TASEP]{Multispecies TASEP and the tetrahedron equation}

\author{Atsuo Kuniba}
\email{atsuo@gokutan.c.u-tokyo.ac.jp}
\address{Institute of Physics, University of Tokyo, Komaba, Tokyo 153-8902, Japan}

\author{Shouya Maruyama}
\email{maruyama@gokutan.c.u-tokyo.ac.jp}
\address{Institute of Physics, University of Tokyo, Komaba, Tokyo 153-8902, Japan}

\author{Masato Okado}
\email{okado@sci.osaka-cu.ac.jp}
\address{Department of Mathematics, Osaka City University, 
3-3-138, Sugimoto, Sumiyoshi-ku, Osaka, 558-8585, Japan}


\maketitle

\vspace{0.5cm}
\begin{center}{\bf Abstract}
\end{center}
We introduce a family of layer to layer transfer matrices
in a three-dimensional (3D)  lattice model which can be viewed as 
partition functions of the $q$-oscillator valued six-vertex model
on $m \times n$ square lattice.
By invoking the tetrahedron equation we establish
their commutativity and bilinear relations mixing various boundary conditions.
At $q=0$ and $m=n$, they ultimately yield a new proof of 
the steady state formula for 
the $n$-species totally asymmetric simple exclusion process (TASEP)
obtained recently by the authors, revealing the 3D integrability in 
the matrix product construction.
\vspace{0.4cm}

\section{Introduction}\label{sec:intro}
Totally asymmetric simple exclusion process (TASEP) 
is a model of non-equilibrium stochastic dynamics
in physical, biological and many other systems.
It has been studied extensively in the last few decades
especially in one-dimension, 
which has led to numerous generalizations and analytical results. 
See for example \cite{BE, BP} and references therein.

By $n$-species TASEP or $n$-TASEP for short 
we mean in this paper 
the TASEP on a one-dimensional periodic chain $\Z_L$ with $L$-sites 
in which local states $\sigma_i$ take values in 
$\{0,1,\ldots, n\}$ and neighboring pairs 
$(\sigma_i, \sigma_{i+1})= (\alpha,\beta)$ with 
$\alpha>\beta$ are interchanged to
$(\beta,\alpha)$ with a uniform transition rate.

The main theme of the present paper, which is a 
continuation of \cite{KMO}, 
is the 3D integrability of the $n$-TASEP 
connected to the {\em tetrahedron equation} \cite{Zam80},
a 3D generalization of the Yang-Baxter equation \cite{Bax}.
It becomes visible and natural for the multispecies case $n\ge 2$.

In \cite{KMO}, combinatorial construction 
of the steady state probability 
${\mathbb P}(\sigma_1,\ldots, \sigma_L)$ of 
the $n$-TASEP by Ferrari-Martin \cite{FM} was identified 
with a composition of the {\em combinatorial $R$} \cite{NY}.
It is a quantum $R$ matrix of $U_q(\widehat{sl}_L)$ at $q=0$ where 
the original periodic chain $\Z_L$ has been incorporated into the Dynkin diagram
of the relevant quantum group.
It has led to a new matrix product formula 
\begin{align}\label{mnm}
{\mathbb P}(\sigma_1,\ldots, \sigma_L) 
= \mathrm{Tr}(X_{\sigma_1}\cdots X_{\sigma_L})
\end{align}
by applying the recent matrix product construction of the $R$ matrix 
based on the tetrahedron equation \cite{KOS}.
The result in \cite{KMO} possesses distinct features from the other ones 
\cite{EFM,PEM,CDW}.
The operator $X_\sigma$ itself is expressed as a configuration sum 
for a {\em corner transfer matrix} \cite{Bax} 
of the $q=0$-oscillator valued five-vertex model.
See (\ref{ngm}).
It serves as a layer to layer transfer matrix to constitute 
$\mathrm{Tr}(X_{\sigma_1}\cdots X_{\sigma_L})$
as a {\em partition function} of a 3D lattice model. 

Our aim in this paper is to elucidate a further 3D integrability
concerning an alternative approach to establish (\ref{mnm}).
It is by the so called  {\em hat relation}
\begin{align}\label{ers}
\sum_{0 \le \gamma,\delta \le n}
h^{\alpha,\beta}_{\gamma,\delta}X_\gamma X_\delta
= X_\alpha {\hat X}_\beta - {\hat X}_\alpha X_\beta\qquad
(0 \le \alpha, \beta \le n),
\end{align}
where $h^{\alpha,\beta}_{\gamma,\delta}$ is an element of the 
local Markov matrix defined in (\ref{H}) and (\ref{hdef}).
Construction of such 
companion operators ${\hat X}_0, \ldots, {\hat X}_n$ is a sufficient 
task to prove (\ref{mnm})\footnote{
as long as the right hand side is convergent} as is well known \cite{BE}.
See also section \ref{subsec:mp}.
We construct ${\hat X}_i$ similarly to $X_i$ 
as a weighted configuration sum as in (\ref{ngm})
and present a self-contained proof of the hat relation (\ref{ers}).
Our strategy is to upgrade the statement ultimately 
by introducing $q$-deformation, spectral parameters and embedding into a 
3D lattice model until the point where all the nonlocal commutation relations 
can be understood most naturally 
as a consequence of the single and local tetrahedron equation.
The analysis fully demonstrates 
the 3D integrable aspect of the steady state in the $n$-TASEP 
as promised in \cite{KMO}. 

The paper is organized as follows.
In section \ref{sec:tasep} we recall the $n$-TASEP and 
the steady state result in \cite{KMO}.
The operators $X_i$ and ${\hat X}_i$ are defined and 
the main statement, the hat relation, 
is formulated (theorem \ref{th:mho}).
In section \ref{sec:LM} we introduce the deformation parameter $q$ 
and define the 3D $L$ and $M$ operators.
Eigenvectors of the latter and the tetrahedron equation
among $L$ and $M$ (theorem \ref{prop:tetrahedron}) are described.
These contents serve as the local information controlling 
more involved nonlocal objects 
considered in the subsequent sections.
In section \ref{sec:LLT} we consider the 3D lattice model
associated with the 3D $L$ operator.
A family of layer to layer transfer matrices labeled with 
mixed boundary conditions $S(z)^{\bf a}_{\bf j}$ are introduced.
It is shown that each of them form a commuting family 
by invoking the tetrahedron equation 
and the eigenvectors of $M$ (proposition \ref{prop:S comm}).
In section \ref{sec:fb} we extend the method in section \ref{sec:LLT} further
to generate a family of bilinear relations involving the layer to layer transfer matrices 
with various boundary labels (theorem \ref{prop:S rel}).
They form the most general relations in this paper 
(see remark \ref{re:sar}), which ultimately specialize
to the hat relation.
In section \ref{sec:appli} we explain how the $q=0$ case of 
the results in section \ref{sec:fb}  yield the 
difference analogue of the hat relation (proposition \ref{pr:akn}).
The original 
hat relation is an immediate consequence of it as mentioned in the end.
Section \ref{sec:sum} is devoted to a summary and an outlook.

\section{$n$-species TASEP}\label{sec:tasep}
\subsection{Definition of $n$-TASEP}\label{subsec:def}
Consider the periodic 1D chain with $L$ sites  
$\Z_L$.
Each site $i \in \Z_L$ is populated with a local state
$\sigma_i \in \{0,1,\ldots, n\}$.
It is interpreted as the species of the particle occupying it or 
$0$ indicating the absence of particles. 
We assume $1 \le n < L$.
Consider a stochastic model on $\Z_L$ 
such that neighboring pairs of local states
$(\sigma_i, \sigma_{i+1})=(\alpha,\beta)$ 
are interchanged as $\alpha\, \beta \rightarrow \beta\, \alpha$
if $\alpha>\beta$ with the uniform transition rate.
The space of states is given by
\begin{align}\label{W}
(\C^{n+1})^{\otimes L} \simeq 
\bigoplus_{(\sigma_1,\ldots, \sigma_L) \in\{0,\ldots, n\}^L} 
\C|\sigma_1,\ldots, \sigma_L\rangle.
\end{align}
Let ${\mathbb P}(\sigma_1,\ldots, \sigma_L; t)$ be the probability of finding 
the configuration $(\sigma_1,\ldots, \sigma_L)$ at time $t$, and set 
\begin{align*}
|P(t)\rangle
= \sum_{(\sigma_1,\ldots, \sigma_L) \in\{0,\ldots, n\}^L}
{\mathbb P}(\sigma_1,\ldots, \sigma_L; t)|\sigma_1,\ldots, \sigma_L\rangle.
\end{align*}
By $n$-TASEP we mean the stochastic system 
governed by the continuous-time master equation
\begin{align*}
\frac{d}{dt}|P(t)\rangle
= H |P(t)\rangle,
\end{align*}
where the Markov matrix has the form
\begin{align}\label{H}
H = \sum_{i \in \Z_L} h_{i,i+1},\qquad
h |\alpha, \beta\rangle = \begin{cases}
|\beta, \alpha\rangle-|\alpha, \beta\rangle & 
\;(\alpha>\beta),\\
0 & \; (\alpha \le \beta).
\end{cases}
\end{align}
Here $h_{i,i+1}$ is the local Markov matrix that  
acts as $h$ on the $i$-th and the $(i+1)$-th components and 
as the identity elsewhere.
As $H$ preserves the particle content, 
it acts on each {\em sector}
consisting of the configurations with prescribed 
{\em multiplicity} ${\bf m}=(m_0,\ldots, m_n) \in (\Z_{\ge 0})^{n+1}$ of particles:
\begin{align*}
S({\bf m}) = 
\{{\boldsymbol \sigma}=(\sigma_1,\ldots, \sigma_L) \in \{0,\ldots, n\}^L\;|\;
\sum_{j=1}^L\delta_{k,\sigma_j}=m_k,\forall k\}. 
\end{align*}
The space of states (\ref{W}) is decomposed as 
$\bigoplus_{{\bf m}}
\bigoplus_{{\boldsymbol \sigma} \in S({\bf m}) }
\C | {\boldsymbol \sigma}\rangle$,
where the outer sum ranges over $m_i \in \Z_{\ge 0}$ such that 
$m_0+ \cdots + m_n = L$.
A sector $\bigoplus_{{\boldsymbol \sigma} \in S({\bf m}) }
\C | {\boldsymbol \sigma}\rangle$ such that $m_i \ge 1$ for all $0 \le i \le n$
is called {\em basic}.
Non-basic sectors are equivalent to a basic sector for $n'$-TASEP with some 
$n'<n$ by a suitable relabeling of species.
Thus we shall exclusively deal with basic sectors in this paper,
hence $n<L$ as mentioned before.
This condition guarantees  \cite{KMO} 
the convergence of the right hand side of (\ref{mho}).
The spectrum of $H$ is known to exhibit
a remarkable duality \cite{AKSS}.

\subsection{Steady states}\label{subsec:sst}
In each sector $\bigoplus_{{\boldsymbol \sigma} \in S({\bf m}) }
\C | {\boldsymbol \sigma}\rangle$ there is a unique vector  
$|{\bar P}({\bf m})\rangle$
up to a normalization, called the {\em steady state}, 
satisfying $H|{\bar P}({\bf m})\rangle = 0$.
The steady state for $1$-TASEP is trivial 
under the periodic boundary condition in that  
all the monomials have the same coefficient, i.e.
all the configurations are realized with an equal probability.

\begin{example}\label{ex:pbar}
We present (unnormalized) steady states in small sectors of 
$2$-TASEP and $3$-TASEP in the form
\begin{align*}
 |{\bar P}({\bf m})\rangle = |\xi({\bf m})\rangle + C|\xi({\bf m})\rangle
 + \cdots + C^{L-1}|\xi({\bf m})\rangle
 \end{align*}
respecting the symmetry $HC=CH$ under the 
 $\Z_L$ cyclic shift
 $C: |\sigma_1, \sigma_2,\ldots, \sigma_L\rangle \mapsto 
 |\sigma_L, \sigma_1, \ldots, \sigma_{L-1}\rangle$.
 The choice of the vector $|\xi({\bf m})\rangle$ is not unique.
 \begin{align*}
 |\xi(1,1,1)\rangle & =  2 | 012\rangle  + | 102\rangle,\\
 |\xi(2,1,1)\rangle  &= 3 | 0012\rangle  + 2 | 0102\rangle  + | 1002\rangle,\\
 |\xi(1,2,1)\rangle  &=  2 | 0112\rangle  + | 1012\rangle  + | 1102\rangle,\\
 |\xi(1,1,2)\rangle  &= 3 | 1220\rangle  + 2 | 2120\rangle  + | 2210\rangle,\\
 |\xi(1,2,2)\rangle & =3 | 11220\rangle + 2 | 12120\rangle + | 12210\rangle + 
 2 | 21120\rangle + | 21210\rangle + | 22110\rangle,\\
 |\xi(2,1,2)\rangle & =| 00221\rangle + 2 | 02021\rangle + 3 | 02201\rangle + 
 3 | 20021\rangle + 5 | 20201\rangle + 6 | 22001\rangle,\\
 |\xi(2,2,1)\rangle & =3 | 00112\rangle + 2 | 01012\rangle + 2 | 01102\rangle + 
 | 10012\rangle + | 10102\rangle + | 11002\rangle,\\
 |\xi(1,1,1,1)\rangle & =  9 | 0123\rangle  + 3 | 0213\rangle  + 3 | 1023\rangle  + 
 5 | 1203\rangle  + 3 | 2013\rangle  + | 2103\rangle,\\
 |\xi(2,1,1,1)\rangle & =  24 | 00123\rangle  + 6 | 00213\rangle  + 
 12 | 01023\rangle  + 17 | 01203\rangle  + 
 8 | 02013\rangle  + 3 | 02103\rangle\\  
 &+ 4 | 10023\rangle  + 
 7 | 10203\rangle  + 9 | 12003\rangle  + 6 | 20013\rangle  + 
 3 | 20103\rangle  + | 21003\rangle,\\
 |\xi(1,2,1,1)\rangle & =  12 | 01123\rangle  + 5 | 01213\rangle  + 3 | 02113\rangle  + 
 4 | 10123\rangle  + 3 | 10213\rangle  + 4 | 11023\rangle \\ 
 &+ 7 | 11203\rangle  + 5 | 12013\rangle  + 2 | 12103\rangle  + 
 3 | 20113\rangle  + | 21013\rangle  + | 21103\rangle,\\
 |\xi(1,1,2,1)\rangle & =  12 | 01223\rangle  + 5 | 02123\rangle  + 3 | 02213\rangle  + 
 3 | 10223\rangle  + 5 | 12023\rangle  + 7 | 12203\rangle  \\
 &+ 4 | 20123\rangle  + 3 | 20213\rangle  + | 21023\rangle  + 
 2 | 21203\rangle  + 4 | 22013\rangle  + | 22103\rangle,\\
 |\xi(1,1,1,2)\rangle & =  24 | 12330\rangle  + 12 | 13230\rangle  + 
 4 | 13320\rangle  + 6 | 21330\rangle  + 8 | 23130\rangle  + 
 6 | 23310\rangle \\ 
 &+ 17 | 31230\rangle  + 
 7 | 31320\rangle  + 3 | 32130\rangle  + 3 | 32310\rangle  + 
 9 | 33120\rangle  + | 33210\rangle.
\end{align*}
\end{example}
As these coefficients indicate, steady states are nontrivial for $n\ge 2$.
We will demonstrate the 
3D integrability behind them which will ultimately be 
related to the tetrahedron equation.

\subsection{Matrix product formula}\label{subsec:mp}
Consider the steady state
\begin{align}\label{srb}
|{\bar P}({\bf m})\rangle = \sum_{{\boldsymbol \sigma} \in S({\bf m})}
{\mathbb P}({\boldsymbol \sigma}) | {\boldsymbol \sigma} \rangle
\end{align}
and postulate that the steady state probability ${\mathbb P}({\boldsymbol \sigma})$ 
is expressed 
in the matrix product form
\begin{align}\label{mho}
{\mathbb P}(\sigma_1,\ldots, \sigma_L) 
= \mathrm{Tr}(X_{\sigma_1}\cdots X_{\sigma_L})
\end{align}
in terms of some operators $X_0, \ldots, X_n$.
Introduce the notations for the matrix elements of the local Markov matrix (\ref{H}) and 
the associated product of $X_i$'s as
\begin{align}\label{hdef}
h|\alpha, \beta \rangle = \sum_{\gamma,\delta}h^{\gamma,\delta}_{\alpha,\beta}
|\gamma, \delta\rangle,\qquad
(hXX)_{\alpha, \beta} := \sum_{\gamma,\delta}h^{\alpha,\beta}_{\gamma,\delta}
X_\gamma X_\delta.
\end{align}
Then we have
\begin{align*}
H  |{\bar P}({\bf m})\rangle 
&= \sum_{i \in \Z_L}
\sum_{\boldsymbol \sigma \in S({\bf m})} {\mathbb P}(\ldots, \sigma_i, \sigma_{i+1},\ldots)h_{i,i+1}
|\ldots, \sigma_i, \sigma_{i+1},\ldots\rangle\\
&= \sum_{i \in \Z_L}\sum_{\boldsymbol \sigma \in S({\bf m})} \sum_{\sigma'_i, \sigma'_{i+1}}
\mathrm{Tr}(\cdots X_{\sigma_i}X_{\sigma_{i+1}}\cdots )
h^{\sigma'_i, \sigma'_{i+1}}_{\sigma_i, \sigma_{i+1}}
|\ldots, \sigma'_i, \sigma'_{i+1},\ldots\rangle\\
&= \sum_{\boldsymbol \sigma \in S({\bf m})}\sum_{i \in \Z_L}
\mathrm{Tr}(\cdots (hXX)_{\sigma_i, \sigma_{i+1}}\cdots )
|\ldots, \sigma_i, \sigma_{i+1},\ldots\rangle.
\end{align*}
Therefore if there are another set of operators 
${\hat X}_0, \ldots, {\hat X}_n$ obeying the {\em hat relation}
\begin{align}\label{hrel}
(hXX)_{\alpha, \beta} = X_\alpha {\hat X}_\beta - {\hat X}_\alpha X_\beta,
\end{align}
the vector (\ref{srb})  satisfies $H|{\bar P}({\bf m})\rangle = 0$ 
thanks to the cyclicity of the trace (cf. \cite{BE}).
Then (\ref{mho}), if finite, must coincide with the actual steady state probability 
up to an overall normalization due to the uniqueness of the steady state.
Note on the other hand that ${\hat X}_i$ satisfying the hat relation 
with a given $X_i$ is not unique.
For instance 
${\hat X}_i \rightarrow {\hat X}_i+ c X_i$ leaves (\ref{hrel}) unchanged.

\subsection{Main result}\label{subsec:mr}
In our previous work \cite{KMO}, a new matrix product formula (\ref{mho}) of 
the steady state probability of the $n$-TASEP was proved 
which involves the operators $X_0, \ldots, X_n$ in the left diagram of 
\begin{equation}\label{ngm}
\begin{picture}(300,77)(-27,-10)

\put(-50,29){${X}_i=\sum$}
\put(20,52){$. . .$}
\put(-5,27){$.$}\put(-5,24){$.$}\put(-5,21){$.$}
\put(-8,48){\line(1,0){56}}
\put(-8,40){\line(1,0){48}}
\put(-8,32){\line(1,0){40}}
\put(-8,16){\line(1,0){24}}
\put(-8,8){\line(1,0){16}}
\put(-8,0){\line(1,0){8}}

\put(11,9.5){\put(29,25){$.$}\put(27,23){$.$}\put(25,21){$.$}}
\put(-9,-9.5){\put(29,25){$.$}\put(27,23){$.$}\put(25,21){$.$}}
\put(48,48){\vector(0,1){8}}
\put(40,40){\vector(0,1){16}}
\put(32,32){\vector(0,1){24}}
\put(16,16){\vector(0,1){40}}
\put(8,8){\vector(0,1){48}}
\put(0,0){\vector(0,1){56}}

\put(51,46){$\scriptstyle{0}$}
\put(43,38){$\scriptstyle{0}$}
\put(30,25){$\scriptstyle{0}$}

\put(24,18){$\scriptstyle{1}$}
\put(9,2){$\scriptstyle{1}$}
\put(0,-6){$\scriptstyle{1}$}

\put(29,52){\rotatebox{-135}{$\overbrace{\phantom{KKKK}}$}}
\put(54,26){$\scriptstyle{n-i}$}

\put(2,23){\rotatebox{-135}{$\overbrace{\phantom{KKKk}}$}}
\put(26,-2){$\scriptstyle{i}$}

\put(220,0){
\put(-117,29){${\hat X}_i=\sum (\alpha_1+\cdots+ \alpha_n)$}
\put(20,52){$. . .$}
\put(-5,27){$.$}\put(-5,24){$.$}\put(-5,21){$.$}
\put(-8,48){\line(1,0){56}}
\put(-8,40){\line(1,0){48}}
\put(-8,32){\line(1,0){40}}
\put(-8,16){\line(1,0){24}}
\put(-8,8){\line(1,0){16}}
\put(-8,0){\line(1,0){8}}

\put(11,9.5){\put(29,25){$.$}\put(27,23){$.$}\put(25,21){$.$}}
\put(-9,-9.5){\put(29,25){$.$}\put(27,23){$.$}\put(25,21){$.$}}
\put(48,48){\vector(0,1){8}}\put(44,60){$\scriptstyle{\alpha_n}$}
\put(40,40){\vector(0,1){16}}
\put(32,32){\vector(0,1){24}}
\put(16,16){\vector(0,1){40}}
\put(8,8){\vector(0,1){48}}\put(5,60){$\scriptstyle{\alpha_2}$}
\put(0,0){\vector(0,1){56}}\put(-7,60){$\scriptstyle{\alpha_1}$}

\put(51,46){$\scriptstyle{0}$}
\put(43,38){$\scriptstyle{0}$}
\put(30,25){$\scriptstyle{0}$}

\put(24,18){$\scriptstyle{1}$}
\put(9,2){$\scriptstyle{1}$}
\put(0,-6){$\scriptstyle{1}$}

\put(29,52){\rotatebox{-135}{$\overbrace{\phantom{KKKK}}$}}
\put(54,26){$\scriptstyle{n-i}$}

\put(2,23){\rotatebox{-135}{$\overbrace{\phantom{KKKk}}$}}
\put(26,-2){$\scriptstyle{i}$}
}
\end{picture}
\end{equation}
The proof was done by identifying the 
Ferrari-Martin algorithm \cite{FM} with a composition of the combinatorial $R$.
It did not rely on the hat relation, although 
${\hat X}_i$ defined by the right diagram 
was announced to fulfill it.
The main result of this paper is a self-contained proof of the 
hat relation (\ref{hrel}) which reads explicitly as follows:
\begin{theorem}[Hat relation]\label{th:mho}
The operators $X_i$ and ${\hat X}_i$ in (\ref{ngm}) satisfy
\begin{align*}
&\lbrack X_i,{\hat X}_j\rbrack=\lbrack {\hat X}_i,X_j \rbrack    
\quad\;\; \qquad (0 \le i,j \le n),\\
&X_iX_j={\hat X}_iX_j-X_i{\hat X}_j  \qquad (0 \le j < i \le n).
\end{align*}
\end{theorem}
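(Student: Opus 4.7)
The plan is to derive the hat relation (\ref{hrel}) as the $q=0$, suitably specialized-spectral-parameter limit of a family of bilinear identities among layer-to-layer transfer matrices $S(z)^{\bf a}_{\bf j}$ of a 3D $q$-oscillator valued six-vertex model, whose Boltzmann weights are controlled by a single local tetrahedron equation. As flagged in the introduction, both $X_i$ and $\hat X_i$ of (\ref{ngm}) should appear as Taylor coefficients in $z$ of a single operator $S(z)^{\bf a}_{\bf j}$ attached to the triangular boundary pattern of $(n-i)$ zeros and $i$ ones, so (\ref{hrel}) reduces to one quadratic identity in $S(z)$ per pair $(i,j)$.

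First I would build the 3D $L$ and $M$ operators on $V\ot V\ot F$, with $F$ a $q$-oscillator Fock space, and establish the tetrahedron equation together with explicit formulas for the eigenvectors of $M$. Tracing an appropriate number of $L$'s over a vertical stack of Fock spaces, closed off by $M$-eigenvectors on the free boundary, produces the layer-to-layer transfer matrices $S(z)^{\bf a}_{\bf j}$ of Section \ref{sec:LLT} indexed by boundary labels on the two pairs of auxiliary edges. The standard railroad argument, in which an extra $L$ is pushed through the lattice by iterated use of the tetrahedron equation and absorbed into the boundary eigenvectors at the end, then yields the commutativity $[S(z)^{\bf a}_{\bf j},S(w)^{\bf a}_{\bf j}]=0$ of Proposition \ref{prop:S comm}.

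Running the same argument with \emph{mismatched} boundary labels $({\bf a},{\bf j})$ and $({\bf a}',{\bf j}')$, the extra $L$ fails to cancel cleanly and leaves a residual contribution. The outcome is a bilinear relation of the schematic form
\[
S(z)^{\bf a}_{\bf j}\,S(w)^{\bf a'}_{\bf j'}-S(w)^{\bf a'}_{\bf j'}\,S(z)^{\bf a}_{\bf j}=\sum(\text{boundary-shifted }S\text{--}S\text{ products}),
\]
which is the content of Section \ref{sec:fb}. At $q=0$ the Fock traces collapse to the restricted combinatorial sums appearing in (\ref{ngm}): choosing $\bf a$, $\bf j$ to encode the $(n-i,i)$ triangular pattern, the leading coefficient of $S(z)$ in $z$ recovers $X_i$ and the next coefficient recovers $\hat X_i$, up to an additive scalar multiple of $X_i$ which is irrelevant by the remark at the end of subsection \ref{subsec:mp}. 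Extracting the appropriate bidegree from the specialized bilinear identity then produces (\ref{hrel}) with the sign conventions of (\ref{hdef}): the $\alpha>\beta$ case of (\ref{hrel}) is the second line of Theorem \ref{th:mho}, and combining the $\alpha>\beta$ and $\alpha<\beta$ cases yields the first line.

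The main obstacle is organizational rather than algebraic. The tetrahedron equation is purely local, so the real work is to identify the boundary labels $\bf a$, $\bf j$ and the spectral-parameter specialization that force the \emph{global} configuration sum at $q=0$ to collapse to exactly the triangular sum of (\ref{ngm}), with the correct weight $\alpha_1+\cdots+\alpha_n$ decorating $\hat X_i$. Equally delicate is verifying that the residual boundary contribution at $q=0$ reproduces the Markov weights $h^{\alpha,\beta}_{\gamma,\delta}$ of (\ref{hdef}) rather than some permuted combination; this is ultimately what pins down the asymmetric range $j<i$ in the second line of the theorem. Once those reductions are in place, the remainder is a mechanical iteration of a single tetrahedron relation.
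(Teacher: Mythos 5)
Your overall architecture matches the paper's---tetrahedron equation, layer-to-layer transfer matrices $S(z)^{\bf a}_{\bf j}$, commutativity plus mixed-boundary bilinear relations, then specialization to $q=0$---but the step that actually produces the hat relation is described incorrectly, and as stated it would not work. You claim that $X_i$ and ${\hat X}_i$ arise as the leading and next-to-leading Taylor coefficients in $z$ of a single $S(z)^{\bf a}_{\bf j}$ whose boundary label encodes the triangular pattern of $(n-i)$ zeros and $i$ ones. Neither half of this is what happens. The relevant boundary labels are only ${\bf a}={\bf j}=(0,\dots,0)$ and $(1,0,\dots,0)$; at $q=0$ and $m=n$ a single transfer matrix $S(z)^{0\cdots0}_{0\cdots0}$ contains \emph{all} of the $z$-deformed operators $X_0(z),\dots,X_n(z)$ simultaneously, distinguished not by the boundary label but by which tensor monomial $({\bf a}^+)^{\otimes i}\otimes 1^{\otimes(n-i)}$ occupies the diagonal slots; see \eqref{S00} and \eqref{S10}. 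Moreover $X_i$ and ${\hat X}_i$ are not Taylor coefficients at $z=0$ but the evaluation $X_i(1)$ and the derivative $\frac{d}{dz}X_i(z)|_{z=1}$ as in \eqref{ykn}: for $n=2$ one has $X_0(z)=1+z{\bf a}^+$, whose constant coefficient $1$ is not $X_0=1+{\bf a}^+$. The weight $\alpha_1+\cdots+\alpha_n$ decorating ${\hat X}_i$ in \eqref{ngm} is exactly what differentiating $z^{\alpha_1+\cdots+\alpha_n}$ at $z=1$ produces. Without the expansions \eqref{S00}--\eqref{S10} and the $z=1$ evaluation/differentiation, your ``extract the appropriate bidegree'' step has nothing to act on.

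The shape you assign to the mixed-boundary relations is also not the one that holds. Theorem \ref{prop:S rel} is a symmetrization identity, a weighted sum of products $S(y)\cdots S(x)\cdots$ over the free parts of the boundary labels equated with its $x\leftrightarrow y$ image; it is obtained by closing the auxiliary Fock line with the \emph{nontrivial} eigenvectors $(\alpha v_1\ot v_0+\beta v_0\ot v_1)\ot|\chi\rangle$ of $\M$, not by an extra operator ``failing to cancel and leaving a residual.'' Likewise there is no stage at which a residual boundary contribution reproduces $h^{\alpha,\beta}_{\gamma,\delta}$: the Markov weights enter only at the very end, when the two difference relations $[X_i(x),X_j(y)]=[X_i(y),X_j(x)]$ and $xX_i(y)X_j(x)=yX_i(x)X_j(y)$ for $j<i$ (proposition \ref{pr:akn}) are differentiated in $y$ at $x=y=1$; the asymmetric range $j<i$ is pinned down there by the vanishing $S(z)^{10\cdots0}_{00\cdots0}=S(z)^{00\cdots0}_{10\cdots0}=0$, not by any boundary residue. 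In short, you have named the correct waypoints, but the two load-bearing reductions---how $X_i,{\hat X}_i$ sit inside $S(z)$, and what the mixed-boundary relations actually assert---are precisely the parts you defer as ``organizational,'' and as described they do not close the argument.
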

The proof will be achieved in the end of section \ref{sec:appli}
as a consequence of its far-reaching generalization 
by embarking on $q$-deformed counterparts, layer to layer transfer matrices,
their bilinear relations and so forth.

In the rest of the section we explain the definition (\ref{ngm}).
First we consider the $X_i$ in the left diagram.
It represents a configuration sum, i.e. 
the partition function of the {\em $q=0$-oscillator valued five-vertex model}
on the triangular shape region of a square lattice with a prescribed 
condition along the SW-NE boundary.
\begin{equation} \label{5v0}
\begin{picture}(200,58)(-10,-36)
\thicklines

\put(-10,0){\vector(1,0){20}}
\put(0,-10){\vector(0,1){20}}
\put(-17,-3.5){0}\put(12.5,-3.5){0}\put(-2.4,13){0}\put(-2.3,-19){0}
\put(-2,-37){1}

\put(50,0){
\put(-10,0){\color{red} \vector(1,0){20}}
\put(0,-10){\color{red} \vector(0,1){20}}
\put(-17,-3.5){1}\put(12.5,-3.5){1}\put(-2.4,13){1}\put(-2.3,-19){1}
\put(-2,-37){1}}

\put(100,0){
\put(-10,0){\color{red}\line(1,0){10}}\put(0,0){\vector(1,0){10}}
\put(0,-10){\line(0,1){10}}\put(0,0){\color{red}\vector(0,1){10}}
\put(-17,-3.5){1}\put(12.5,-3.5){0}\put(-2.4,13){1}\put(-2.3,-19){0}
\put(-2.5,-37){${\bf a}^+$}}

\put(150,0){
\put(-10,0){\line(1,0){10}}\put(0,0){\color{red}\vector(1,0){10}}
\put(0,-10){\color{red}\line(0,1){10}}\put(0,0){\vector(0,1){10}}
\put(-17,-3.5){0}\put(12.5,-3.5){1}\put(-2.4,13){0}\put(-2.3,-19){1}
\put(-1.5,-37){${\bf a}^-$}}

\put(200,0){
\put(-10,0){\vector(1,0){20}}
\put(0,-10){\color{red}\vector(0,1){20}}
\put(-17,-3.5){0}\put(12.5,-3.5){0}\put(-2.4,13){1}\put(-2.3,-19){1}
\put(-1,-37){${\bf k}$}}
\thinlines
\end{picture}
\end{equation}
Each edge takes $0$ or $1$ and the sum extends over 
all the configurations such that every vertex 
is one of the above five types.
In (\ref{5v0}) we have colored the edges 
assuming $0$ and $1$ in black and red respectively.
This convention will apply in the rest of the paper\footnote{Although, in 
the formula like (\ref{ngm}),  the black edges not on the SW-NE boundary 
should be understood as taking both $0$ and $1$.}.
Given such a configuration, the summand is the 
{\em tensor product} of the local  ``Boltzmann weight"
 $1, {\bf a}^+,{\bf a}^-,{\bf k}$ assigned to each vertex as specified
 in the above\footnote{At 
the boundary corners in (\ref{ngm}) where arrows make $90^\circ$ left turns, 
we assume no change in the edge states and assign the weight 1.  
See examples \ref{ex:mrn2}  and \ref{ex:mrn3}.}.
They are linear operators 
on the Fock space $F= \bigoplus_{m \ge 0}\C |m\rangle$\footnote{The ket vector here 
should not be confused with the TASEP states in
section \ref{subsec:def}-\ref{subsec:mr}.} as
 ($|\!-\!1\rangle=0,\; 1 |m \rangle = |m\rangle$)
\begin{equation}\label{lil}
{\bf a}^+{\left| m\right\rangle}={\left| m+1\right\rangle}, \ \ {\bf a}^-{\left| m\right\rangle}={\left| m-1\right\rangle}, \ \ {\bf k}{\left| m\right\rangle}={\delta}_{m,0}{\left| m\right\rangle}
\end{equation}
obeying the relations
\begin{equation}\label{rna}
{\bf k} \ {\bf a}^{+}=0,\ \ \ \ \  \ {\bf a}^{-} \ {\bf k}=0,\ \ \ \ \ 
{\bf a}^+{\bf a}^-=1-{\bf k},\ \ \ \ \ {\bf a}^-{\bf a}^+=1.
\end{equation}
They are identified with the 
specialization of the $q$-oscillator algebra ${\mathscr A}_q$ in 
(\ref{tgm}) and (\ref{kyk}) to $q=0$.
Thus we write ${\bf a}^+,{\bf a}^-,{\bf k} \in {\mathscr A}_{q=0}$ here.
The $0$-oscillator operators attached to vertices at different positions act on  
independent copies of the Fock space.
Thus $X_i \in ({\mathscr A}_{q=0})^{\otimes n(n-1)/2} \subseteq 
\mathrm{End}(F^{\otimes n(n-1)/2})$.
Accordingly the trace in (\ref{mho}) is taken over $F^{\otimes n(n-1)/2}$.
In each component it is calculated by 
$\mathrm{Tr}_F(X) = \sum_{m \ge 0}\langle m | X | m \rangle$
with $\langle m | m'\rangle = \delta_{m,m'}$.

\begin{remark}\label{re:kwgc}
Our result (\ref{mho}) with (\ref{ngm}) 
corresponds to the {\em integer normalization}
\begin{align*}
{\mathbb P}(\sigma_1, \ldots, \sigma_L) = 1\quad \text{for}\;\;
\sigma_1 \ge \cdots \ge \sigma_L.
\end{align*}
In this normalization ${\mathbb P}({\boldsymbol \sigma}) \in 
\Z_{\ge 1}$ holds for all the state ${\boldsymbol \sigma} \in S({\bf m})$.
These facts can be shown via the equivalent formula \cite[eq.(4.3)]{KMO}
and the NY-rule for the combinatorial $R$ explained in 
\cite[sec.2.4]{KMO}. 
Example \ref{ex:pbar} has been given in this normalization.
\end{remark}

The $X_i$ has the form of a {\em corner transfer matrix} \cite{Bax} 
of the $0$-oscillator valued five-vertex model, although 
it acts along the perpendicular direction to the layer 
as opposed to the usual 2D setting.
Equivalently one may view it  
as a layer to layer transfer matrix of the 3D lattice model
where $F$ is assigned with the edges perpendicular 
to the plane on which the five-vertex model is defined.
The steady state probability (\ref{mho}) is then interpreted as a 
{\em partition function} of the 3D system of prism shape
which is periodic along the third direction.

As for the ${\hat X}_i$ in the right diagram of (\ref{ngm}), it means a similar 
configuration sum but now weighted by the coefficient
$\alpha_1+\cdots + \alpha_n$.

\begin{example}\label{ex:mrn2}
For $n=2$ the operator $X_i$ is given by
\begin{equation*}
\begin{picture}(600,40)(-25,10)
\thicklines
\setlength\unitlength{0.26mm}

\put(-25,37){$X_0= $}
\put(0,0){
\put(30,40){\vector(0,1){15}}
\put(50,40){\vector(0,1){15}}
\put(15,40){\line(1,0){15}}
\put(30,40){\line(1,0){20}}
\put(30,20){\line(0,1){20}}
\put(15,20){\line(1,0){15}}
}

\put(65,0){
\put(-5,37){$+$}
\put(30,40){\color{red}\vector(0,1){15}}
\put(50,40){\vector(0,1){15}}
\put(15,40){\color{red}\line(1,0){15}}
\put(30,40){\line(1,0){20}}
\put(30,20){\line(0,1){20}}
\put(15,20){\line(1,0){15}}
\put(60,37){$= 1 + {\bf a}^+,$}
}

\put(225,0){
\put(-25,37){$X_1= $}
\put(0,0){
\put(30,40){\color{red}\vector(0,1){15}}
\put(50,40){\vector(0,1){15}}
\put(15,40){\line(1,0){15}}
\put(30,40){\line(1,0){20}}
\put(30,20){\color{red}\line(0,1){20}}
\put(15,20){\color{red}\line(1,0){15}}
\put(60,37){$= {\bf k},$}
}}

\put(350,0){
\put(-25,37){$X_2= $}
\put(0,0){
\put(30,40){\vector(0,1){15}}
\put(50,40){\color{red}\vector(0,1){15}}
\put(15,40){\line(1,0){15}}
\put(30,40){\color{red}\line(1,0){20}}
\put(30,20){\color{red}\line(0,1){20}}
\put(15,20){\color{red}\line(1,0){15}}
}

\put(65,0){
\put(-5,37){$+$}
\put(30,40){\color{red}\vector(0,1){15}}
\put(50,40){\color{red}\vector(0,1){15}}
\put(15,40){\color{red}\line(1,0){15}}
\put(30,40){\color{red}\line(1,0){20}}
\put(30,20){\color{red}\line(0,1){20}}
\put(15,20){\color{red}\line(1,0){15}}
\put(60,37){$= {\bf a}^-+1.$}
}
}

\thinlines
\end{picture}
\end{equation*}
Accordingly we have 
${\hat X}_0 = {\bf a}^+,  {\hat X}_1 = {\bf k},  {\hat X}_2 = {\bf a}^-+2$.
Thus for instance,
\begin{align*}
{\mathbb P}(20201) = \mathrm{Tr}(X_2X_0X_2X_0X_1)
=\mathrm{Tr}\bigl((1+{\bf a}^+)(1+{\bf a}^-)(1+{\bf a}^+)(1+{\bf a}^-){\bf k}\bigl)=5
\end{align*}
reproducing the second last term in $|\xi(2,1,2)\rangle$ in example \ref{ex:pbar}.
\end{example}

\begin{example}\label{ex:mrn3}
For $n=3$ the operator $X_i$ is given by
\begin{equation*}
\begin{picture}(600,60)(-59,-17)
\thicklines
\setlength\unitlength{0.26mm}

\put(-55,24){$X_0= $}
\put(0,40){\line(1,0){50}} \put(50,40){\vector(0,1){13}}
\put(0,20){\line(1,0){30}} \put(30,20){\vector(0,1){33}}
\put(0,0){\line(1,0){10}}
\put(10,0){\vector(0,1){53}}
\put(-18,-20){$=1\otimes 1 \otimes 1$}

\put(90,0){\put(-20,24){$+$}
\put(10,40){\line(1,0){40}} \put(50,40){\vector(0,1){13}}
\put(0,20){\line(1,0){30}} \put(30,20){\vector(0,1){33}}
\put(0,0){\line(1,0){10}}
\put(10,0){\line(0,1){43}}
\put(0,40){\color{red}\line(1,0){10}}
\put(10,40){\color{red}\vector(0,1){13}}
\put(-22,-20){$+\;\; {\bf a}^+\!\otimes 1 \otimes 1$}}

\put(185,0){\put(-25,24){$+$}
\put(0,40){\line(1,0){50}} \put(50,40){\vector(0,1){13}}
\put(0,20){\color{red}{\line(1,0){10}}} \put(10,20){\line(1,0){20}}
\put(30,20){\vector(0,1){33}}
\put(0,0){\line(1,0){10}}
\put(10,0){\line(0,1){20}}\put(10,20){\color{red}\vector(0,1){33}}
\put(-22,-20){$+\;\; {\bf k}\otimes {\bf a}^+\! \otimes 1$}}

\put(290,0){\put(-30,24){$+$}
\put(0,40){\line(1,0){10}} \put(10,40){\color{red}\line(1,0){20}}\put(30,40){\line(1,0){20}} 
\put(50,40){\vector(0,1){13}}\put(10,40){\vector(0,1){13}}
\put(10,20){\color{red}\line(0,1){20}}
\put(0,20){\color{red}\line(1,0){10}}
\put(10,20){\line(1,0){20}} \put(30,40){\color{red}\vector(0,1){13}}
\put(30,20){\line(0,1){20}}
\put(0,0){\line(1,0){10}}
\put(10,0){\line(0,1){20}}
\put(-29,-20){$+\;\; {\bf a}^-\!\otimes {\bf a}^+\! \otimes {\bf a}^+$}}

\put(400,0){\put(-25,24){$+$}
\put(0,20){\color{red}\line(1,0){10}}\put(10,20){\color{red}\vector(0,1){33}}
\put(0,40){\color{red}\line(1,0){30}}\put(30,40){\color{red}\vector(0,1){13}}
\put(30,40){\line(1,0){20}} \put(50,40){\vector(0,1){13}}
\put(10,20){\line(1,0){20}} \put(30,20){\line(0,1){20}}
\put(0,0){\line(1,0){10}}
\put(10,0){\line(0,1){20}}
\put(-25,-20){$+\;\; 1\otimes {\bf a}^+ \otimes {\bf a}^+$,}}

\end{picture}
\end{equation*}
\begin{equation*}
\begin{picture}(600,60)(-59,-20)
\setlength\unitlength{0.26mm}
\thicklines

\put(-55,24){$X_1= $}
\put(0,40){\line(1,0){50}} \put(50,40){\vector(0,1){13}}
\put(0,20){\line(1,0){30}} \put(30,20){\vector(0,1){33}}
\put(0,0){\color{red}\line(1,0){10}}
\put(10,0){\color{red}\vector(0,1){53}}
\put(-18,-20){$={\bf k} \otimes {\bf k} \otimes 1$}

\put(95,0){\put(-20,24){$+$}
\put(10,40){\color{red}\line(1,0){20}} \put(30,40){\line(1,0){20}}
\put(50,40){\vector(0,1){13}}\put(30,40){\color{red}\vector(0,1){13}}
\put(0,20){\line(1,0){30}} \put(30,20){\line(0,1){20}}
\put(0,0){\color{red}\line(1,0){10}}
\put(10,0){\color{red}\line(0,1){43}}
\put(0,40){\line(1,0){10}}
\put(10,40){\vector(0,1){13}}
\put(-22,-20){$+\;\; {\bf a}^-\!\otimes {\bf k} \otimes {\bf a}^+$}}

\put(200,0){\put(-20,24){$+$}
\put(50,40){\vector(0,1){13}}\put(30,40){\color{red}\vector(0,1){13}}
\put(0,40){\color{red}\line(1,0){30}}\put(30,40){\line(1,0){20}}
\put(30,20){\line(0,1){20}}
\put(0,20){\line(1,0){30}}
\put(10,0){\color{red}\vector(0,1){53}}
\put(0,0){\color{red}\line(1,0){10}}
\put(-22,-20){$+\;\; 1\otimes {\bf k} \otimes {\bf a}^+$,}}

\end{picture}
\end{equation*}
\begin{equation*}
\begin{picture}(600,60)(-59,-20)
\setlength\unitlength{0.26mm}
\thicklines

\put(-55,24){$X_2= $}
\put(0,40){\line(1,0){50}}\put(50,40){\vector(0,1){13}}
\put(0,20){\line(1,0){10}}\put(10,20){\color{red}\line(1,0){20}}
\put(30,20){\color{red}\vector(0,1){33}}\put(10,20){\vector(0,1){33}}
\put(10,0){\color{red}\line(0,1){20}}
\put(0,0){\color{red}\line(1,0){10}}
\put(-22,-20){$\;=\;1\otimes {\bf a}^-\! \otimes {\bf k}$}

\put(105,0){\put(-28,24){$+$}
\put(0,40){\color{red}\line(1,0){10}}\put(10,40){\line(1,0){40}}
\put(50,40){\vector(0,1){13}}
\put(0,20){\line(1,0){10}}\put(10,20){\color{red}\line(1,0){20}}
\put(30,20){\color{red}\vector(0,1){33}}\put(10,40){\color{red}\vector(0,1){13}}
\put(10,20){\line(0,1){20}}
\put(10,0){\color{red}\line(0,1){20}}
\put(0,0){\color{red}\line(1,0){10}}
\put(-32,-20){$\;+\;\;{\bf a}^+\!\otimes{\bf a}^-\! \otimes {\bf k}$}
}

\put(205,0){\put(-25,24){$+$}
\put(0,40){\line(1,0){50}} \put(50,40){\vector(0,1){13}}
\put(0,20){\color{red}\line(1,0){30}} \put(30,20){\color{red}\vector(0,1){33}}
\put(0,0){\color{red}\line(1,0){10}}
\put(10,0){\color{red}\vector(0,1){53}}
\put(-21,-20){$+\;\;{\bf k}\otimes 1 \otimes {\bf k}$,}}

\end{picture}
\end{equation*}
\begin{equation*}
\begin{picture}(600,60)(-59,-20)
\setlength\unitlength{0.26mm}
\thicklines

\put(-55,24){$X_3= $}
\put(0,20){\line(1,0){10}}\put(10,20){\vector(0,1){33}}
\put(0,40){\line(1,0){30}}\put(30,40){\vector(0,1){13}}

\put(30,40){\color{red}\line(1,0){20}} \put(50,40){\color{red}\vector(0,1){13}}
\put(10,20){\color{red}\line(1,0){20}} \put(30,20){\color{red}\line(0,1){20}}
\put(0,0){\color{red}\line(1,0){10}}
\put(10,0){\color{red}\line(0,1){20}}
\put(-19,-20){$=1\otimes {\bf a}^-\! \otimes {\bf a}^-$}

\put(105,0){\put(-30,24){$+$}
\put(0,40){\color{red}\line(1,0){10}} \put(10,40){\line(1,0){20}}
\put(30,40){\color{red}\line(1,0){20}} 
\put(50,40){\color{red}\vector(0,1){13}}\put(10,40){\color{red}\vector(0,1){13}}
\put(10,20){\line(0,1){20}}
\put(0,20){\line(1,0){10}}
\put(10,20){\color{red}\line(1,0){20}} \put(30,40){\vector(0,1){13}}
\put(30,20){\color{red}\line(0,1){20}}
\put(0,0){\color{red}\line(1,0){10}}
\put(10,0){\color{red}\line(0,1){20}}
\put(-29,-20){$+\;\; {\bf a}^+\!\otimes {\bf a}^-\! \otimes {\bf a}^-$}}

\put(218,0){
\put(-30,24){$+$}
\put(0,40){\line(1,0){30}} \put(50,40){\color{red}\vector(0,1){13}}
\put(0,20){\color{red}\line(1,0){30}} \put(30,40){\vector(0,1){13}}
\put(0,0){\color{red}\line(1,0){10}}\put(30,20){\color{red}\line(0,1){20}}
\put(10,0){\color{red}\vector(0,1){53}}\put(30,40){\color{red}\line(1,0){20}}
\put(-28,-20){$+\;{\bf k}\otimes 1 \otimes {\bf a}^-$}}

\put(307,0){\put(-22,24){$+$}
\put(10,40){\color{red}\line(1,0){40}} \put(50,40){\color{red}\vector(0,1){13}}
\put(0,20){\color{red}\line(1,0){30}} \put(30,20){\color{red}\vector(0,1){33}}
\put(0,0){\color{red}\line(1,0){10}}
\put(10,0){\color{red}\line(0,1){40}}
\put(0,40){\line(1,0){10}}
\put(10,40){\vector(0,1){13}}
\put(-22,-20){$+\;\; {\bf a}^-\!\otimes 1 \otimes 1$}}

\put(402,0){\put(-25,24){$+$}
\put(0,40){\color{red}\line(1,0){50}} \put(50,40){\color{red}\vector(0,1){13}}
\put(0,20){\color{red}\line(1,0){30}} \put(30,20){\color{red}\vector(0,1){33}}
\put(0,0){\color{red}\line(1,0){10}}
\put(10,0){\color{red}\vector(0,1){53}}
\put(-21,-20){$+\;1\otimes 1\otimes 1$.}}

\end{picture}
\end{equation*}
Here and in what follows, 
the components of the tensor product will always be ordered 
so that they correspond, from left to right, 
to the vertices (if exist) at 
$(1,1)$, $(2,1)$, $(1,2)$, $(3,1)$, $(2,2)$, $(1,3), \ldots$, 
where $(i,j)$ is the intersection of the $i$-th horizontal line from the top
and the $j$-th vertical line from the left.
Accordingly we have
\begin{align*}
{\hat X}_0 &= {\bf a}^+ \ot 1 \ot 1 + {\bf k} \ot {\bf a}^+ \ot 1 
+ {\bf a}^-\ot {\bf a}^+ \ot {\bf a}^+ + 2(1\ot {\bf a}^+ \ot {\bf a}^+),\\
{\hat X}_1 &= {\bf k} \ot {\bf k} \ot 1 + {\bf a}^- \ot {\bf k} \ot {\bf a}^+ 
+ 2(1\ot {\bf k} \ot {\bf a}^+),\\
{\hat X}_2 &= 1 \ot {\bf a}^- \ot {\bf k} + 2{\bf a}^+ \ot {\bf a}^- \ot {\bf k}
+ 2 {\bf k} \ot 1 \ot {\bf k},\\
{\hat X}_3 &= 1 \ot {\bf a}^- \ot {\bf a}^- + 
2 {\bf a}^+ \ot {\bf a}^- \ot {\bf a}^- + 2 {\bf k} \ot 1 \ot {\bf a}^- +
2{\bf a}^- \ot 1 \ot 1 + 3(1 \ot 1 \ot 1).
\end{align*}
\end{example}

\section{3D $L,M$ operators and the tetrahedron equation}\label{sec:LM}
In this section, we define $L$ and $M$ operators 
involving a generic parameter $q$ and 
describe their properties used in later sections. 

\subsection{$q$-oscillator algebra and the Fock space} 
Let $q$ be a generic complex parameter unless it is set to be $0$
in section \ref{sec:appli}.
Let ${\mathscr A}_q$ be the $q$-oscillator algebra generated 
by ${\bf a}^{+},{\bf a}^-,{\bf k}$ with relations
\begin{equation}\label{tgm}
{\bf k} \, {\bf a}^{\pm}=-q^{\pm1}{\bf a}^{\pm}\,{\bf k},
\ \ \ \ \ {\bf a}^+\,{\bf a}^-=1-{\bf k}^2,\ \ \ \ \ {\bf a}^-\,{\bf a}^+=1-q^2{\bf k}^2.
\end{equation}
We also consider ${\tilde{\mathscr A}}_q={\mathscr A}_{-q}$ 
generated by ${\bf a}^{+},{\bf a}^-,{\tilde{\bf k}}$ with relations
\begin{equation*}
{\tilde {\bf k}} \, {\bf a}^{\pm}=q^{\pm1}{\bf a}^{\pm}\,{\tilde {\bf k}},\ \ \ \ \ 
{\bf a}^+\,{\bf a}^-=1-{\tilde {\bf k}}^2,\ \ \ \ \ 
{\bf a}^-\,{\bf a}^+=1-q^2{\tilde {\bf k}}^2.
\end{equation*}
${\mathscr A}_q$ and ${\tilde{\mathscr A}}_q$ act on the Fock space 
$F={\bigoplus}_{m{\ge}0}{\mathbb C}{\left| m\right\rangle}$ 
as\footnote{We warn that the same notation 
${\bf a}^\pm, {\bf k}$ and $F$ will be used either for $q=0$ or not.
} 
\begin{equation}\label{kyk}
{\bf a}^+{\left| m\right\rangle}={\left| m+1\right\rangle}, \ \ {\bf a}^-{\left| m\right\rangle}=(1-q^{2m}){\left| m-1\right\rangle}, \ \ {\bf k}{\left| m\right\rangle}=(-q)^m{\left| m\right\rangle}, \ \ {\tilde{\bf k}}{\left| m\right\rangle}=q^m{\left| m\right\rangle}.
\end{equation}
We define the dual Fock space 
$F^*={\bigoplus}_{m{\ge}0}{\mathbb C}{\left\langle m\right|}$ 
on which ${\mathscr A}_q$ and ${\tilde{\mathscr A}}_q$ act from right as
\begin{equation*}
{\left\langle m\right|}{\bf a}^+=(1-q^{2m}){\left\langle m-1\right|}, \ \ \ \ \ {\left\langle m\right|}{\bf a}^-={\left\langle m+1\right|}, \ \ \ \ \ {\left\langle m\right|}{\bf k}=(-q)^m{\left\langle m\right|}, \ \ \ \ \ {\left\langle m\right|}{\tilde {\bf k}}=q^m{\left\langle m\right|}.
\end{equation*}
The pairing $F^*\ot F\longrightarrow\C$ is determined as
$\left\langle m | m^{\prime} \right\rangle=(q^2)_{m}{\delta}_{m,m^{\prime}}$ 
with $(q)_m = \prod_{1 \le j \le m}(1-q^j)$ 
so as to satisfy $(\left\langle m\right| X){\left| m^{\prime} \right\rangle}
=\left\langle m\right| (X{\left| m^{\prime} \right\rangle})$ 
for any $X\in{\mathscr A}_q,{\tilde{\mathscr A}}_q$.

\vspace{0.1cm}
We finally prepare the two-dimensional vector space $V$ and its dual $V^*$ by
\begin{align}\label{V}
V = \C v_0 \oplus \C v_1,\quad
V^* = \C v^*_0 \oplus \C v^*_1,\quad
\langle v^*_i,v_j\rangle=\delta_{ij}.
\end{align}

\subsection{3D $L,M$ operators with spectral parameter} \label{subsec:L and M}
We introduce 3D $L,M$ operators \cite{BS} with spectral parameter $z$. 
They are linear operators on $V\ot V\ot F$. 
For $i,j\in\{0,1\}$ and $|\xi\rangle\in F$, define $\L(z)$ by
\begin{equation}\label{hrk}
\L(z)(v_i\ot v_j\ot{\left| \xi \right\rangle})=\sum_{a,b=0,1}v_a\ot v_b{\ot}\L(z)^{a,b}_{i,j}{\left| \xi \right\rangle},
\end{equation}
where $\L_{i,j}^{a,b}(z)$ is an operator on $F$ such that
\begin{equation}\label{air}
\L(z)^{0,0}_{0,0}=\L(z)^{1,1}_{1,1}=1, \ \ \L(z)^{0,1}_{1,0}=z{\bf a}^+, \ \ \L(z)^{1,0}_{0,1}=z^{-1}{\bf a}^-, \ \ \L(z)^{0,1}_{0,1}={\bf k}, \ \ \L(z)^{1,0}_{1,0}=q{\bf k}.
\end{equation}
The other $\L_{i,j}^{a,b}(z)$'s are set to be $0$. 
One can let $\L(z)$ act from right
on $V^*\ot V^*\ot F^*$ as 
\begin{equation*}
(v^*_a\ot v^*_b\ot \langle \xi|\, )\L(z)=\sum_{i,j=0,1}v^*_i\ot v^*_j{\ot}{\langle \xi|}\L(z)^{a,b}_{i,j}.
\end{equation*}
$\M(z)$ is defined similarly with
\begin{equation*}
\M(z)^{0,0}_{0,0}=\M(z)^{1,1}_{1,1}=1, \ \ \M(z)^{0,1}_{1,0}=z{\bf a}^+, \ \ \M(z)^{1,0}_{0,1}=z^{-1}{\bf a}^-, \ \ \M(z)^{0,1}_{0,1}={\tilde {\bf k}}, \ \ \M(z)^{1,0}_{1,0}=-q{\tilde {\bf k}}.
\end{equation*}
Remark that the $L$ operator in \cite[eq.(2.9)]{KMO} 
corresponds to the $z=1$ case of the present 
$L$ operator equipped with the spectral parameter $z$. 
Graphically they are expressed as follows:
\begin{equation}\label{6v}
\begin{picture}(200,115)(-10,-90)
\thicklines

\put(-60,0){
\put(-15,0){\vector(1,0){30}}
\put(0,-15){\vector(0,1){30}}
\put(-19,-3.5){$i$}\put(17.5,-3){$a$}\put(-2.4,18){$b$}\put(-2.3,-24){$j$}
}

\put(-70,-44){$\L(z)^{a,b}_{i,j}$}

\put(-10,0){\vector(1,0){20}}
\put(0,-10){\vector(0,1){20}}
\put(-17,-3.5){0}\put(12.5,-3.5){0}\put(-2.4,13){0}\put(-2.3,-19){0}
\put(-2,-44){1}

\put(50,0){
\put(-10,0){\color{red}\vector(1,0){20}}
\put(0,-10){\color{red}\vector(0,1){20}}
\put(-17,-3.5){1}\put(12.5,-3.5){1}\put(-2.4,13){1}\put(-2.3,-19){1}
\put(-2,-44){1}}

\put(100,0){
\put(-10,0){\color{red}\line(1,0){10}}\put(0,0){\vector(1,0){10}}
\put(0,-10){\line(0,1){10}}\put(0,0){\color{red}\vector(0,1){10}}
\put(-17,-3.5){1}\put(12.5,-3.5){0}\put(-2.4,13){1}\put(-2.3,-19){0}
\put(-5.5,-44){$z{\bf a}^+$}}

\put(150,0){
\put(-10,0){\color{black}\line(1,0){10}}\put(0,0){\color{red}\vector(1,0){10}}
\put(0,-10){\color{red}\line(0,1){10}}\put(0,0){\color{black}\vector(0,1){10}}
\put(-17,-3.5){0}\put(12.5,-3.5){1}\put(-2.4,13){0}\put(-2.3,-19){1}
\put(-10.5,-44){$z^{-1}{\bf a}^-$}}

\put(200,0){
\put(-10,0){\vector(1,0){20}}
\put(0,-10){\color{red}\vector(0,1){20}}
\put(-17,-3.5){0}\put(12.5,-3.5){0}\put(-2.4,13){1}\put(-2.3,-19){1}
\put(-2.5,-44){${\bf k}$}}

\put(250,0){
\put(-10,0){\color{red}\vector(1,0){20}}
\put(0,-10){\vector(0,1){20}}
\put(-17,-3.5){1}\put(12.5,-3.5){1}\put(-2.4,13){0}\put(-2.3,-19){0}
\put(-3.5,-44){$q{\bf k}$}}

\put(-70,-70){$\M(z)^{a,b}_{i,j}$}

\put(-2,-70){$1$}

\put(48,-70){$1$}

\put(94.5,-70){$z{\bf a}^+$}

\put(140,-70){$z^{-1}{\bf a}^-$}

\put(198,-70){${\tilde {\bf k}}$}

\put(242,-70){$-q{\tilde {\bf k}}$}
\thinlines
 \end{picture}
\end{equation}
Note that $z$ is not exhibited in the diagrams for simplicity. 
In view of the property
\begin{align}\label{ice}
\L(z)^{a,b}_{i,j} = \M(z)^{a,b}_{i,j} = 0 \;\;
\text{unless}\;\; a+b=i+j,
\end{align}
$\L(z)$ and $\M(z)$ can be considered to define  
$q$-oscillator valued six-vertex models on the 2D lattice. 
Alternatively, we can regard $\L(z)$ and $\M(z)$ 
as vertices on the 3D square lattice as
\begin{equation}\label{RLpic}
\begin{picture}(220,60)(-310,-30)
\thicklines
\put(-280,5){
\put(-75,-7){$\L(z)^{a,b}_{i,j}\;=$}

\rotatebox{20}{
{\linethickness{0.2mm}
\put(13,-4){\color{blue}\vector(-1,0){40}}}}

\put(-5,-18){\vector(0,1){32}}
\put(-20,0){\vector(3,-1){33}}

\put(-26,0){$\scriptstyle{i}$}
\put(-6,17){$\scriptstyle{b}$}
\put(-6,-26){$\scriptstyle{j}$}
\put(15,-14){$\scriptstyle{a}$}
}

\put(-100,5){
\put(-75,-7){$\M(z)^{a,b}_{i,j}\;=$}

\rotatebox{20}{
{\linethickness{0.2mm}
\put(13,-4){\color{green}\vector(-1,0){40}}}}

\put(-5,-18){\vector(0,1){32}}
\put(-20,0){\vector(3,-1){33}}

\put(-26,0){$\scriptstyle{i}$}
\put(-6,17){$\scriptstyle{b}$}
\put(-6,-26){$\scriptstyle{j}$}
\put(15,-14){$\scriptstyle{a}$}
}
\end{picture}
\thinlines
\end{equation}
Here, along the blue or green line runs the Fock space $F$. We use 
the two colors to distinguish $\L(z)$ from $\M(z)$.

\subsection{Right and left eigenvectors of the $M$ operator}
Let us provide some right and left eigenvectors of $\M(z)$ for later use. 
\begin{proposition} \label{prop:right eigen}
Set $\left| {\chi}(z) \right\rangle={\sum_{m \ge 0}\frac{z^m}{(q)_{m}}{\left| m \right\rangle}}$. Then the following vectors are right eigenvectors of $\M(z)$ with eigenvalue 1
for any ${\left\langle \xi \right|}{\in}F^*$ and ${\alpha},{\beta}\in{\mathbb C}$.
\begin{equation*}
v_0\ot v_0{\ot}{\left| \xi \right\rangle}, \ \ v_1\ot v_1{\ot}{\left| \xi \right\rangle}, 
\ \ ({\alpha}v_1\ot v_0+{\beta}v_0\ot v_1){\ot}
|\chi ({\textstyle\frac{\alpha z}{\beta}}) \rangle.
\end{equation*}
\end{proposition}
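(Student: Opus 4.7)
The strategy is to exploit the conservation law (\ref{ice}), which says $\M(z)^{a,b}_{i,j}=0$ unless $a+b=i+j$, so that $\M(z)$ acts block-diagonally on $V\otimes V\otimes F$ with respect to the three $V\otimes V$ subspaces $\mathbb{C}(v_0\otimes v_0)$, $\mathbb{C}(v_1\otimes v_1)$, and $\mathbb{C}(v_0\otimes v_1)\oplus\mathbb{C}(v_1\otimes v_0)$. In the first two blocks the only relevant matrix elements are $\M(z)^{0,0}_{0,0}=\M(z)^{1,1}_{1,1}=1$, so $v_0\otimes v_0\otimes|\xi\rangle$ and $v_1\otimes v_1\otimes|\xi\rangle$ are manifestly eigenvectors with eigenvalue $1$ for every $|\xi\rangle\in F$.

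The content lies in the two-dimensional mixed block. Using the definition of $\M(z)$, the eigenvalue equation $\M(z)\bigl((\alpha v_1\otimes v_0+\beta v_0\otimes v_1)\otimes|\psi\rangle\bigr)=(\alpha v_1\otimes v_0+\beta v_0\otimes v_1)\otimes|\psi\rangle$ splits into the two scalar identities on $F$
\begin{align*}
\bigl(-q\alpha\tilde{\bf k}+\beta z^{-1}{\bf a}^-\bigr)|\psi\rangle &= \alpha|\psi\rangle,\\
\bigl(\alpha z{\bf a}^++\beta\tilde{\bf k}\bigr)|\psi\rangle &= \beta|\psi\rangle.
\end{align*}
The plan is to verify these with $|\psi\rangle=|\chi(w)\rangle$, $w=\alpha z/\beta$, by first deriving the coherent-state identities
\begin{align*}
{\bf a}^+|\chi(w)\rangle = w^{-1}(1-\tilde{\bf k})|\chi(w)\rangle,\qquad
{\bf a}^-|\chi(w)\rangle = w(1+q\tilde{\bf k})|\chi(w)\rangle,
\end{align*}
which follow immediately from the action (\ref{kyk}) together with the identity $1-q^{2m}=(1-q^m)(1+q^m)$ and a reindexing of the sum defining $|\chi(w)\rangle$.

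Substituting these into the two equations above reduces each one algebraically to the requirement $\beta w=\alpha z$, i.e.\ $w=\alpha z/\beta$, so both are satisfied identically. Everything is a finite linear manipulation inside $F$ once the two intertwining identities are in hand; the only potentially subtle point is the bookkeeping at $m=0$ in ${\bf a}^+|\chi(w)\rangle$, which is automatic because the extra term carries the factor $1-q^0=0$. No obstacle beyond this routine $q$-series rearrangement is anticipated.
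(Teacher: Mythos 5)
Your proposal is correct and follows essentially the same route as the paper: the first two vectors are immediate from the weight conservation (\ref{ice}), and the third is a direct verification of the two component identities, which is exactly the content of the paper's equation (\ref{sin1}). The paper leaves that computation to the reader, whereas you organize it cleanly via the identities ${\bf a}^{\pm}|\chi(w)\rangle=w^{\mp 1}(1\mp(\mp q)^{?}\tilde{\bf k})|\chi(w)\rangle$ (in your precise form ${\bf a}^+|\chi(w)\rangle=w^{-1}(1-\tilde{\bf k})|\chi(w)\rangle$ and ${\bf a}^-|\chi(w)\rangle=w(1+q\tilde{\bf k})|\chi(w)\rangle$), both of which check out against (\ref{kyk}).
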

\begin{proof}
The first two are obvious. The last one is verified by directly checking 
\begin{align}\label{sin1}
\sum_{i+j=1}\alpha^i\beta^j\M(z)_{i,j}^{k,l}
|\chi({\textstyle\frac{\alpha z}{\beta}})\rangle
=\alpha^k\beta^l|\chi({\textstyle\frac{\alpha z}{\beta}})\rangle.
\end{align}
\end{proof}
Similarly, we have
\begin{proposition} \label{prop:left eigen}
Set $\left\langle {\chi}(z) \right|=\sum_{m \ge 0}\frac{z^m}{(q)_m}\langle m|$. 
Then the following vectors are left eigenvectors of $\M(z)$ with eigenvalue 1
for any ${\left\langle \xi \right|}{\in}F^*$ and ${\alpha},{\beta}\in{\mathbb C}$.
\begin{equation*}
v^*_0\ot v^*_0{\ot}{\left\langle \xi \right|}, 
\ \ v^*_1\ot v^*_1{\ot}{\left\langle \xi \right|}, 
\ \ (\alpha v^*_1\ot v^*_0+\beta v^*_0\ot v^*_1){\ot}\langle 
\chi({\textstyle\frac{\alpha}{\beta z}})|.
\end{equation*}
\end{proposition}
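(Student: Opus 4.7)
The plan is to follow the strategy used for Proposition \ref{prop:right eigen}, but with right action on $F^*$ in place of left action on $F$. The first two vectors, $v^*_0 \ot v^*_0 \ot \langle\xi|$ and $v^*_1 \ot v^*_1 \ot \langle\xi|$, are handled immediately by the ice rule (\ref{ice}): the only nonzero $\M(z)^{a,b}_{i,j}$ with input labels $(a,b) = (0,0)$ (resp.\ $(1,1)$) is the diagonal entry $\M(z)^{0,0}_{0,0}=1$ (resp.\ $\M(z)^{1,1}_{1,1}=1$), so the right action reproduces the input, with any $\langle\xi| \in F^*$ undisturbed.

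For the third vector, the plan is to prove the dual analogue of the identity (\ref{sin1}). Setting $w = \alpha/(\beta z)$, it suffices to check that for each $(i,j)$ with $i+j=1$,
\begin{equation*}
\langle\chi(w)|\bigl(\alpha\,\M(z)^{1,0}_{i,j} + \beta\,\M(z)^{0,1}_{i,j}\bigr)
\;=\; \bigl(\alpha\,\delta_{(i,j),(1,0)} + \beta\,\delta_{(i,j),(0,1)}\bigr)\,\langle\chi(w)|.
\end{equation*}
For $(i,j)=(1,0)$ the only contributing entries are $\M(z)^{0,1}_{1,0}=z{\bf a}^+$ and $\M(z)^{1,0}_{1,0}=-q{\tilde {\bf k}}$, reducing the claim to
$\beta z\,\langle\chi(w)|{\bf a}^+ = \alpha\,\langle\chi(w)|(1+q{\tilde {\bf k}})$;
for $(i,j)=(0,1)$ the entries are $\M(z)^{1,0}_{0,1}=z^{-1}{\bf a}^-$ and $\M(z)^{0,1}_{0,1}={\tilde {\bf k}}$, reducing the claim to $\alpha z^{-1}\,\langle\chi(w)|{\bf a}^-=\beta\,\langle\chi(w)|(1-{\tilde {\bf k}})$. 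Expanding both sides via $\langle m|{\bf a}^+=(1-q^{2m})\langle m-1|$, $\langle m|{\bf a}^-=\langle m+1|$, $\langle m|{\tilde {\bf k}}=q^m\langle m|$, and using the elementary rewriting $(1-q^{2m})/(q)_m=(1+q^m)/(q)_{m-1}$ (so the "raising" identity telescopes and the "lowering" identity collapses via $(1-q^m)/(q)_m=1/(q)_{m-1}$), both identities are satisfied precisely when $w=\alpha/(\beta z)$.

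The main point to watch is the choice of spectral argument. Passing from left to right action interchanges the action of ${\bf a}^+$ and ${\bf a}^-$ while leaving the factors $z^{\pm 1}$ attached to them in $\M(z)^{0,1}_{1,0}=z{\bf a}^+$ and $\M(z)^{1,0}_{0,1}=z^{-1}{\bf a}^-$ unchanged; this is what forces the reciprocal argument $w=\alpha/(\beta z)$ in $\langle\chi(\cdot)|$ in place of $\alpha z/\beta$ for the right eigenvector. No other obstacle arises: beyond this bookkeeping, the verification is a purely routine mirror of the one implicit in the proof of Proposition \ref{prop:right eigen}.
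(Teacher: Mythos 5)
Your proposal is correct and follows the same route as the paper: the first two vectors are immediate from the ice rule, and the third reduces to the identity stated in the paper as (\ref{sin2}), which you verify componentwise exactly as intended (your reductions to $\beta z\,\langle\chi(w)|{\bf a}^+=\alpha\,\langle\chi(w)|(1+q\tilde{\bf k})$ and $\alpha z^{-1}\langle\chi(w)|{\bf a}^-=\beta\,\langle\chi(w)|(1-\tilde{\bf k})$, and the use of $(1-q^{2m})/(q)_m=(1+q^m)/(q)_{m-1}$ and $(1-q^m)/(q)_m=1/(q)_{m-1}$, are all accurate and pin down $w=\alpha/(\beta z)$). The paper simply asserts that (\ref{sin2}) is checked directly; you have supplied that check.
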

\begin{proof}
Again the first two are trivial and the last one is due to  
\begin{align}\label{sin2}
\sum_{i+j=1}\alpha^i\beta^j
\langle\chi({\textstyle \frac{\alpha}{\beta z}})|\M(z)^{i,j}_{k,l}
=\alpha^k\beta^l
\langle\chi({\textstyle \frac{\alpha}{\beta z}})|.
\end{align}
\end{proof}
The above propositions imply
\begin{corollary} \label{cor:eigen}
$(v_0+v_1)^{\ot2}{\ot}{\left| \chi(z) \right\rangle}$ 
$\mathrm{(}$resp. $(v_0^*+v_1^*)^{\ot2}{\ot}{\left\langle \chi(z^{-1}) 
\right|}$ $\mathrm{)}$ is also a right 
$\mathrm{(}$resp. left$\mathrm{)}$ 
eigenvector of $\M(z)$ of eigenvalue 1, i.e.,
\begin{equation}\label{mst}
\sum_{i,j}\M(z)^{k,l}_{i,j}|\chi(z)\rangle=|\chi(z)\rangle,\quad
\langle\chi(z^{-1})|\sum_{i,j}\M(z)^{i,j}_{k,l}=\langle\chi(z^{-1})|
\end{equation}
hold for any $k,l=0,1$.
\end{corollary}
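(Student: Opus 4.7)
The plan is to deduce Corollary \ref{cor:eigen} directly from Propositions \ref{prop:right eigen} and \ref{prop:left eigen} by decomposing $(v_0+v_1)^{\otimes 2}$ into its four summands and identifying each as one of the already-known eigenvectors, with $|\chi(z)\rangle$ serving as a common Fock-space factor. Concretely, expand
\begin{equation*}
(v_0+v_1)^{\otimes 2}\otimes|\chi(z)\rangle
= v_0\otimes v_0\otimes|\chi(z)\rangle
+ v_1\otimes v_1\otimes|\chi(z)\rangle
+ (v_0\otimes v_1+v_1\otimes v_0)\otimes|\chi(z)\rangle.
\end{equation*}
The first two summands are eigenvectors of $\M(z)$ with eigenvalue $1$ by the trivial cases of Proposition \ref{prop:right eigen}, where I take $|\xi\rangle = |\chi(z)\rangle$. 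The third is exactly the non-trivial case of Proposition \ref{prop:right eigen} specialized to $\alpha=\beta=1$, so that $\alpha z/\beta = z$ and the Fock-space component is again $|\chi(z)\rangle$. Adding the three pieces produces the desired right eigenvector.

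To pass from the vector-valued eigenvector equation to the componentwise identities (\ref{mst}), I would invoke the ice rule (\ref{ice}), which forces $\M(z)^{k,l}_{i,j}=0$ unless $i+j=k+l$. This splits the eigenvector equation $\M(z)\bigl((v_0+v_1)^{\otimes 2}\otimes|\chi(z)\rangle\bigr)=(v_0+v_1)^{\otimes 2}\otimes|\chi(z)\rangle$ into three independent conservation sectors indexed by $k+l\in\{0,1,2\}$. The sectors $k+l=0$ and $k+l=2$ give the trivial identities $\M(z)^{0,0}_{0,0}|\chi(z)\rangle=|\chi(z)\rangle$ and $\M(z)^{1,1}_{1,1}|\chi(z)\rangle=|\chi(z)\rangle$, while the $k+l=1$ sector reproduces (\ref{sin1}) with $\alpha=\beta=1$, yielding $\sum_{i+j=1}\M(z)^{k,l}_{i,j}|\chi(z)\rangle=|\chi(z)\rangle$ for $(k,l)\in\{(0,1),(1,0)\}$. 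Combining the three sectors gives the stated identity for all $k,l$, since summing over $(i,j)$ with $i+j\ne k+l$ contributes zero.

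The left-eigenvector assertion is handled by the same argument applied to Proposition \ref{prop:left eigen}: expanding $(v^*_0+v^*_1)^{\otimes 2}$ into its four components, and specializing the non-trivial case to $\alpha=\beta=1$ so that $\alpha/(\beta z)=z^{-1}$ produces $\langle\chi(z^{-1})|$ as the common dual Fock-space factor. The ice rule again decouples the sectors, so that (\ref{sin2}) yields the second identity in (\ref{mst}). There is no substantive obstacle here; the only point requiring care is to keep track of the spectral-parameter conventions, ensuring that in the right-eigenvector case the Fock parameter is $z$ while in the left-eigenvector case it is $z^{-1}$, as dictated by the asymmetry of the $\alpha,\beta$-dependences in Propositions \ref{prop:right eigen} and \ref{prop:left eigen}.
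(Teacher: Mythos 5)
Your proposal is correct and is essentially the paper's own argument: the paper derives the corollary by summing the eigenvectors of Propositions \ref{prop:right eigen} and \ref{prop:left eigen} at $\alpha=\beta=1$, exactly as you do, and your extra bookkeeping (extracting the componentwise identities by comparing coefficients of $v_k\ot v_l$, with the ice rule organizing the sectors) just fills in details the paper leaves implicit.
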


\subsection{Tetrahedron equation}
The $L,M$ operators introduced in 
section \ref{subsec:L and M} satisfy the tetrahedron equation.
\begin{theorem} [Tetrahedron equation]\label{prop:tetrahedron}
As an operator on $V^{\ot4}\ot F^{\ot2}$ the following equality holds:
\begin{equation} \label{aoy}
\M_{126}(z_{12})\M_{346}(z_{34})\L_{135}(z_{13})\L_{245}(z_{24})
= \L_{245}(z_{24})\L_{135}(z_{13})\M_{346}(z_{34})\M_{126}(z_{12}),
\end{equation}
where $z_{ij}=z_i/z_j$. Graphically it looks as
\begin{equation*}
\begin{picture}(400,90)(-80,-1)
\setlength{\unitlength}{0.32mm}
\thicklines

\put(59,50){$5$}
\rotatebox{35}{\put(70,14){\color{blue}\vector(-1,0){70}}}

\put(7,2){$3$}
\put(10,12){\line(0,1){38}}\qbezier(10,50)(10,51.5)(10,53)
\put(10,53){\vector(2,3){26}}

\put(33,30){$4$}
\put(41,34){\line(0,1){28}}\put(40,65){\vector(-1,1){25}}
\qbezier(41,62)(40.5,63.5)(40,65)

\put(16,42){$2$}
\put(25,45){\line(1,0){30}}\put(58,44){\vector(3,-2){27}}
\qbezier(55,45)(56.5,44.5)(58,44)

\put(-14,22){$1$}
\put(-5,24){\line(1,0){37}}
\qbezier(32,24)(34.5,24.5)(37,25)
\put(37,25){\vector(3,1){48}}

\put(1,77){$6$}
{\color{green}\qbezier(10,79)(67,80)(73.5,19)}
\put(74,18){\color{green}\vector(1,-4){1}}

\put(115,40){$=$}
\put(190,31){

\put(-48,23){$1$}
\put(-41,26){\line(3,1){51}}\put(13,43){\vector(1,0){43}}
\qbezier(10,43)(11.5,43)(13,43)

\put(-46,39){$2$}
\put(-14,23){\line(-3,2){23}}\put(-12,22){\vector(1,0){38}}
\qbezier(-14,23)(-13,22.5)(-12,22)

\put(12,-40){$3$}
\put(42,24){\vector(0,1){34}}
\qbezier(42,24)(41.5,21)(39.8,18)
{\rotatebox[origin=l]{-10}{\put(31.5,24){\line(-1,-3){17}}}}

\put(23,-30){$4$}
\put(-8,1){
\put(10,0){\vector(0,1){38}}\put(11,-2){\line(1,-1){20}}
\qbezier(10,0)(10.5,-1)(11,-2)}

\put(44,50){$5$}
\put(-20,-2){
\rotatebox{35}{\put(70,14){\color{blue}\vector(-1,0){70}}}}

\put(-39,50){$6$}
\put(-8,9){
{\color{green}\qbezier(-32,37)(-32,-23)(35,-22)}
\put(34.5,-22){\color{green}\vector(1,0){1}}}

}

\end{picture}
\thinlines
\end{equation*}
\end{theorem}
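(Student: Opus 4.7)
The plan is a direct matrix-element verification. Both sides of (\ref{aoy}) act on $V^{\otimes 4}\otimes F^{\otimes 2}$, so I would fix boundary labels $(i_1,i_2,i_3,i_4)$ and $(a_1,a_2,a_3,a_4)\in\{0,1\}^4$ on the four $V$-factors and compare the induced operators on $F_5\otimes F_6$. The ice rule (\ref{ice}) is satisfied at every local vertex, so composed operators vanish unless $N:=i_1+i_2+i_3+i_4=a_1+a_2+a_3+a_4$. This immediately cuts the check into five sectors $N\in\{0,1,2,3,4\}$, and within each sector only a handful of internal configurations survive the ice rule along the four internal $V$-edges that appear in the intermediate product on each side.

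The extreme sectors $N=0$ and $N=4$ are essentially trivial: every vertex is forced to the diagonal type $\L(z)^{0,0}_{0,0}=\L(z)^{1,1}_{1,1}=1$ (and its $\M$ analogue with the extra signs $q$, $-q$ from (\ref{air})), so both sides collapse to the same scalar multiple of the identity on $F_5\otimes F_6$. The remaining sectors $N=1,2,3$ are treated uniformly as follows: expand each side as a sum over the intermediate $V$-edge states compatible with ice rule at each of the four vertices, so that each contribution becomes a product in $\mathscr{A}_q$ on $F_5$ (the blue/$L$-line) tensored with a product in $\tilde{\mathscr{A}}_q$ on $F_6$ (the green/$M$-line), multiplied by a monomial in $z_1,\ldots,z_4$ coming from the $z\mathbf{a}^+$ and $z^{-1}\mathbf{a}^-$ factors of (\ref{air}).

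A key bookkeeping observation is that conservation at each vertex makes the overall $z$-monomials on both sides match term by term, so the spectral parameters drop out and the problem reduces to a finite list of operator identities in $\mathscr{A}_q\otimes\tilde{\mathscr{A}}_q$. Each such identity is verified using the defining relations (\ref{tgm}) together with the Fock-space action (\ref{kyk}): typical moves rewrite a subword $\mathbf{a}^-\mathbf{a}^+$ as $1-q^2\mathbf{k}^2$, push $\mathbf{k}$'s to one end via $\mathbf{k}\,\mathbf{a}^\pm=-q^{\pm1}\mathbf{a}^\pm\mathbf{k}$, and similarly for $\tilde{\mathbf{k}}$ on $F_6$ with the opposite sign.

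The main obstacle is the middle sector $N=2$: there are $\binom{4}{2}=6$ incoming and $6$ outgoing boundary profiles, and for most of them several intermediate configurations contribute on each side, producing nontrivial linear combinations whose equality relies on the precise interplay between $\mathscr{A}_q$ and $\tilde{\mathscr{A}}_q=\mathscr{A}_{-q}$. Organizing this sector cleanly is where the real work lies; a compact alternative I would try is to note that since $\L$ and $\M$ differ only in the sign of $q$ inside the $\mathbf{k}$-entry, one may attempt to deduce (\ref{aoy}) from the known pure tetrahedron equation $\L\L\L\L=\L\L\L\L$ in $\mathscr{A}_q^{\otimes 2}$ by the substitution $q\mapsto -q$ in the Fock factor $F_6$, thereby reducing the case analysis to a single known identity.
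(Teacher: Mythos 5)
Your primary route coincides with the paper's own proof: the paper verifies \eqref{aoy} by computing matrix elements between $V^{\ot4}$ and $(V^*)^{\ot4}$, reducing each to an identity in $\mathscr{A}_q\ot\tilde{\mathscr{A}}_q$ on $F^{\ot2}$ checked with \eqref{tgm}, and your sector decomposition by the conserved particle number $N$ together with the observation that every contributing internal configuration carries the same monomial $\prod_k z_k^{i_k-a_k}$ (so the spectral parameters factor out of each boundary sector) is a correct and useful way of organizing exactly that computation. Two corrections, however. First, a minor one: in the sectors $N=0$ and $N=4$ every vertex is forced to be $\L(z)^{0,0}_{0,0}$ or $\L(z)^{1,1}_{1,1}$ (and their $\M$ counterparts), all equal to $1$, so both sides are literally the identity; the entries $q{\bf k}$ and $-q\tilde{\bf k}$ never appear there and there are no signs to track.

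The second point is a genuine error: the ``compact alternative'' of deducing \eqref{aoy} from a pure-$\L$ tetrahedron equation by substituting $q\mapsto-q$ on the factor $F_6$ does not work, because no such pure-$\L$ identity holds. The very matrix element computed in the paper shows this: the left-hand side contribution is
$q\,(z_{12}z_{24})\,{\bf a}^+{\bf k}\ot{\bf a}^+\tilde{\bf k}+q\,(z_{13}z_{34})\,{\bf a}^+{\bf k}\ot\tilde{\bf k}{\bf a}^+$ up to reordering, and it vanishes only because ${\bf k}\,{\bf a}^+=-q\,{\bf a}^+{\bf k}$ on $F_5$ while $\tilde{\bf k}\,{\bf a}^+=+q\,{\bf a}^+\tilde{\bf k}$ on $F_6$, producing a relative minus sign between the two terms together with $z_{12}z_{24}=z_{13}z_{34}$. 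If the sixth space carried ${\bf k}$ instead of $\tilde{\bf k}$, the two terms would add to $2q(z_1/z_4)\,{\bf a}^+{\bf k}\ot{\bf k}{\bf a}^+\neq0$, while the corresponding right-hand-side matrix element remains $0$. Moreover, even granting such an identity, substituting $q\mapsto-q$ in only one tensor factor of a single-parameter operator identity is not a legitimate deduction, since the relations \eqref{tgm} governing the $F_5$ factor also depend on $q$. The opposite commutation signs of ${\bf k}$ and $\tilde{\bf k}$ are precisely the mechanism that makes \eqref{aoy} true, so the direct case-by-case verification you outline first is the argument to carry out.
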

\begin{proof}
For instance, we have
\begin{align*}
&\langle v^*_0\ot v^*_1\ot v^*_0\ot v^*_1,
(\text{LHS})v_1\ot v_1\ot v_0\ot v_0\rangle \\
&=\L(z_{13})^{10}_{10}\L(z_{24})^{01}_{10}{\ot}\M(z_{12})^{01}_{10}\M(z_{34})^{01}_{01}+\L(z_{13})^{01}_{10}\L(z_{24})^{10}_{10}{\ot}\M(z_{12})^{01}_{01}\M(z_{34})^{01}_{10} \\ 
&=(q{\bf k}\cdot z_{24}{\bf a}^+){\ot}(z_{12}{\bf a}^+\cdot{\tilde {\bf k}}) +(z_{13}{\bf a}^+\cdot q{\bf k}){\ot}({\tilde {\bf k}}\cdot z_{34}{\bf a}^+)=0
\end{align*}
on $F^{\ot2}$, where the pairing is evaluated between $(V^*)^{\ot4}$ and $V^{\ot4}$.
On the other hand, one clearly has $\langle v^*_0\ot v^*_1\ot v^*_0\ot v^*_1,
(\text{RHS})v_1\ot v_1\ot v_0\ot v_0\rangle=0$.
The other cases can be shown similarly.
\end{proof}

The above type of the tetrahedron equation was first considered in \cite{BS}. 
In fact, our solutions $\L(z),\M(z)$ of the tetrahedron equation are
equivalent to that in \cite{BS} with a 
certain specialization of their spectral parameters,
up to a gauge transformation of the form 
\[
\L(z)\longrightarrow P_1(\alpha)P_2(\beta)\L(z)P_1(\alpha')^{-1}P_2(\beta')^{-1}
\]
and similarly for $\M(z)$, where $P_i(\gamma)$ 
acts nontrivially only on the $i$-th $V$.
We will see that the tetrahedron equation (\ref{aoy}) plays the most 
fundamental role controlling the whole family of relations 
among layer to layer transfer matrices and
ultimately the hat relation in theorem \ref{th:mho}.

\section{Layer to layer transfer matrix}\label{sec:LLT}
Here we study the partition functions of the 
$q$-oscillator valued six-vertex model
with special boundary conditions.
Put in another way, they are layer to layer transfer matrices
of a 3D lattice model whose basic unit is the 3D $L$ operator.

Fixing positive integers $m,n$, we define a linear operator $T(z)$ on 
$V^{\ot m}{\ot}{V^{\ot n}}{\ot}{F^{\ot mn}}$ graphically as follows:
\begin{equation*}
\begin{picture}(150,77)(-50,-10)

\put(-49,29){$T(z)= $}
\put(20,52){$. . .$}
\put(-5,27){$.$}\put(-5,24){$.$}\put(-5,21){$.$}
\put(-8,48){\vector(1,0){56}}
\put(-8,40){\vector(1,0){56}}
\put(-8,32){\vector(1,0){56}}
\put(-8,16){\vector(1,0){56}}
\put(-8,8){\vector(1,0){56}}
\put(-8,0){\vector(1,0){56}}

\put(40,-5){\vector(0,1){60}}
\put(32,-5){\vector(0,1){60}}
\put(16,-5){\vector(0,1){60}}
\put(8,-5){\vector(0,1){60}}
\put(0,-5){\vector(0,1){60}}

\put(-2,50){\rotatebox{-0}{$\overbrace{\phantom{KKKKK}}$}}
\put(18,68){$\scriptstyle{n}$}

\put(45,50){\rotatebox{-90}{$\overbrace{\phantom{KKKKKk}}$}}
\put(60,22){$\scriptstyle{m}$}

\end{picture}
\end{equation*}
Each line, horizontal or vertical, carries $V$ (\ref{V}). 
Each vertex represents $\L(z)^{a,b}_{i,j}$ in (\ref{6v})
including the spectral parameter $z$.
Penetrating each vertex from back to face, the Fock space
$F$ runs along a blue line as in the left figure in (\ref{RLpic}). 
When this feature is to be emphasized, we depict $T(z)$, say for $(m,n)=(3,4)$, as 
\begin{equation*}
\begin{picture}(100,50)(-15,5)
\setlength{\unitlength}{0.5mm}

\put(-28,20){$T(z)=$}
\put(0,30){\rotatebox{-5}{\vector(1,0){47}}}
\put(0,20){\rotatebox{-5}{\vector(1,0){47}}}
\put(0,10){\rotatebox{-5}{\vector(1,0){47}}}

\multiput(-1,2)(10,-1){4}{
\put(10,0){\vector(0,1){35}}}

\multiput(10,10)(10,-1){4}{
\multiput(0,0)(0,10){3}{
\put(2.4,1.6){\color{blue}\vector(-3,-2){9}}
}}

\end{picture}
\end{equation*}

Introduce the following notation:
\begin{align*}
&|{\bf i}\rangle=v_{i_1}\ot v_{i_2}\ot\cdots\ot v_{i_m},
&|{\bf j}\rangle=v_{j_1}\ot v_{j_2}\ot\cdots\ot v_{j_n},\\
&\langle{\bf a}|=v^*_{a_1}\ot v^*_{a_2}\ot\cdots\ot v^*_{a_m},
&\langle{\bf b}|=v^*_{b_1}\ot v^*_{b_2}\ot\cdots\ot v^*_{b_n},
\end{align*}
where all subscripts $i_1,i_2,\ldots$, etc are $0$ or $1$. Then
$T(z)^{{\bf a}, {\bf b}}_{{\bf i}, {\bf j}}=
(\langle{\bf a}|\ot\langle{\bf b}|)T(z)(|{\bf i}\rangle\ot|{\bf j}\rangle)\in{\rm End}
(F^{\ot mn})$ is represented as
\begin{equation}
\begin{picture}(130,77)(-50,-15)
\put(-69,29){$T(z)^{{\bf a}, {\bf b}}_{{\bf i}, {\bf j}}=$}
\put(-8,48){\vector(1,0){56}}
\put(-8,40){\vector(1,0){56}}
\put(-8,32){\vector(1,0){56}}
\put(-8,8){\vector(1,0){56}}
\put(-8,0){\vector(1,0){56}}

\put(40,-5){\vector(0,1){60}}
\put(32,-5){\vector(0,1){60}}
\put(8,-5){\vector(0,1){60}}
\put(0,-5){\vector(0,1){60}}

\put(-5,60){${\scriptstyle b_1}$}
\put(5,60){${\scriptstyle b_2}$}
\put(20,60){$. . .$}
\put(38,60){${\scriptstyle b_n}$}

\put(-5,-12){${\scriptstyle j_1}$}
\put(5,-12){${\scriptstyle j_2}$}
\put(20,-12){$. . .$}
\put(38,-12){${\scriptstyle j_n}$}

\put(50,48){${\scriptstyle a_1}$}
\put(50,38){${\scriptstyle a_2}$}
\put(50,14){$\vdots$}
\put(50,-2){${\scriptstyle a_m}$}

\put(-17,48){${\scriptstyle i_1}$}
\put(-17,38){${\scriptstyle i_2}$}
\put(-17,14){$\vdots$}
\put(-17,-2){${\scriptstyle i_m}$}

\nonumber
\end{picture}
\end{equation}
where the sums are taken over $\{0,1\}$ for all the internal edges.
With this notation, fixing $\langle{\bf a}|,|{\bf j}\rangle$ we set
\begin{equation} \label{S}
\begin{picture}(130,83)(-80,-10)
\put(-135,29){$S(z)^{\bf a}_{\bf j}
={\displaystyle \sum_{{\bf i},{\bf b}}}\,T(z)^{{\bf a}, {\bf b}}_{{\bf i}, {\bf j}}= $}
\put(-35,29){${\displaystyle {\sum_{{\bf i},{\bf b}}^{}}}$}
\put(-8,48){\vector(1,0){56}}
\put(-8,40){\vector(1,0){56}}
\put(-8,32){\vector(1,0){56}}
\put(-8,8){\vector(1,0){56}}
\put(-8,0){\vector(1,0){56}}

\put(40,-5){\vector(0,1){60}}
\put(32,-5){\vector(0,1){60}}
\put(8,-5){\vector(0,1){60}}
\put(0,-5){\vector(0,1){60}}

\put(-5,60){${\scriptstyle b_1}$}
\put(5,60){${\scriptstyle b_2}$}
\put(20,60){$. . .$}
\put(38,60){${\scriptstyle b_n}$}

\put(-5,-12){${\scriptstyle j_1}$}
\put(5,-12){${\scriptstyle j_2}$}
\put(20,-12){$. . .$}
\put(38,-12){${\scriptstyle j_n}$}

\put(50,48){${\scriptstyle a_1}$}
\put(50,38){${\scriptstyle a_2}$}
\put(50,14){$\vdots$}
\put(50,-2){${\scriptstyle a_m}$}

\put(-17,48){${\scriptstyle i_1}$}
\put(-17,38){${\scriptstyle i_2}$}
\put(-17,14){$\vdots$}
\put(-17,-2){${\scriptstyle i_m}$}

\put(70,29){$\in \ {\rm End}({F^{\ot mn}})$.}
\end{picture}
\end{equation}
The operators 
$T(z), T(z)^{{\bf a}, {\bf b}}_{{\bf i}, {\bf j}}$ and 
$S(z)^{\bf a}_{\bf j}$
are the layer to layer transfer matrices of size $m \times n$ with 
free, fixed and mixed (NW-free and SE-fixed) boundary conditions, respectively.

\begin{example}\label{ex:ts0}
Consider the case $(m,n)=(1,1)$.  
Then we have $T(z)^{a,b}_{i,j} = \L(z)^{a,b}_{i,j}$, therefore
\begin{align*}
S(z)^0_0 = 1+z{\bf a}^+,\quad
S(z)^1_1 = 1+z^{-1}{\bf a}^-,\quad
S(z)^0_1 = {\bf k},\quad
S(z)^1_0 = q {\bf k}.
\end{align*}
\end{example}

\begin{example}\label{ex:nmi}
Consider the case $(m,n)=(1,2)$. 
We list those $S(z)^{\bf a}_{\bf j}$ which will be used 
in example \ref{ex:ykw}.
\begin{equation*}
\!\!\!\!\!\!\!\!\!\!\!\!\!\!
\!\!\!\!\!\!\!\!\!\!\!\!\!\!\!\!\!\!
S^0_{00}(z) =
\begin{picture}(150,20)(0,0)
\thicklines
\mbox{\begin{picture}(40,18)(-6,12)
\put(6,14){\vector(0,1){10}}
\put(18,14){\vector(0,1){10}}
\put(-2,14){\line(1,0){8}}
\put(6,14){\line(1,0){12}}
\put(18,14){\vector(1,0){11}}
\put(6,7){\line(0,1){7}}
\put(18,7){\line(0,1){7}}
\end{picture}}
+
\mbox{\begin{picture}(40,18)(-6,12)
\put(6,14){\color{red}\vector(0,1){10}}
\put(18,14){\vector(0,1){10}}
\put(-2,14){\color{red}\line(1,0){8}}
\put(6,14){\line(1,0){12}}
\put(18,14){\vector(1,0){11}}
\put(6,7){\line(0,1){7}}
\put(18,7){\line(0,1){7}}
\end{picture}}
+
\mbox{\begin{picture}(40,18)(-6,12)
\put(6,14){\vector(0,1){10}}
\put(18,14){\color{red}\vector(0,1){10}}
\put(-2,14){\color{red}\line(1,0){8}}
\put(6,14){\color{red}\line(1,0){12}}
\put(18,14){\vector(1,0){11}}
\put(6,7){\line(0,1){7}}
\put(18,7){\line(0,1){7}}
\end{picture}}
\end{picture}
= 1 \st 1 + z {\bf a}^+\st 1+ zq{\bf k} \st {\bf a}^+, 
\end{equation*}
\begin{equation*}
\!\!\!\!\!\!\!\!\!\!\!\!\!\!
\!\!\!\!\!\!\!\!\!\!\!\!\!\!\!\!\!\!\!
S^0_{10}(z) =
\begin{picture}(150,20)(0,0)
\thicklines
\mbox{\begin{picture}(40,18)(-6,12)
\put(6,14){\color{red}\vector(0,1){10}}
\put(18,14){\vector(0,1){10}}
\put(-2,14){\line(1,0){8}}
\put(6,14){\line(1,0){12}}
\put(18,14){\vector(1,0){11}}
\put(6,7){\color{red}\line(0,1){7}}
\put(18,7){\line(0,1){7}}
\end{picture}}
+
\mbox{\begin{picture}(40,18)(-6,12)
\put(6,14){\vector(0,1){10}}
\put(18,14){\color{red}\vector(0,1){10}}
\put(-2,14){\line(1,0){8}}
\put(6,14){\color{red}\line(1,0){12}}
\put(18,14){\vector(1,0){11}}
\put(6,7){\color{red}\line(0,1){7}}
\put(18,7){\line(0,1){7}}
\end{picture}}
+
\mbox{\begin{picture}(40,18)(-6,12)
\put(6,14){\color{red}\vector(0,1){10}}
\put(18,14){\color{red}\vector(0,1){10}}
\put(-2,14){\color{red}\line(1,0){8}}
\put(6,14){\color{red}\line(1,0){12}}
\put(18,14){\vector(1,0){11}}
\put(6,7){\color{red}\line(0,1){7}}
\put(18,7){\line(0,1){7}}
\end{picture}}
\end{picture}
= {\bf k} \st 1 + {\bf a}^-\st {\bf a}^+ + z 1 \st {\bf a}^+, 
\end{equation*}
\begin{equation*}
\hspace{-4.8cm}
S^1_{10}(z) =
\begin{picture}(90,20)(00,0)
\thicklines
\mbox{\begin{picture}(40,18)(-6,12)
\put(6,14){\vector(0,1){10}}
\put(18,14){\vector(0,1){10}}
\put(-2,14){\line(1,0){8}}
\put(6,14){\color{red}\line(1,0){12}}
\put(18,14){\color{red}\vector(1,0){11}}
\put(6,7){\color{red}\line(0,1){7}}
\put(18,7){\line(0,1){7}}
\end{picture}}
+
\mbox{\begin{picture}(40,18)(-6,12)
\put(6,14){\color{red}\vector(0,1){10}}
\put(18,14){\vector(0,1){10}}
\put(-2,14){\color{red}\line(1,0){8}}
\put(6,14){\color{red}\line(1,0){12}}
\put(18,14){\color{red}\vector(1,0){11}}
\put(6,7){\color{red}\line(0,1){7}}
\put(18,7){\line(0,1){7}}
\end{picture}}
\end{picture}
\;= qz^{-1}{\bf a}^- \st {\bf k} 
+ q 1 \st {\bf k},
\end{equation*}
\begin{equation*}
\hspace{-8.7cm}
S^1_{00}(z) =
\begin{picture}(35,20)(0,0)
\thicklines
\mbox{\begin{picture}(40,18)(-6,12)
\put(6,14){\vector(0,1){10}}
\put(18,14){\vector(0,1){10}}
\put(-2,14){\color{red}\line(1,0){8}}
\put(6,14){\color{red}\line(1,0){12}}
\put(18,14){\color{red}\vector(1,0){11}}
\put(6,7){\line(0,1){7}}
\put(18,7){\line(0,1){7}}
\end{picture}}
\end{picture}
\;= q^2 {\bf k}\st {\bf k}.  
\end{equation*}
$ $
\end{example}

\begin{example}\label{ex:ts}
Consider the case $(m,n)=(2,2)$. 
$S(z)^{00}_{00}$ consists of the following 8 terms:
\begin{equation*}
\begin{picture}(500,50)(0,-20)
\thicklines
\mbox{\begin{picture}(37,18)(-5,5)
\put(6,14){\vector(0,1){10}}
\put(18,14){\vector(0,1){10}}
\put(-2,14){\line(1,0){8}}
\put(6,14){\line(1,0){12}}
\put(18,14){\vector(1,0){11}}
\put(6,0){\line(0,1){14}}
\put(18,0){\line(0,1){14}}
\put(-2,0){\line(1,0){8}}
\put(6,0){\line(1,0){12}}
\put(18,0){\vector(1,0){11}}
\put(6,-8){\line(0,1){8}}
\put(18,-8){\line(0,1){8}}
\end{picture}}
+
\mbox{\begin{picture}(37,18)(-5,5)
\put(6,14){\color{red}\vector(0,1){10}}
\put(18,14){\vector(0,1){10}}
\put(-2,14){\color{red}\line(1,0){8}}
\put(6,14){\line(1,0){12}}
\put(18,14){\vector(1,0){11}}
\put(6,0){\line(0,1){14}}
\put(18,0){\line(0,1){14}}
\put(-2,0){\line(1,0){8}}
\put(6,0){\line(1,0){12}}
\put(18,0){\vector(1,0){11}}
\put(6,-8){\line(0,1){8}}
\put(18,-8){\line(0,1){8}}
\end{picture}}
+
\mbox{\begin{picture}(37,18)(-5,5)
\put(6,14){\color{red}\vector(0,1){11}}
\put(18,14){\vector(0,1){11}}
\put(-2,14){\line(1,0){8}}
\put(6,14){\line(1,0){12}}
\put(18,14){\vector(1,0){11}}
\put(6,0){\color{red}\line(0,1){14}}
\put(18,0){\line(0,1){14}}
\put(-2,0){\color{red}\line(1,0){8}}
\put(6,0){\line(1,0){12}}
\put(18,0){\vector(1,0){11}}
\put(6,-8){\line(0,1){8}}
\put(18,-8){\line(0,1){8}}
\end{picture}}
+
\mbox{\begin{picture}(37,18)(-5,5)
\put(6,14){\vector(0,1){10}}
\put(18,14){\color{red}\vector(0,1){10}}
\put(-2,14){\line(1,0){8}}
\put(6,14){\color{red}\line(1,0){12}}
\put(18,14){\vector(1,0){11}}
\put(6,0){\color{red}\line(0,1){14}}
\put(18,0){\line(0,1){14}}
\put(-2,0){\color{red}\line(1,0){8}}
\put(6,0){\line(1,0){12}}
\put(18,0){\vector(1,0){11}}
\put(6,-8){\line(0,1){8}}
\put(18,-8){\line(0,1){8}}
\end{picture}}
+
\mbox{\begin{picture}(37,18)(-5,5)
\put(6,14){\color{red}\vector(0,1){10}}
\put(18,14){\color{red}\vector(0,1){10}}
\put(-2,14){\color{red}\line(1,0){8}}
\put(6,14){\color{red}\line(1,0){12}}
\put(18,14){\vector(1,0){11}}
\put(6,0){\color{red}\line(0,1){14}}
\put(18,0){\line(0,1){14}}
\put(-2,0){\color{red}\line(1,0){8}}
\put(6,0){\line(1,0){12}}
\put(18,0){\vector(1,0){11}}
\put(6,-8){\line(0,1){8}}
\put(18,-8){\line(0,1){8}}
\end{picture}}
+
\mbox{\begin{picture}(37,18)(-5,5)
\put(6,14){\vector(0,1){10}}
\put(18,14){\color{red}\vector(0,1){10}}
\put(-2,14){\line(1,0){8}}
\put(6,14){\line(1,0){12}}
\put(18,14){\vector(1,0){11}}
\put(6,0){\line(0,1){14}}
\put(18,0){\color{red}\line(0,1){14}}
\put(-2,0){\color{red}\line(1,0){8}}
\put(6,0){\color{red}\line(1,0){12}}
\put(18,0){\vector(1,0){11}}
\put(6,-8){\line(0,1){8}}
\put(18,-8){\line(0,1){8}}
\end{picture}}
+
\mbox{\begin{picture}(37,18)(-5,5)
\put(6,14){\color{red}\vector(0,1){10}}
\put(18,14){\color{red}\vector(0,1){10}}
\put(-2,14){\color{red}\line(1,0){8}}
\put(6,14){\line(1,0){12}}
\put(18,14){\vector(1,0){11}}
\put(6,0){\line(0,1){14}}
\put(18,0){\color{red}\line(0,1){14}}
\put(-2,0){\color{red}\line(1,0){8}}
\put(6,0){\color{red}\line(1,0){12}}
\put(18,0){\vector(1,0){11}}
\put(6,-8){\line(0,1){8}}
\put(18,-8){\line(0,1){8}}
\end{picture}}
+
\mbox{\begin{picture}(37,18)(-5,5)
\put(6,14){\vector(0,1){10}}
\put(18,14){\color{red}\vector(0,1){10}}
\put(-2,14){\color{red}\line(1,0){8}}
\put(6,14){\color{red}\line(1,0){12}}
\put(18,14){\vector(1,0){11}}
\put(6,0){\line(0,1){14}}
\put(18,0){\line(0,1){14}}
\put(-2,0){\line(1,0){8}}
\put(6,0){\line(1,0){12}}
\put(18,0){\vector(1,0){11}}
\put(6,-8){\line(0,1){8}}
\put(18,-8){\line(0,1){8}}
\end{picture}}
\end{picture}
\end{equation*}
Thus we have
\begin{equation*}
\begin{split}
S(z)^{00}_{00} &= 
1\st 1 \st 1 \st1 
+ z {\bf a}^+ \st 1 \st 1 \st 1 
+ z{\bf k} \st {\bf a}^+ \st 1 \st 1 
+ z{\bf a}^- \st {\bf a}^+\st {\bf a}^+ \st 1\\
&+z^2 1 \st {\bf a}^+ \st {\bf a}^+ \st 1
+qz 1 \st {\bf k}\st {\bf k} \st {\bf a}^+
+qz^2 {\bf a}^+ \st {\bf k} \st {\bf k} \st {\bf a}^+
+qz {\bf k}\st 1 \st {\bf a}^+ \st 1.
\end{split}
\end{equation*}
\end{example}

\begin{example}\label{ex:ts2}
Similarly $S(z)^{10}_{10}$ for $(m,n)=(2,2)$ consists of the following 8 terms:
\begin{equation*}
\begin{picture}(500,50)(0,-20)
\thicklines
\mbox{\begin{picture}(37,18)(-5,5)
\put(6,14){\vector(0,1){10}}
\put(18,14){\vector(0,1){10}}
\put(-2,14){\line(1,0){8}}
\put(6,14){\line(1,0){12}}
\put(18,14){\color{red}\vector(1,0){11}}
\put(6,0){\line(0,1){14}}
\put(18,0){\color{red}\line(0,1){14}}
\put(-2,0){\line(1,0){8}}
\put(6,0){\color{red}\line(1,0){12}}
\put(18,0){\vector(1,0){11}}
\put(6,-8){\color{red}\line(0,1){8}}
\put(18,-8){\line(0,1){8}}
\end{picture}}
+
\mbox{\begin{picture}(37,18)(-5,5)
\put(6,14){\color{red}\vector(0,1){10}}
\put(18,14){\vector(0,1){10}}
\put(-2,14){\color{red}\line(1,0){8}}
\put(6,14){\line(1,0){12}}
\put(18,14){\color{red}\vector(1,0){11}}
\put(6,0){\line(0,1){14}}
\put(18,0){\color{red}\line(0,1){14}}
\put(-2,0){\line(1,0){8}}
\put(6,0){\color{red}\line(1,0){12}}
\put(18,0){\vector(1,0){11}}
\put(6,-8){\color{red}\line(0,1){8}}
\put(18,-8){\line(0,1){8}}
\end{picture}}
+
\mbox{\begin{picture}(37,18)(-5,5)
\put(6,14){\color{red}\vector(0,1){10}}
\put(18,14){\vector(0,1){10}}
\put(-2,14){\line(1,0){8}}
\put(6,14){\line(1,0){12}}
\put(18,14){\color{red}\vector(1,0){11}}
\put(6,0){\color{red}\line(0,1){14}}
\put(18,0){\color{red}\line(0,1){14}}
\put(-2,0){\color{red}\line(1,0){8}}
\put(6,0){\color{red}\line(1,0){12}}
\put(18,0){\vector(1,0){11}}
\put(6,-8){\color{red}\line(0,1){8}}
\put(18,-8){\line(0,1){8}}
\end{picture}}
+
\mbox{\begin{picture}(37,18)(-5,5)
\put(6,14){\vector(0,1){10}}
\put(18,14){\color{red}\vector(0,1){10}}
\put(-2,14){\line(1,0){8}}
\put(6,14){\color{red}\line(1,0){12}}
\put(18,14){\color{red}\vector(1,0){11}}
\put(6,0){\color{red}\line(0,1){14}}
\put(18,0){\color{red}\line(0,1){14}}
\put(-2,0){\color{red}\line(1,0){8}}
\put(6,0){\color{red}\line(1,0){12}}
\put(18,0){\vector(1,0){11}}
\put(6,-8){\color{red}\line(0,1){8}}
\put(18,-8){\line(0,1){8}}
\end{picture}}
+
\mbox{\begin{picture}(37,18)(-5,5)
\put(6,14){\color{red}\vector(0,1){10}}
\put(18,14){\color{red}\vector(0,1){10}}
\put(-2,14){\color{red}\line(1,0){8}}
\put(6,14){\color{red}\line(1,0){12}}
\put(18,14){\color{red}\vector(1,0){11}}
\put(6,0){\color{red}\line(0,1){14}}
\put(18,0){\color{red}\line(0,1){14}}
\put(-2,0){\color{red}\line(1,0){8}}
\put(6,0){\color{red}\line(1,0){12}}
\put(18,0){\vector(1,0){11}}
\put(6,-8){\color{red}\line(0,1){8}}
\put(18,-8){\line(0,1){8}}
\end{picture}}
+
\mbox{\begin{picture}(37,18)(-5,5)
\put(6,14){\color{red}\vector(0,1){10}}
\put(18,14){\vector(0,1){10}}
\put(-2,14){\color{red}\line(1,0){8}}
\put(6,14){\color{red}\line(1,0){12}}
\put(18,14){\color{red}\vector(1,0){11}}
\put(6,0){\color{red}\line(0,1){14}}
\put(18,0){\line(0,1){14}}
\put(-2,0){\line(1,0){8}}
\put(6,0){\line(1,0){12}}
\put(18,0){\vector(1,0){11}}
\put(6,-8){\color{red}\line(0,1){8}}
\put(18,-8){\line(0,1){8}}
\end{picture}}
+
\mbox{\begin{picture}(37,18)(-5,5)
\put(6,14){\vector(0,1){10}}
\put(18,14){\color{red}\vector(0,1){10}}
\put(-2,14){\color{red}\line(1,0){8}}
\put(6,14){\color{red}\line(1,0){12}}
\put(18,14){\color{red}\vector(1,0){11}}
\put(6,0){\line(0,1){14}}
\put(18,0){\color{red}\line(0,1){14}}
\put(-2,0){\line(1,0){8}}
\put(6,0){\color{red}\line(1,0){12}}
\put(18,0){\vector(1,0){11}}
\put(6,-8){\color{red}\line(0,1){8}}
\put(18,-8){\line(0,1){8}}
\end{picture}}
+
\mbox{\begin{picture}(37,18)(-5,5)
\put(6,14){\vector(0,1){10}}
\put(18,14){\vector(0,1){10}}
\put(-2,14){\line(1,0){8}}
\put(6,14){\color{red}\line(1,0){12}}
\put(18,14){\color{red}\vector(1,0){11}}
\put(6,0){\color{red}\line(0,1){14}}
\put(18,0){\line(0,1){14}}
\put(-2,0){\line(1,0){8}}
\put(6,0){\line(1,0){12}}
\put(18,0){\vector(1,0){11}}
\put(6,-8){\color{red}\line(0,1){8}}
\put(18,-8){\line(0,1){8}}
\end{picture}}
\end{picture}
\end{equation*}
Thus we have
\begin{equation*}
\begin{split}
S(z)^{10}_{10} &= 
z^{-1}1\st {\bf a}^-  \st {\bf a}^-  \st{\bf a}^+
+ {\bf a}^+ \st {\bf a}^-  \st {\bf a}^-  \st {\bf a}^+
+ {\bf k} \st 1 \st {\bf a}^-  \st {\bf a}^+  
+ {\bf a}^- \st 1 \st 1 \st {\bf a}^+ \\
&+z 1 \st 1 \st 1 \st {\bf a}^+
+q 1 \st {\bf k}\st {\bf k} \st 1
+q {\bf k} \st {\bf a}^- \st 1 \st {\bf a}^+
+qz^{-1}{\bf a}^- \st {\bf k} \st {\bf k} \st 1.
\end{split}
\end{equation*}
\end{example}

The layer to layer transfer matrices $S(z)^{\bf a}_{\bf j}$ 
with the common SE boundary condition ${\bf a}, {\bf j}$ 
form a commuting family. 
\begin{proposition}[Commutativity of layer to layer transfer matrices] 
\label{prop:S comm}
\begin{equation} \label{S comm}
[S(x)^{\bf a}_{\bf j},S(y)^{\bf a}_{\bf j}]=0.
\end{equation}
\end{proposition}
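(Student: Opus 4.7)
The plan is to interpret $S(x)^{\bf a}_{\bf j}\,S(y)^{\bf a}_{\bf j}$ as the partition function of a three-dimensional lattice formed by stacking two $m\times n$ layers of $\L$-vertices, one at spectral parameter $x$ and the other at $y$, sharing the $mn$ Fock spaces at each site (the blue strands of (\ref{6v})) while carrying independent auxiliary $V$-lines along the rows and columns, with the NW edges freely summed and the SE edges fixed to ${\bf a}, {\bf j}$. To interchange the two layers I shall introduce an auxiliary family of $\M$-operators joining the corresponding $V$-spaces of the two $\L$-layers along each row and each column of the bilayer, each acting together with its own auxiliary Fock space, and propagate them from the NW side to the SE side by iterating the tetrahedron equation (\ref{aoy}). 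Each local application trades a pair $\L(x)\,\L(y)$ acting on a common Fock space for $\L(y)\,\L(x)$ at the cost of a local $\M\,\M$ rearrangement on the auxiliary $V$-slots; after the cascade of $mn$ moves the ordering is globally reversed and the bulk factor becomes $S(y)^{\bf a}_{\bf j}\,S(x)^{\bf a}_{\bf j}$.

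It then remains to verify that the inserted $\M$'s absorb trivially at both boundaries. On the NW edge, the free summations $\sum_{\bf i, \bf b}$ in (\ref{S}) supply the vector $v_0+v_1$ at every auxiliary $V$-slot of both layers, so each $\M$ meets $(v_0+v_1)^{\otimes 2}$ on its two $V$-legs; coupled with a suitable $|\chi(\cdot)\rangle$ on its auxiliary Fock space, corollary \ref{cor:eigen} identifies this as a right eigenvector of $\M$ of eigenvalue $1$. On the SE edge, the common fixed labels ${\bf a}, {\bf j}$ are identical in the two layers, so each $\M$ sees a diagonal vector $v_{a_i}\otimes v_{a_i}$ or $v_{j_k}\otimes v_{j_k}$ on its $V$-legs, which is a right eigenvector of $\M$ of eigenvalue $1$ for an arbitrary Fock vector by proposition \ref{prop:right eigen}. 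Thus the auxiliary $\M$-system contributes an overall factor of $1$ on both sides of the move, and (\ref{S comm}) emerges.

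The principal obstacle will be the consistent labelling of spectral parameters on the inserted $\M$'s. The tetrahedron equation (\ref{aoy}) couples four parameters $z_1, z_2, z_3, z_4$ through the constraint $z_{12}z_{24}=z_{13}z_{34}$, so once $z_{13}=x$ and $z_{24}=y$ are fixed by the $\L$-layers, the ratios $z_{12}$ and $z_{34}$ that govern the row- and column-$\M$'s cannot be tuned independently. One must therefore check that a single global assignment of the $z_i$ yields both a coherent cascade of $mn$ local moves and matching arguments for the $|\chi\rangle$-vectors coming from corollary \ref{cor:eigen} on the two sides of the would-be identity; the separateness of the auxiliary Fock spaces of the $\M$'s from the lattice Fock spaces ensures that these $|\chi\rangle$-factors are preserved throughout the propagation. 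Modulo this spectral-parameter bookkeeping, the commutativity reduces to iterating theorem \ref{prop:tetrahedron}.
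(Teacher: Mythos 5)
Your strategy coincides in outline with the paper's: insert $\M$-operators between the two $\L$-layers, sweep them across the $m\times n$ bilayer by iterating the tetrahedron equation, and absorb them at the boundaries using the eigenvectors of $\M$. But one structural choice you make would break the sweep: you give each inserted $\M$ ``its own auxiliary Fock space''. The tetrahedron equation \eqref{aoy} is $\M_{126}(z_{12})\M_{346}(z_{34})\L_{135}(z_{13})\L_{245}(z_{24})=\L_{245}(z_{24})\L_{135}(z_{13})\M_{346}(z_{34})\M_{126}(z_{12})$, in which the two $\M$'s \emph{share} the Fock space labelled $6$; the cancellations that make it hold (e.g.\ $(q{\bf k}\cdot z_{24}{\bf a}^+)\ot(z_{12}{\bf a}^+\cdot{\tilde{\bf k}})+(z_{13}{\bf a}^+\cdot q{\bf k})\ot({\tilde{\bf k}}\cdot z_{34}{\bf a}^+)=0$ in the proof of theorem \ref{prop:tetrahedron}) require the entries of the row-$\M$ and the column-$\M$ to be multiplied inside one and the same copy of $F$. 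Since each local move involves the row-$\M$ and the column-$\M$ adjacent to that vertex, and the sweep drags every $\M$ past every vertex, all $m+n$ inserted $\M$'s must be strung on a \emph{single} auxiliary line carrying \emph{one} copy of $F$ (the green line in \eqref{fig}). Correspondingly there are not many $|\chi\rangle$'s ``preserved throughout the propagation'', but exactly two caps $|\chi(\cdot)\rangle$ and $\langle\chi(\cdot)|$ placed on the two ends of that single line \emph{after} the purely operator identity \eqref{syk} has been established, and then peeled through the chain of $\M$'s one at a time via \eqref{mst} together with the free boundary sums.

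The spectral-parameter assignment, which you flag as the principal obstacle and leave open, is precisely where the proof is decided, and it does work out --- but it has to be exhibited, not assumed. Realize the two layers as $T(x/y)$ and $T(x'/y')$, so the two transfer matrices in \eqref{S comm} are $S(x/y)^{\bf a}_{\bf j}$ and $S(x'/y')^{\bf a}_{\bf j}$. Taking $(z_1,z_2,z_3,z_4)=(x,x',y,y')$ at every site gives $z_{13}=x/y$ and $z_{24}=x'/y'$ for the two $\L$'s, $z_{12}=x/x'$ for every row-$\M$ and $z_{34}=y/y'$ for every column-$\M$, and the constraint $z_{12}z_{24}=z_{13}z_{34}$ holds identically, so all $mn$ local moves are mutually coherent. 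The $\M$'s crossing the fixed SE edges carry coincident indices $(a_i,a_i)$ or $(j_k,j_k)$ and are literally the identity; those crossing the free NW edges are removed against the two caps by \eqref{mst}, and the resulting nonzero scalar $\langle\chi|\chi\rangle$ cancels from both sides. Until the single-auxiliary-Fock-space structure and this explicit parametrization are supplied, the proposal does not yet constitute a proof.
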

\begin{proof}
This is a consequence of the tetrahedron equation in 
theorem \ref{prop:tetrahedron} and the `trivial' eigenvectors of 
$\M(z)$ in propositions \ref{prop:right eigen} and \ref{prop:left eigen}.
Consider the following two operators on $F^{\ot mn}\ot F$.
\begin{align}
\sum_{\bf b,b'}
&\Bigl(\M({\textstyle\frac{x}{x'}})_{a_m,a_m}^{a_m,a_m}\cdots
\M({\textstyle\frac{x}{x'}})_{a_1,a_1}^{a_1,a_1}\Bigr)
\Bigl(\M({\textstyle\frac{y}{y'}})_{b_n,b'_n}^{c_n,c'_n}\cdots
\M({\textstyle\frac{y}{y'}})_{b_1,b'_1}^{c_1,c'_1}\Bigr)
T({\textstyle\frac{x}{y}})_{\bf i,j}^{\bf a,b}
T({\textstyle\frac{x'}{y'}})_{\bf i',j}^{\bf a,b'},
\label{lhs}\\
\sum_{\bf k,k'}
&T({\textstyle\frac{x'}{y'}})_{\bf k',j}^{\bf a,c'}
T({\textstyle\frac{x}{y}})_{\bf k,j}^{\bf a,c}
\Bigl(\M({\textstyle\frac{y}{y'}})_{j_n,j_n}^{j_n,j_n}\cdots
\M({\textstyle\frac{y}{y'}})_{j_1,j_1}^{j_1,j_1}\Bigr)
\Bigl(\M({\textstyle\frac{x}{x'}})_{i_m,i'_m}^{k_m,k'_m}\cdots
\M({\textstyle\frac{x}{x'}})_{i_1,i'_1}^{k_1,k'_1}\Bigr),
\label{rhs}
\end{align}
where ${\bf i}=(i_1,\ldots,i_m)$, etc. 
The left block of $\M$ both in \eqref{lhs} and
\eqref{rhs} are actually the identities 
but it is better to keep them temporarily for the proof.
The operators in (\ref{lhs}) and (\ref{rhs}) coincide.
To see this we depict them as follows.
\begin{equation} \label{fig}
\begin{picture}(400,135)(-43,-10)
\setlength{\unitlength}{0.6mm}
\thicklines
{\color{green}\drawline(0,60)(60,60)
\qbezier(60,60)(63,60)(63,57)\put(63,57){\vector(0,-1){60}}}

\put(-21,32){$\displaystyle{\sum_{{\bf b},{\bf b}'}}$}
\put(-10,25){
\qbezier(67,18)(68,18)(69,18)\put(69,18){\vector(1,-1){10}}
\put(14,18){\line(1,0){5}}\put(21,18){\line(1,0){26}}\put(49,18){\line(1,0){18}}
\put(8,17){$i'_1$}\put(64,20){$a_1$}\put(84.5,16){$a_1$}
\put(10,10){\line(1,0){50}}
\qbezier(60,10)(62,10.2)(63,10.3)
\put(63,10.3){\vector(3,1){20}}
\put(4,8.2){$i_1$}\put(63,6){$a_1$}\put(80,5){$a_1$}
\put(12,-3){$\vdots$}
}

\put(-10,-3){
\qbezier(67,18)(68,18)(69,18)\put(69,18){\vector(1,-1){10}}
\put(14,18){\line(1,0){5}}\put(21,18){\line(1,0){26}}\put(49,18){\line(1,0){18}}
\put(6,17){$i'_m$}\put(64,20){$a_m$}\put(84.5,16){$a_m$}
\put(10,10){\line(1,0){50}}
\qbezier(60,10)(62,10.2)(63,10.3)
\put(63,10.3){\vector(3,1){20}}
\put(2.5,8){$i_m$}\put(63,6){$a_m$}\put(80,5){$a_m$}
}

\put(-10,-7){
\put(30.5,63){\vector(-1,1){9.5}}\qbezier(30.5,63)(30.9,62.7)(31,62)
\put(31,10){\line(0,1){3}}\put(31,15){\line(0,1){26}}\put(31,43){\line(0,1){19}}
\put(15,74){$c'_1$}\put(30,78){$c_1$}
\put(33,61){$b'_1$}\put(30,4){$j_1$}
\put(20.4,57){\vector(2,3){12}}\qbezier(20.4,57)(20.3,56.8)(20,56)
\put(20,6){\line(0,1){50}}
\put(14,57){$b_1$}\put(18,0){$j_1$}
\put(0.2,7){\rotatebox{36}{\put(41,25){\color{blue}\vector(-1,0){22}}}}
\put(0.2,-21){\rotatebox{36}{\put(41,25){\color{blue}\vector(-1,0){22}}}}
\put(35,53){$\cdots$}
}

\put(18,-7){
\put(30.5,63){\vector(-1,1){9.5}}\qbezier(30.5,63)(30.9,62.7)(31,62)
\put(31,10){\line(0,1){3}}\put(31,15){\line(0,1){26}}\put(31,43){\line(0,1){19}}
\put(15,74){$c'_n$}\put(30,78){$c_n$}
\put(33,61){$b'_n$}\put(30,4){$j_n$}
\put(20.4,57){\vector(2,3){12}}\qbezier(20.4,57)(20.3,56.8)(20,56)
\put(20,6){\line(0,1){50}}
\put(14,57){$b_n$}\put(18,0){$j_n$}
\put(0.2,7){\rotatebox{36}{\put(41,25){\color{blue}\vector(-1,0){22}}}}
\put(0.2,-21){\rotatebox{36}{\put(41,25){\color{blue}\vector(-1,0){22}}}}
}


\put(89,32){$=\;\;\;
\displaystyle{\sum_{{\bf k},{\bf k}'}}$}

\put(129,5){
{\color{green}\drawline(0,60)(0,3)
\qbezier(0,3)(0,0)(3,0)\put(3,0){\vector(1,0){60}}}

\put(-1,32){
\drawline(-8,12)(10,18)\qbezier(10,18)(11,18)(12.6,18)
\put(15,18){\line(1,0){26}}\put(43,18){\vector(1,0){24}}
\put(4,20){$k_1$}\put(69,17){$a_1$}
\drawline(-6,20)(6,12)\qbezier(6,12)(8,10)(10,10)
\put(10,10){\vector(1,0){50}}
\put(2.2,6){$k'_1$}\put(62,9){$a_1$}
\put(-11.5,21){$i'_1$}\put(-13.5,10){$i_1$}
\put(-5,0){$\vdots$}
}

\put(-1,4){
\drawline(-8,12)(10,18)\qbezier(10,18)(11,18)(12.6,18)
\put(15,18){\line(1,0){26}}\put(43,18){\vector(1,0){24}}
\put(4,20){$k_m$}\put(69,17){$a_m$}
\drawline(-6,20)(6,12)\qbezier(6,12)(8,10)(10,10)
\put(10,10){\vector(1,0){50}}
\put(1.8,6.2){$k'_m$}\put(62,9){$a_m$}
\put(-13.5,20){$i'_m$}\put(-15,10){$i_m$}
}

\put(-7,0){
\drawline(21,-10)(31,8)  \qbezier(31,8)(31,9) (31,10)
\put(31,10){\line(0,1){3}}\put(31,15){\line(0,1){26}}\put(31,43){\vector(0,1){19}}\put(29,64){$c_1$}\put(32.2,7){$j_1$}\put(32.3,-7.5){$j_1$}
\drawline(32,-5)(21,4)  \qbezier(21,4)(21,4) (20,6)
\put(20,6){\vector(0,1){50}}
\put(18,58){$c'_1$}\put(15.1,4){$j_1$}\put(16.3,-12){$j_1$}
\put(0.2,7){\rotatebox{36}{\put(41,25){\color{blue}\vector(-1,0){22}}}}
\put(0.2,-21){\rotatebox{36}{\put(41,25){\color{blue}\vector(-1,0){22}}}}
\put(36,55){$\cdots$}
}

\put(21,0){
\drawline(21,-10)(31,8)  \qbezier(31,8)(31,9) (31,10)
\put(31,10){\line(0,1){3}}\put(31,15){\line(0,1){26}}\put(31,43){\vector(0,1){19}}
\put(29,64){$c_n$}\put(32.2,7){$j_n$}\put(32.3,-7.5){$j_n$}
\drawline(32,-5)(21,4)  \qbezier(21,4)(21,4) (20,6)
\put(20,6){\vector(0,1){50}}
\put(18,58){$c'_n$}\put(14.2,3.6){$j_n$}\put(16.1,-12){$j_n$}
\put(0.2,7){\rotatebox{36}{\put(41,25){\color{blue}\vector(-1,0){22}}}}
\put(0.2,-21){\rotatebox{36}{\put(41,25){\color{blue}\vector(-1,0){22}}}}
}

}
\end{picture}
\thinlines
\end{equation}
Here $T(z)_{\bf i,j}^{\bf a,b}$ acts on $F^{\ot mn}$ (blue arrows) and
$\M(z)_{i,j}^{a,b}$ acts on the extra single Fock space $F$ (green arrow).
In the left figure,
the front and the back layers correspond to 
$T({\textstyle\frac{x}{y}})_{\bf i,j}^{\bf a,b}$ and 
$T({\textstyle\frac{x'}{y'}})_{\bf i',j}^{\bf a,b'}$ in (\ref{lhs}), respectively.
Similarly in the right figure,
the front and the back layers represent  
$T({\textstyle\frac{x'}{y'}})_{\bf k',j}^{\bf a,c'}$ and 
$T({\textstyle\frac{x}{y}})_{\bf k,j}^{\bf a,c}$ in (\ref{rhs}), respectively.
From the top right corner of the left figure, using theorem \ref{prop:tetrahedron} 
one can move the green line all the way down to the bottom left. It means that
the left figure and the right one are equal as operators on $F^{\ot mn}\ot F$.
Now we rephrase \eqref{lhs}=\eqref{rhs} as
\begin{equation}\label{syk}
\begin{split}
&\sum_{\bf b,b'}
\Bigl(\M({\textstyle\frac{y}{y'}})_{b_n,b'_n}^{c_n,c'_n}\cdots
\M({\textstyle\frac{y}{y'}})_{b_1,b'_1}^{c_1,c'_1}\Bigr)
T({\textstyle\frac{x}{y}})_{\bf i,j}^{\bf a,b}
T({\textstyle\frac{x'}{y'}})_{\bf i',j}^{\bf a,b'}\\
=
&\sum_{\bf k,k'}
T({\textstyle\frac{x'}{y'}})_{\bf k',j}^{\bf a,c'}
T({\textstyle\frac{x}{y}})_{\bf k,j}^{\bf a,c}
\Bigl(\M({\textstyle\frac{x}{x'}})_{i_m,i'_m}^{k_m,k'_m}\cdots
\M({\textstyle\frac{x}{x'}})_{i_1,i'_1}^{k_1,k'_1}\Bigr)
\end{split}
\end{equation}
removing the identity parts.
Evaluate (\ref{syk}) between 
$\langle\chi({\textstyle\frac{y}{y'}})|\in F^*$ and 
$|\chi({\textstyle\frac{x}{x'}})\rangle \in F$,
where these vectors are on the green arrows on which only the block of  
$\M(z)$'s act.
Further taking the sum over ${\bf i,i',c,c'}$ on the both sides 
by means of (\ref{mst}) we find
\[
\langle\chi({\textstyle\frac{y}{y'}})|\chi({\textstyle\frac{x}{x'}})\rangle
\sum_{\bf i,i',b,b'}T({\textstyle \frac{x}{y}})_{\bf i,j}^{\bf a,b}
T({\textstyle\frac{x'}{y'}})_{\bf i',j}^{\bf a,b'}
=
\langle\chi({\textstyle\frac{y}{y'}})|\chi({\textstyle\frac{x}{x'}})\rangle
\sum_{\bf k,k',c,c'}T({\textstyle\frac{x'}{y'}})_{\bf k',j}^{\bf a,c'}
T({\textstyle\frac{x}{y}})_{\bf k,j}^{\bf a,c}.
\]
Since 
$\langle\chi({\textstyle\frac{y}{y'}})|\chi({\textstyle\frac{x}{x'}})\rangle
= \sum_{m\ge 0}\frac{(q^2)_m}{(q)_m^2}({\textstyle\frac{xy}{x'y'}})^m \neq 0$,
we get 
$S({\textstyle \frac{x}{y}})^{\bf a}_{\bf j}
S({\textstyle\frac{x'}{y'}})^{\bf a}_{\bf j}
=S({\textstyle\frac{x'}{y'}})^{\bf a}_{\bf j}
S({\textstyle \frac{x}{y}})^{\bf a}_{\bf j}$ by (\ref{S}).
\end{proof}

\begin{example}\label{ex:ymi}
The commutativity (\ref{S comm}) is easily seen for 
those $S(z)^{\bf a}_{\bf j}$ in examples \ref{ex:ts0} and \ref{ex:nmi}.
Let us check it for 
$S^{00}_{00}(z) = \sum_{i=0}^2 z^i W_i$ in example \ref{ex:ts}.
The relation $[W_i,W_j]=0$ to be shown is nontrivial only for $(i,j)=(1,2)$.
We have 
$W_2 = 1 \ot {\bf a}^+ \ot {\bf a}^+ \ot 1 
+ q {\bf a}^+ \ot {\bf k} \ot {\bf k} \ot {\bf a}^+$ and  
$W_1 = {\bf a}^+\ot 1 \ot 1\ot 1+ U \ot 1 + q V \ot {\bf a}^+$, 
where 
$U = {\bf k}\ot {\bf a}^+ \ot 1 + {\bf a}^- \ot {\bf a}^+ \ot {\bf a}^+ 
+ q {\bf k}\ot 1 \ot {\bf a}^+$ and $V = 1 \ot {\bf k} \ot {\bf k}$.
As $[ {\bf a}^+ \ot 1\ot 1\ot 1, W_2]=0$, we are to show
\begin{align*}
0&=[U \ot 1 + q V \ot {\bf a}^+, 
1 \ot {\bf a}^+ \ot {\bf a}^+ \ot 1 
+ q {\bf a}^+ \ot {\bf k} \ot {\bf k} \ot {\bf a}^+]\\
&= [U, 1 \ot {\bf a}^+ \ot {\bf a}^+]\ot 1 + 
q Y \ot {\bf a}^+ + q^2 [V, {\bf a}^+ \ot {\bf k} \ot {\bf k} ]\ot ({\bf a}^+)^2,
\end{align*}
where $Y = [U, {\bf a}^+ \ot {\bf k} \ot {\bf k}] + 
[V, 1 \ot {\bf a}^+ \ot {\bf a}^+]$.
Obviously the leftmost and the rightmost commutators in the last expression
vanish.
Hence we are to show $Y=0$.
The relation ${\bf k} \,{\bf a}^+ = -q {\bf a}^+ {\bf k}$ in (\ref{tgm})
tells $[{\bf k}\ot {\bf a}^+ \ot 1,{\bf a}^+ \ot {\bf k} \ot {\bf k}] 
= [{\bf k}\ot 1 \ot {\bf a}^+,{\bf a}^+ \ot {\bf k} \ot {\bf k}] =0$.
Thus $Y=0$ reduces to 
$[{\bf a}^- \ot {\bf a}^+ \ot {\bf a}^+,{\bf a}^+ \ot {\bf k} \ot {\bf k}]
+[V, 1 \ot {\bf a}^+ \ot {\bf a}^+]=0$.
This is straightforward by (\ref{tgm}).
\end{example}

\section{Further bilinear relations}\label{sec:fb}
In the proof of proposition \ref{prop:S comm}
we used the fact that $v_i\ot v_i\ot |\xi\rangle$ and
$v^*_i\ot v^*_i\ot \langle\xi|$ are eigenvectors of $\M(z)$. 
However, there are also the third eigenvectors in both
proposition \ref{prop:right eigen} and \ref{prop:left eigen}
which are slightly more involved.
By using them we can generate further bilinear relations among 
the layer to layer transfer matrices 
$S(z)^{\bf a}_{\bf j}$'s mixing different boundary conditions ${\bf a},{\bf j}$. 
To describe such relations we prepare some notation.
 
Recall that $m$ and $n$ are any positive integers 
representing the size of the layer as in (\ref{S}).
For a subset $I \subseteq \{1,\ldots,m\}$ with the 
complement $\overline{I}=\{1,\ldots, m\} \setminus I$ and 
sequences $\grbold{\alpha} \in \{0,1\}^{\# I}$, 
$\grbold{\beta} \in \{0,1\}^{\# \overline{I}}$,
let $(\grbold{\alpha}_I,\grbold{\beta}_{\overline{I}}) 
\in \{0,1\}^m$ 
be the sequence in which the subsequence corresponding to the indices in 
$I$ is $\grbold{\alpha}$ and the rest $\overline{I}$ is $\grbold{\beta}$. 
For instance for $m=5$, 
$(\grbold{\alpha}_{\{1,3,4\}},\grbold{\beta}_{\{2,5\}})
=(\alpha_1,\beta_1,\alpha_2,\alpha_3,\beta_2)$
for $I=\{1,3,4\}$, 
$\grbold{\alpha}=(\alpha_1,\alpha_2,\alpha_3)$ and 
$\grbold{\beta}=(\beta_1,\beta_2)$\footnote{
Note that it is {\em not} $(\alpha_1,\alpha_3,\alpha_4,\beta_2,\beta_5)$.}.
Likewise for $J \sqcup \overline{J} = \{1,\ldots, n\}$
and $\grbold{\gamma} \in \{0,1\}^{\# J}$, 
$\grbold{\delta} \in \{0,1\}^{\# \overline{J}}$, 
the symbol 
$(\grbold{\gamma}_J,\grbold{\delta}_{\overline{J}}) \in \{0,1\}^n$ 
denotes the similar sequence.
For any sequence $\grbold{\alpha}
=(\alpha_1,\ldots,\alpha_k)\in \{0,1\}^k$, we set 
$|\grbold{\alpha}|=\alpha_1+\cdots+\alpha_k$ and
$\overline{\grbold{\alpha}}=(1-\alpha_1,\ldots,1-\alpha_k)$.

\begin{theorem} [Bilinear relations of layer to layer transfer matrices]\label{prop:S rel}
For any subsets $I \subseteq \{1,\ldots,m\}$ and 
$J \subseteq \{1,\ldots,n\}$ and sequences 
$\grbold{\alpha}\in \{0,1\}^{\# I}$ and 
$\grbold{\gamma} \in \{0,1\}^{\# J}$, we have 
(parentheses omitted in suffixes of $S$)
\begin{equation} \label{S rel}
\sum_{\scriptsize\grbold{\beta,\delta}}
y^{|{\scriptsize\grbold{\beta}}|+|{\scriptsize\grbold{\delta}}|}
x^{|\overline{\scriptsize\grbold{\beta}}|+|\overline{\scriptsize\grbold{\delta}}|}\,
S(y)^{\scriptsize \grbold{\alpha}_I,\grbold{\beta}_{\overline{I}}}_{
\scriptsize \grbold{\gamma}_J,\grbold{\delta}_{\overline{J}}}\,
S(x)^{\scriptsize \grbold{\alpha}_I,\grbold{\overline{\beta}}_{\overline{I}}}_{
\scriptsize \grbold{\gamma}_J,\grbold{\overline{\delta}}_{\overline{J}}}
=(x \longleftrightarrow y),
\end{equation}
where the sum runs over 
$\grbold{\beta} \in \{0,1\}^{\# \overline{I}}$ and 
$\grbold{\delta} \in \{0,1\}^{\# \overline{J}}$.
\end{theorem}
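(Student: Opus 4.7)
The plan is to generalize the argument of Proposition \ref{prop:S comm}. That proof uses only the trivial eigenvectors $v_0\otimes v_0$ and $v_1\otimes v_1$ of the $M$-operator; the bilinear relation (\ref{S rel}) is the natural extension obtained by using, at positions in $\overline{I}$ and $\overline{J}$, the third-type eigenvectors of Propositions \ref{prop:right eigen} and \ref{prop:left eigen} with $(\alpha, \beta) = (y, x)$. The commutativity (\ref{S comm}) is recovered in the case $I = \{1, \ldots, m\}$, $J = \{1, \ldots, n\}$, where only trivial eigenvectors appear and the $y^{\bullet} x^{\bullet}$ weights trivialise.

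Concretely, I would first set up a tetrahedron-derived operator identity on $F^{\otimes mn} \otimes F$ in the spirit of (\ref{syk}) but in which both the row $M$-block (with auxiliary parameter $v$) and the column $M$-block (with auxiliary parameter $u$) are kept nontrivial, so that the two $T$-layers are allowed to carry differing right-boundaries $(\mathbf{a}, \mathbf{a}')$ and bottom-boundaries $(\mathbf{j}, \mathbf{j}')$. Then at each column position $k$, contract the corresponding column $M(u)$ on the $V \otimes V$ part with the trivial right eigenvector $v_{\gamma_k}\otimes v_{\gamma_k}$ if $k \in J$, which forces both $T$-layers to have bottom value $\gamma_k$ at column $k$; if $k \in \overline{J}$, contract instead with the third-type right eigenvector $(y\, v_1\otimes v_0 + x\, v_0 \otimes v_1) \otimes |\chi(yu/x)\rangle$, which forces the two bottom values to be complementary (say $\delta_k$ and $\bar\delta_k$) and contributes the weight $y^{\delta_k}x^{\bar\delta_k}$. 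Analogously, contract each row $M(v)$ with the trivial left eigenvector $v^*_{\alpha_l}\otimes v^*_{\alpha_l}$ if $l \in I$, and with $(y\, v^*_1\otimes v^*_0 + x\, v^*_0\otimes v^*_1)\otimes \langle \chi(y/(xv))|$ if $l \in \overline{I}$, producing weight $y^{\beta_l}x^{\bar\beta_l}$ on the complementary right-boundary values. Since all third-type eigenvectors in the column block share the common Fock ket $|\chi(yu/x)\rangle$, those in the row block share the common Fock bra $\langle \chi(y/(xv))|$, and the trivial eigenvectors impose no constraint on the Fock part, the single auxiliary Fock line threading through the tetrahedron derivation is sandwiched consistently between these two vectors.

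Summing over the remaining free external indices and applying (\ref{mst})-type identities exactly as in the proof of Proposition \ref{prop:S comm}, the left-hand side of the tetrahedron identity produces
\[
\sum_{\grbold{\beta}, \grbold{\delta}} y^{|\grbold{\beta}|+|\grbold{\delta}|}x^{|\overline{\grbold{\beta}}|+|\overline{\grbold{\delta}}|}\, S(y)^{\grbold{\alpha}_I, \grbold{\beta}_{\overline{I}}}_{\grbold{\gamma}_J, \grbold{\delta}_{\overline{J}}}\, S(x)^{\grbold{\alpha}_I, \overline{\grbold{\beta}}_{\overline{I}}}_{\grbold{\gamma}_J, \overline{\grbold{\delta}}_{\overline{J}}}
\]
multiplied by the Fock pairing $\langle \chi(y/(xv))|\chi(yu/x)\rangle = \sum_{m\ge 0}\frac{(q^2)_m}{(q)_m^2}\bigl(\frac{y^2u}{x^2v}\bigr)^m$, while the right-hand side yields the analogous expression with the two $T$-layers in the opposite order and with $x$ and $y$ interchanged, matching the ``$x \longleftrightarrow y$'' on the right of (\ref{S rel}). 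Since the Fock pairing is a nonzero scalar for generic auxiliary $u, v$, dividing it out delivers (\ref{S rel}). The main obstacle is setting up the required generalized tetrahedron-derived identity with all four $M$-blocks active: in (\ref{syk}) as written, one pair of $M$-blocks is degenerate to identities and forces $\mathbf{a}=\mathbf{a}'$, $\mathbf{j}=\mathbf{j}'$, whereas the bilinear relation demands complementary boundary values on $\overline{I}$ and $\overline{J}$; extending the iteration of the tetrahedron equation (\ref{aoy}) to keep both pairs active, and then carefully tracking which boundary labels of which $S$-factor receive $y$ versus $x$ in the aggregated weight, is where the real work lies.
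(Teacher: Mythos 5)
Your proposal is correct and follows essentially the same route as the paper: one iterates the tetrahedron equation to obtain an $MTT=TTM$ identity in which both the row and column $M$-blocks carry general external indices, and then eliminates all the $M$'s by sandwiching the single auxiliary Fock line between $\langle\chi|$ and $|\chi\rangle$, using the trivial eigenvectors of $\M$ at positions in $I,J$ and the weighted third-type eigenvectors (which produce exactly the $y^{|\grbold{\beta}|+|\grbold{\delta}|}x^{|\overline{\grbold{\beta}}|+|\overline{\grbold{\delta}}|}$ factors) at positions in $\overline{I},\overline{J}$. The paper writes this out only for $\overline{I}=\overline{J}=\{1\}$ (corollary \ref{cor:S rel}) and declares the general case ``easily inferred,'' which is precisely the general contraction you describe; the only loose point is that your auxiliary parameters $u,v$ are not independent but constrained by $v=yu/x$ through the tetrahedron equation, which is exactly what makes your bra and ket simultaneously compatible with both $M$-blocks.
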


Proposition \ref{prop:S comm} is the simplest case of theorem \ref{prop:S rel}
corresponding to $I=\{1,\ldots,m\},J=\{1,\ldots,n\}$, where the sum reduces 
to a single term.
As another example,
when $(m,n)=(4,3), I=\{1,3\},J=\{2,3\},\grbold{\alpha}=(0,1),\grbold{\gamma}=(1,0)$,
the relation \eqref{S rel} reads 
\begin{align*}
&x^3S(y)^{0010}_{010}S(x)^{0111}_{110}
+yx^2S(y)^{0011}_{010}S(x)^{0110}_{110}
+yx^2S(y)^{0110}_{010}S(x)^{0011}_{110}\\
&+y^2xS(y)^{0111}_{010}S(x)^{0010}_{110}
+yx^2S(y)^{0010}_{110}S(x)^{0111}_{010}
+y^2xS(y)^{0011}_{110}S(x)^{0110}_{010}\\
&+y^2xS(y)^{0110}_{110}S(x)^{0011}_{010}
+y^3S(y)^{0111}_{110}S(x)^{0010}_{010}
=(x \longleftrightarrow y).
\end{align*}

We will present a proof of theorem \ref{prop:S rel} 
only for the special case considered in 
corollary \ref{cor:S rel} below, 
since the general case is easily inferred from it.
It corresponds to the choice 
$I=\{2,3,\ldots, m\}$, $\grbold{\alpha} = {\bf a}$, 
$J = \{2,3,\ldots, n\}$, $\grbold{\gamma} = {\bf j}$ in (\ref{S rel}), 
which suffices for our application to TASEP in the next section.

\begin{corollary} \label{cor:S rel}
For any sequences ${\bf a} \in \{0,1\}^{m-1}$ and 
${\bf j} \in \{0,1\}^{n-1}$, we have
\begin{align*}
&x^2S(y)^{0\,{\bf a}}_{0\,{\bf j}}S(x)^{1\,{\bf a}}_{1\,{\bf j}} 
+yxS(y)^{0\,{\bf a}}_{1\,{\bf j}} S(x)^{1\,{\bf a}}_{0\,{\bf j}} \\ 
+\, &yxS(y)^{1\,{\bf a}}_{0\,{\bf j}}S(x)^{0\,{\bf a}}_{1\,{\bf j}} 
+y^2S(y)^{1\,{\bf a}}_{1\,{\bf j}}S(x)^{0\,{\bf a}} _{0\,{\bf j}} 
=(x \longleftrightarrow y).
\end{align*}
\end{corollary}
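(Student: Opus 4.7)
The plan is to adapt the proof of Proposition \ref{prop:S comm} by allowing the two layers to have different boundary conditions at position $1$ (of both ${\bf a}$ and ${\bf j}$) and replacing the trivial eigenvectors of $\M(z)$ there with the mixed right and left eigenvectors (third vectors of Propositions \ref{prop:right eigen} and \ref{prop:left eigen}). The four-term structure of Corollary \ref{cor:S rel} originates directly from the two independent binary choices, each of $(a_1^{(1)},a_1^{(2)})$ and $(j_1^{(1)},j_1^{(2)})$ lying in $\{(0,1),(1,0)\}$.

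First I would rerun the tetrahedron sliding depicted in (\ref{fig}) with general first-entry right boundaries $(a_1^{(1)},a_1^{(2)})$ and bottom boundaries $(j_1^{(1)},j_1^{(2)})$ for the two $T$-layers, keeping the remaining entries matched to ${\bf a}$ and ${\bf j}$. Since Theorem \ref{prop:tetrahedron} imposes no restriction on the external labels of $\L$ and $\M$, the same sequence of moves yields an operator identity on $F^{\otimes mn}\otimes F$ of the same shape as (\ref{syk}), except that the $\M$ at position $1$ of the top $\M(x/x')$-block and the $\M$ at position $1$ of the bottom $\M(y/y')$-block are no longer identities. I would then pair the auxiliary green Fock space by $\langle\chi(y/y')|$ and $|\chi(x/x')\rangle$ as in Proposition \ref{prop:S comm}, and additionally weight the two nontrivial mixed $\M$-blocks according to (\ref{sin1}) and (\ref{sin2}). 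Specializing the parameters $(\alpha,\beta)$ so that the tetrahedron variables identify with the corollary's spectral pair via $x/y\mapsto y$ and $x'/y'\mapsto x$, the right mixed eigenvector at the top block produces the weight $y^{a_1^{(1)}}x^{a_1^{(2)}}$ and the left mixed eigenvector at the bottom block produces $y^{j_1^{(1)}}x^{j_1^{(2)}}$. After summing the four resulting instances and collapsing the remaining free internal labels into $S$'s via (\ref{S}), the common nonzero factor $\langle\chi(y/y')|\chi(x/x')\rangle$ cancels from both sides, and Corollary \ref{cor:S rel} emerges.

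The main obstacle is the compatibility condition on the single green Fock space: it must simultaneously host the third-type eigenvectors of two $\M$-blocks with differing spectral parameters $x/x'$ and $y/y'$. This forces a specific relation among the four tetrahedron variables so that (\ref{sin1}) and (\ref{sin2}) deliver precisely the monomials $y^{\beta_1+\delta_1}x^{\bar\beta_1+\bar\delta_1}$ required by the special case of (\ref{S rel}) at $(I,J)=(\{2,\ldots,m\},\{2,\ldots,n\})$. Pinning down this identification and verifying $\langle\chi(y/y')|\chi(x/x')\rangle\neq 0$ is the technical heart; thereafter the derivation is a direct parallel of the argument for Proposition \ref{prop:S comm}.
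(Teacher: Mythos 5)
Your proposal follows essentially the same route as the paper: its proof writes down exactly the modified sliding identity you describe (eq.~(\ref{MTT=TTM}), with the position-$1$ entries of the ${\bf a}$- and ${\bf j}$-blocks of $\M$'s made nontrivial and constrained to sum to $1$), multiplies by $\alpha^{a_1+j_1}\beta^{a'_1+j'_1}$, and evaluates against $\langle\chi(y'/y)|$ and $|\chi(x/x')\rangle$ so that all $\M$'s collapse. The compatibility you single out as the technical heart is resolved there by the single specialization $\alpha=xy'$, $\beta=x'y$, which makes (\ref{sin1}) and (\ref{sin2}) hold simultaneously for that one pair of vectors without imposing any constraint on $x,x',y,y'$.
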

\begin{proof}
The proof proceeds similarly to that of proposition \ref{prop:S comm}. 
Consider the following equality of operators on
$F^{\ot mn}\ot F$.
\begin{equation}\label{MTT=TTM}
\begin{split}
&\sum_{{\bf b,b'}\atop a''_1+a'''_1=1}
\M({\textstyle \frac{x}{x'}})_{a''_1,a'''_1}^{a_1,a'_1}
\Bigl(\M({\textstyle \frac{y}{y'}})_{b_n,b'_n}^{c_n,c'_n}\cdots
\M({\textstyle \frac{y}{y'}})_{b_1,b'_1}^{c_1,c'_1}\Bigr)
T({\textstyle \frac{x}{y}})_{\bf i,j}^{\bf a'',b}
T({\textstyle \frac{x'}{y'}})_{\bf i',j'}^{\bf a''',b'}\\
=&\sum_{{\bf k,k'}\atop j''_1+j'''_1=1}
T({\textstyle \frac{x'}{y'}})_{\bf k',j'''}^{\bf a',c'}
T({\textstyle \frac{x}{y}})_{\bf k,j''}^{\bf a,c}
\M({\textstyle \frac{y}{y'}})_{j_1,j'_1}^{j''_1,j'''_1}
\Bigl(\M({\textstyle \frac{x}{x'}})_{i_m,i'_m}^{k_m,k'_m}\cdots
\M({\textstyle \frac{x}{x'}})_{i_1,i'_1}^{k_1,k'_1}\Bigr),
\end{split}
\end{equation}
where ${\bf a,a',a'',a'''}$ (resp. ${\bf j,j',j'',j'''}$)\footnote{${\bf a ,j}$ here 
have a different meaning from those in the statement.}
differ from each other only at the first component, which are
$a_1,a'_1,a''_1,a'''_1$ (resp. $j_1,j'_1,j''_1,j'''_1$). 
We take $a_1+a'_1=1$ and $j_1+j'_1=1$ and exhibited the
constraints coming from (\ref{ice}).
Unlike the previous 
$\eqref{lhs}=\eqref{rhs}$, the identity operators were not written. 
The difference of (\ref{MTT=TTM}) from 
\eqref{fig} is that the pair $(a_1,a_1)$ on the right end 
and $(j_1,j_1)$ at the bottom left 
were changed to $(a_1,a'_1)$ and $(j_1,j'_1)$
which should necessarily be $(0,1)$ or $(1,0)$.

Substitution of $\alpha=xy',\beta=x'y$ into 
(\ref{sin1}) and (\ref{sin2}) lead to
\begin{align}\label{szk}
\sum_{i+j=1}\alpha^i\beta^j
\M({\textstyle \frac{y}{y'}})_{i,j}^{k,l}|
\chi({\textstyle \frac{x}{x'}})\rangle=\alpha^k\beta^l
|\chi({\textstyle \frac{x}{x'}})\rangle,
\quad\sum_{i+j=1}\alpha^i\beta^j\langle\chi({\textstyle \frac{y'}{y}})|\,
\M({\textstyle \frac{x}{x'}})^{i,j}_{k,l}=\alpha^k\beta^l
\langle\chi({\textstyle \frac{y'}{y}})|.
\end{align}
On the both sides of \eqref{MTT=TTM},  
multiply $\alpha^{a_1+j_1}\beta^{a'_1+j'_1}$ 
and take sum over ${\bf i, i', c, c'}$ and 
$a_1,a'_1,j_1,j'_1$ with the constraints $a_1+a'_1=1,j_1+j'_1=1$. 
Evaluate the matrix element of the resulting operator identity 
between $\langle\chi({\textstyle \frac{y'}{y}})|$ from the left and
$|\chi({\textstyle \frac{x}{x'}})\rangle$ from the right.
Thanks to the identities (\ref{mst}) and (\ref{szk}),
all the $M$ operators disappear.
After canceling 
$\langle\chi({\textstyle \frac{y'}{y}})|
\chi({\textstyle \frac{x}{x'}})\rangle\neq 0$ from the both sides we find 
\begin{align*}
&\sum_{{\bf i,i',b,b'}\atop a''_1+a'''_1=1,j_1+j'_1=1}\alpha^{a''_1+j_1}\beta^{a'''_1+j'_1}
T({\textstyle \frac{x}{y}})_{\bf i,j}^{\bf a'',b}
T({\textstyle \frac{x'}{y'}})_{\bf i',j'}^{\bf a''',b'}\\
=&\sum_{{\bf k,k',c,c'}\atop a_1+a'_1=1,j''_1+j'''_1=1}\alpha^{a_1+j''_1}\beta^{a'_1+j'''_1}
T({\textstyle \frac{x'}{y'}})_{\bf k',j'''}^{\bf a',c'}
T({\textstyle \frac{x}{y}})_{\bf k,j''}^{\bf a,c}.
\end{align*}
Using \eqref{S} and dividing by $(yy')^2$ we arrive at
\begin{align*}
&\sum_{a''_1+a'''_1=1,j_1+j'_1=1}({\textstyle \frac{x}{y}})^{a''_1+j_1}
({\textstyle \frac{x'}{y'}})^{a'''_1+j'_1}S({\textstyle \frac{x}{y}})_{\bf j}^{\bf a''}
S({\textstyle \frac{x'}{y'}})_{\bf j'}^{\bf a'''}\\
=&\sum_{a_1+a'_1=1,j''_1+j'''_1=1}({\textstyle \frac{x}{y}})^{a_1+j''_1}
({\textstyle \frac{x'}{y'}})^{a'_1+j'''_1}
S({\textstyle \frac{x'}{y'}})_{\bf j'''}^{\bf a'}
S({\textstyle \frac{x}{y}})_{\bf j''}^{\bf a}
\end{align*}
as desired.
\end{proof}

\begin{example}\label{ex:ykw}
When $(m,n)=(1,2)$ and ${\bf j}=(0)$, corollary \ref{cor:S rel} says
\begin{align*}
x^2S(y)^1_{00}S(x)^1_{10} + 
yxS(y)^0_{10}S(x)^1_{00}+
yx S(y)^1_{00}S(x)^0_{10}+
y^2S(y)^1_{10}S(x)^0_{00} = (x \longleftrightarrow y).
\end{align*}
In fact substituting example \ref{ex:nmi} and 
using (\ref{tgm}) we find  that the left hand side is equal to
\begin{align*}
q(x+y)({\bf a}^-\st {\bf k})+
q(x^2+y^2)(1 \st {\bf k})+
qxy(x+y)\bigl(q(1-q){\bf k}\st{\bf a}^+{\bf k}
+ {\bf a}^+\st {\bf k}\bigr)+ xy W
\end{align*}
for some $W \in {\mathscr A}^{\otimes 2}$ independent of $x$ and $y$.
\end{example}

\begin{remark}\label{re:sar}
One can generalize the bilinear relation in theorem \ref{prop:S rel} 
further by introducing {\em inhomogeneity parameters} 
as follows. In \eqref{S} we consider
horizontal lines as carrying parameters $x_1,\dots,x_m$ from the top to the bottom and
vertical lines $y_1,\dots,y_n$ from the left to the right. Set ${\bf x}=(x_1,\dots,x_m),
{\bf y}=(y_1,\dots,y_n)$. We define $S({\bf x};{\bf y})^{\bf a}_{\bf j}$ by changing
the parameter $z$ of $\L(z)$ to $x_i/y_j$ if this $\L(z)$ is situated
on the vertex
where the $i$-th horizontal and the $j$-th vertical line meet.
As in theorem \ref{prop:S rel} let $I,J$ be subsets of
$\{1,\ldots,m\},\{1,\ldots,n\}$
and take $\grbold{\alpha} \in \{0,1\}^{\#I},
\grbold{\gamma} \in \{0,1\}^{\# J}$. 
Suppose that
$({\bf x};{\bf y})$ and $({\bf x'};{\bf y'})$ satisfy
\begin{equation} \label{xy ratio}
x_1/x'_1=\dots=x_m/x'_m=u,\quad y_1/y'_1=\dots=y_n/y'_n=v.
\end{equation}
In other words, $x_i/x'_i$ or $y_i/y'_i$ do not depend on $i$. Then as
a generalization
of theorem \ref{prop:S rel} we have
\begin{equation} \label{mri}
\sum_{\scriptsize\grbold{\beta,\delta}}
({\textstyle\frac{u}{v}})^{|{\scriptsize\grbold{\beta}}|+|{\scriptsize\grbold{\delta}}|}\,
S({\bf x}; {\bf y})^{\scriptsize \grbold{\alpha}_I,\grbold{\beta}_{\overline{I}}}_{
\scriptsize \grbold{\gamma}_J,\grbold{\delta}_{\overline{J}}}\,
S({\bf x'}; {\bf y'})^{\scriptsize \grbold{\alpha}_I,\grbold{\overline{\beta}}_{\overline{I}}}_{
\scriptsize \grbold{\gamma}_J,\grbold{\overline{\delta}}_{\overline{J}}}
=
\sum_{\scriptsize\grbold{\beta,\delta}}
({\textstyle\frac{u}{v}})^{|{\scriptsize\overline{\grbold{\beta}}}|
+|{\scriptsize\overline{\grbold{\delta}}}|}\,
S({\bf x'}; {\bf y'})^{\scriptsize \grbold{\alpha}_I,\grbold{\beta}_{\overline{I}}}_{
\scriptsize \grbold{\gamma}_J,\grbold{\delta}_{\overline{J}}}\,
S({\bf x}; {\bf y})^{\scriptsize \grbold{\alpha}_I,\grbold{\overline{\beta}}_{\overline{I}}}_{
\scriptsize \grbold{\gamma}_J,\grbold{\overline{\delta}}_{\overline{J}}},
\end{equation}
where the sums are over 
$\grbold{\beta} \in \{0,1\}^{\# \overline{I}}$ and 
$\grbold{\delta} \in \{0,1\}^{\# \overline{J}}$ as in (\ref{S rel}).

The proof goes similarly to that of theorem \ref{prop:S rel}. 
We again outline it along 
the claim corresponding to corollary \ref{cor:S rel}. 
By reasoning analogous to
\eqref{MTT=TTM}
we have
\begin{align*}
&\sum_{{\bf b,b'}\atop a''_1+a'''_1=1}
\M({\textstyle \frac{x_1}{x'_1}})_{a''_1,a'''_1}^{a_1,a'_1}
\Bigl(\M({\textstyle \frac{y_n}{y'_n}})_{b_n,b'_n}^{c_n,c'_n}\cdots
\M({\textstyle \frac{y_1}{y'_1}})_{b_1,b'_1}^{c_1,c'_1}\Bigr)
T({\bf x};{\bf y})_{\bf i,j}^{\bf a'',b}T({\bf x'};{\bf y'})_{\bf
i',j'}^{\bf a''',b'}
\\
=&\sum_{{\bf k,k'}\atop j''_1+j'''_1=1}
T({\bf x'};{\bf y'})_{\bf k',j'''}^{\bf a',c'}T({\bf x};{\bf y})_{\bf
k,j''}^{\bf a,c}
\M({\textstyle \frac{y_1}{y'_1}})_{j_1,j'_1}^{j''_1,j'''_1}
\Bigl(\M({\textstyle \frac{x_m}{x'_m}})_{i_m,i'_m}^{k_m,k'_m}\cdots
\M({\textstyle \frac{x_1}{x'_1}})_{i_1,i'_1}^{k_1,k'_1}\Bigr).
\end{align*}
The arguments of $\M$ are determined uniquely so that we can apply theorem
\ref{prop:tetrahedron}. We then wish to multiply a suitable factor and
take sums
over ${\bf i}, {\bf i}', {\bf c}, {\bf c}'$ and 
$a_1,a'_1,j_1,j'_1$ with the constraints $a_1+a'_1=1,j_1+j'_1=1$ on
the both sides of
the above relation. To make the evaluation by $\langle\chi(v^{-1})|$ and
$|\chi(u)\rangle$ successful and 
to get a relation among $S$, we need the condition
\eqref{xy ratio}.
\end{remark}

\section{Application to $n$-TASEP: Proof of theorem \ref{th:mho}}\label{sec:appli}
We set $m=n$ and $q=0$ in the whole construction 
in sections \ref{sec:LM} -- \ref{sec:fb}. 
The resulting objects like ${\bf a}^\pm, {\bf k} \in {\mathscr A}_{q=0}$ and 
$S(z)^{\bf a}_{\bf j} \in({\mathscr A}_{q=0})^{\ot n^2}$
are still well-defined.
The $0$-oscillator generators 
${\bf a}^+,{\bf a}^-,{\bf k} \in {\mathscr A}_{q=0}$  
act on the Fock space as (\ref{lil}) and 
they obey the relations (\ref{rna}).
The 3D $L$ operator $\L(z)$ now defines 
the $0$-oscillator valued five-vertex model:
\begin{equation} \label{5v}
\begin{picture}(200,70)(-10,-45)
\thicklines

\put(-55,-44){$\L(z)^{a,b}_{i,j}=$}

\put(-10,0){\vector(1,0){20}}
\put(0,-10){\vector(0,1){20}}
\put(-17,-3.5){0}\put(12.5,-3.5){0}\put(-2.4,13){0}\put(-2.3,-19){0}
\put(-2,-44){1}

\put(50,0){
\put(-10,0){\color{red} \vector(1,0){20}}
\put(0,-10){\color{red} \vector(0,1){20}}
\put(-17,-3.5){1}\put(12.5,-3.5){1}\put(-2.4,13){1}\put(-2.3,-19){1}
\put(-2,-44){1}}

\put(100,0){
\put(-10,0){\color{red}\line(1,0){10}}\put(0,0){\vector(1,0){10}}
\put(0,-10){\line(0,1){10}}\put(0,0){\color{red}\vector(0,1){10}}
\put(-17,-3.5){1}\put(12.5,-3.5){0}\put(-2.4,13){1}\put(-2.3,-19){0}
\put(-6,-44){$z{\bf a}^+$}}

\put(150,0){
\put(-10,0){\line(1,0){10}}\put(0,0){\color{red}\vector(1,0){10}}
\put(0,-10){\color{red}\line(0,1){10}}\put(0,0){\vector(0,1){10}}
\put(-17,-3.5){0}\put(12.5,-3.5){1}\put(-2.4,13){0}\put(-2.3,-19){1}
\put(-12,-44){$z^{-1}{\bf a}^-$}}

\put(200,0){
\put(-10,0){\vector(1,0){20}}
\put(0,-10){\color{red}\vector(0,1){20}}
\put(-17,-3.5){0}\put(12.5,-3.5){0}\put(-2.4,13){1}\put(-2.3,-19){1}
\put(-2,-44){${\bf k}$}}
\end{picture}
\end{equation}
The other vertices are assigned with 0. 
In particular the rightmost one in (\ref{6v}) vanishes because the 
spectrum of ${\bf k}$ (before setting $q=0$) is given by $(-q)^{\Z_{\ge 0}}$.
See (\ref{kyk}).
At $z=1$ (\ref{5v}) reduces to (\ref{5v0}).

Define the operators $X_0(z), \ldots, X_n(z)$ by 
\begin{equation}\label{Xhn}
\begin{picture}(150,77)(-70,-10)
\put(-110,29){${X}_i(z) =\sum{z^{\alpha_1+\cdots+ \alpha_n}}
$}
\put(20,52){$. . .$}
\put(-5,27){$.$}\put(-5,24){$.$}\put(-5,21){$.$}
\put(-8,48){\line(1,0){56}}
\put(-8,40){\line(1,0){48}}
\put(-8,32){\line(1,0){40}}
\put(-8,16){\line(1,0){24}}
\put(-8,8){\line(1,0){16}}
\put(-8,0){\line(1,0){8}}

\put(11,9.5){\put(29,25){$.$}\put(27,23){$.$}\put(25,21){$.$}}
\put(-9,-9.5){\put(29,25){$.$}\put(27,23){$.$}\put(25,21){$.$}}

\put(48,48){\vector(0,1){8}}\put(44,60){$\scriptstyle{\alpha_n}$}
\put(40,40){\vector(0,1){16}}
\put(32,32){\vector(0,1){24}}
\put(16,16){\vector(0,1){40}}
\put(8,8){\vector(0,1){48}}\put(5,60){$\scriptstyle{\alpha_2}$}
\put(0,0){\vector(0,1){56}}\put(-7,60){$\scriptstyle{\alpha_1}$}

\put(51,46){$\scriptstyle{0}$}
\put(43,38){$\scriptstyle{0}$}
\put(30,25){$\scriptstyle{0}$}

\put(24,18){$\scriptstyle{1}$}
\put(9,2){$\scriptstyle{1}$}
\put(0,-6){$\scriptstyle{1}$}

\put(29,52){\rotatebox{-135}{$\overbrace{\phantom{KKKK}}$}}
\put(54,26){$\scriptstyle{n-i}$}

\put(2,23){\rotatebox{-135}{$\overbrace{\phantom{KKKk}}$}}
\put(26,-2){$\scriptstyle{i}$}

\put(84,29){$\in({\mathscr A}_{q=0})^{\ot n(n-1)/2}$}

\end{picture}
\end{equation}
where all the vertices stand for $\L(z=1)^{a,b}_{i,j}$ in (\ref{5v}) 
(or equivalently (\ref{5v0})) 
and the sum is taken over $\{0,1\}$ for all edges
under the condition that the values along the NE-SW boundary 
are fixed as specified above.
$X_i(z)$ here includes $X_i$ in (\ref{ngm}) as the special case $z=1$.
\begin{example}\label{ex:syr}
We write down $X_0(z), X_1(z), X_2(z)$ for $n=2$ explicitly.
They are $z$-analogue of $X_i$ given in example \ref{ex:mrn2}.
\begin{align*}
X_0(z) = 1+z {\bf a}^+,\quad
X_1(z) = z {\bf k},\quad
X_2(z) = z{\bf a}^-+z^2 1.
\end{align*}
\end{example}

\begin{example}\label{ex:nzm}
We write down $X_0(z), \ldots, X_3(z)$ for $n=3$ explicitly.
They are $z$-analogue of $X_i$ given in example \ref{ex:mrn3}.
\begin{equation*}
\begin{picture}(600,60)(-59,-20)
\setlength\unitlength{0.26mm}
\thicklines

\put(-55,24){$X_0(z)= $}
\put(0,40){\line(1,0){50}} \put(50,40){\vector(0,1){13}}
\put(0,20){\line(1,0){30}} \put(30,20){\vector(0,1){33}}
\put(0,0){\line(1,0){10}}
\put(10,0){\vector(0,1){53}}
\put(-18,-20){$=1\otimes 1 \otimes 1$}

\put(90,0){\put(-20,24){$+$}
\put(10,40){\line(1,0){40}} \put(50,40){\vector(0,1){13}}
\put(0,20){\line(1,0){30}} \put(30,20){\vector(0,1){33}}
\put(0,0){\line(1,0){10}}
\put(10,0){\line(0,1){40}}
\put(0,40){\color{red}\line(1,0){10}}
\put(10,40){\color{red}\vector(0,1){13}}
\put(-22,-20){$+\;\; z{\bf a}^+\!\otimes 1 \otimes 1$}}

\put(185,0){\put(-25,24){$+$}
\put(0,40){\line(1,0){50}} \put(50,40){\vector(0,1){13}}
\put(0,20){\color{red}{\line(1,0){10}}} \put(10,20){\line(1,0){20}}
\put(30,20){\vector(0,1){33}}
\put(0,0){\line(1,0){10}}
\put(10,0){\line(0,1){20}}\put(10,20){\color{red}\vector(0,1){33}}
\put(-22,-20){$+\;\; z{\bf k}\otimes {\bf a}^+\! \otimes 1$}}

\put(290,0){\put(-30,24){$+$}
\put(0,40){\line(1,0){10}} \put(10,40){\color{red}\line(1,0){20}}\put(30,40){\line(1,0){20}} 
\put(50,40){\vector(0,1){13}}\put(10,40){\vector(0,1){13}}\put(10,20){\color{red}\line(0,1){20}}
\put(0,20){\color{red}\line(1,0){10}}
\put(10,20){\line(1,0){20}} \put(30,40){\color{red}\vector(0,1){13}}
\put(30,20){\line(0,1){20}}
\put(0,0){\line(1,0){10}}
\put(10,0){\line(0,1){20}}
\put(-29,-20){$+\;\; z{\bf a}^-\!\otimes {\bf a}^+\! \otimes {\bf a}^+$}}

\put(400,0){\put(-25,24){$+$}
\put(0,20){\color{red}\line(1,0){10}}\put(10,20){\color{red}\vector(0,1){33}}
\put(0,40){\color{red}\line(1,0){30}}\put(30,40){\color{red}\vector(0,1){13}}
\put(30,40){\line(1,0){20}} \put(50,40){\vector(0,1){13}}
\put(10,20){\line(1,0){20}} \put(30,20){\line(0,1){20}}
\put(0,0){\line(1,0){10}}
\put(10,0){\line(0,1){20}}
\put(-25,-20){$+\;\; z^21\otimes {\bf a}^+ \otimes {\bf a}^+$,}}

\end{picture}
\end{equation*}
\begin{equation*}
\begin{picture}(600,60)(-59,-20)
\setlength\unitlength{0.26mm}
\thicklines

\put(-55,24){$X_1(z)= $}
\put(0,40){\line(1,0){50}} \put(50,40){\vector(0,1){13}}
\put(0,20){\line(1,0){30}} \put(30,20){\vector(0,1){33}}
\put(0,0){\color{red}\line(1,0){10}}
\put(10,0){\color{red}\vector(0,1){53}}
\put(-18,-20){$=z{\bf k} \otimes {\bf k} \otimes 1$}

\put(95,0){\put(-20,24){$+$}
\put(10,40){\color{red}\line(1,0){20}} \put(30,40){\line(1,0){20}}
\put(50,40){\vector(0,1){13}}\put(30,40){\color{red}\vector(0,1){13}}
\put(0,20){\line(1,0){30}} \put(30,20){\line(0,1){20}}
\put(0,0){\color{red}\line(1,0){10}}
\put(10,0){\color{red}\line(0,1){40}}
\put(0,40){\line(1,0){10}}
\put(10,40){\vector(0,1){13}}
\put(-22,-20){$+\;\; z{\bf a}^-\!\otimes {\bf k} \otimes {\bf a}^+$}}

\put(200,0){\put(-20,24){$+$}
\put(50,40){\vector(0,1){13}}\put(30,40){\color{red}\vector(0,1){13}}
\put(0,40){\color{red}\line(1,0){30}}\put(30,40){\line(1,0){20}}
\put(30,20){\line(0,1){20}}
\put(0,20){\line(1,0){30}}
\put(10,0){\color{red}\vector(0,1){53}}
\put(0,0){\color{red}\line(1,0){10}}
\put(-22,-20){$+\;\; z^21\otimes {\bf k} \otimes {\bf a}^+$,}}

\end{picture}
\end{equation*}
\begin{equation*}
\begin{picture}(600,60)(-59,-20)
\setlength\unitlength{0.26mm}
\thicklines

\put(-55,24){$X_2(z)= $}
\put(0,40){\line(1,0){50}}\put(50,40){\vector(0,1){13}}
\put(0,20){\line(1,0){10}}\put(10,20){\color{red}\line(1,0){20}}
\put(30,20){\color{red}\vector(0,1){33}}\put(10,20){\vector(0,1){33}}
\put(10,0){\color{red}\line(0,1){20}}
\put(0,0){\color{red}\line(1,0){10}}
\put(-22,-20){$\;=\;z1\otimes {\bf a}^-\! \otimes {\bf k}$}

\put(105,0){\put(-28,24){$+$}
\put(0,40){\color{red}\line(1,0){10}}\put(10,40){\line(1,0){40}}
\put(50,40){\vector(0,1){13}}
\put(0,20){\line(1,0){10}}\put(10,20){\color{red}\line(1,0){20}}
\put(30,20){\color{red}\vector(0,1){33}}\put(10,40){\color{red}\vector(0,1){13}}
\put(10,20){\line(0,1){20}}
\put(10,0){\color{red}\line(0,1){20}}
\put(0,0){\color{red}\line(1,0){10}}
\put(-32,-20){$\;+\;\;z^2{\bf a}^+\!\otimes{\bf a}^-\! \otimes {\bf k}$}
}

\put(205,0){\put(-25,24){$+$}
\put(0,40){\line(1,0){50}} \put(50,40){\vector(0,1){13}}
\put(0,20){\color{red}\line(1,0){30}} \put(30,20){\color{red}\vector(0,1){33}}
\put(0,0){\color{red}\line(1,0){10}}
\put(10,0){\color{red}\vector(0,1){53}}
\put(-21,-20){$+\;\;z^2{\bf k}\otimes 1 \otimes {\bf k}$,}}

\end{picture}
\end{equation*}
\begin{equation*}
\begin{picture}(600,60)(-59,-20)
\setlength\unitlength{0.26mm}
\thicklines

\put(-55,24){$X_3(z)= $}
\put(0,20){\line(1,0){10}}\put(10,20){\vector(0,1){33}}
\put(0,40){\line(1,0){30}}\put(30,40){\vector(0,1){13}}

\put(30,40){\color{red}\line(1,0){20}} \put(50,40){\color{red}\vector(0,1){13}}
\put(10,20){\color{red}\line(1,0){20}} \put(30,20){\color{red}\line(0,1){20}}
\put(0,0){\color{red}\line(1,0){10}}
\put(10,0){\color{red}\line(0,1){20}}
\put(-19,-20){$=z1\otimes {\bf a}^-\! \otimes {\bf a}^-$}

\put(105,0){\put(-30,24){$+$}
\put(0,40){\color{red}\line(1,0){10}} \put(10,40){\line(1,0){20}}
\put(30,40){\color{red}\line(1,0){20}} 
\put(50,40){\color{red}\vector(0,1){13}}\put(10,40){\color{red}\vector(0,1){13}}
\put(10,20){\line(0,1){20}}
\put(0,20){\line(1,0){10}}
\put(10,20){\color{red}\line(1,0){20}} \put(30,40){\vector(0,1){13}}
\put(30,20){\color{red}\line(0,1){20}}
\put(0,0){\color{red}\line(1,0){10}}
\put(10,0){\color{red}\line(0,1){20}}
\put(-29,-20){$+\;\; z^2{\bf a}^+\!\otimes {\bf a}^-\! \otimes {\bf a}^-$}}

\put(218,0){
\put(-30,24){$+$}
\put(0,40){\line(1,0){30}} \put(50,40){\color{red}\vector(0,1){13}}
\put(0,20){\color{red}\line(1,0){30}} \put(30,40){\vector(0,1){13}}
\put(0,0){\color{red}\line(1,0){10}}\put(30,20){\color{red}\line(0,1){20}}
\put(10,0){\color{red}\vector(0,1){53}}\put(30,40){\color{red}\line(1,0){20}}
\put(-28,-20){$+\;z^2{\bf k}\otimes 1 \otimes {\bf a}^-$}}

\put(307,0){\put(-22,24){$+$}
\put(10,40){\color{red}\line(1,0){40}} \put(50,40){\color{red}\vector(0,1){13}}
\put(0,20){\color{red}\line(1,0){30}} \put(30,20){\color{red}\vector(0,1){33}}
\put(0,0){\color{red}\line(1,0){10}}
\put(10,0){\color{red}\line(0,1){40}}
\put(0,40){\line(1,0){10}}
\put(10,40){\vector(0,1){13}}
\put(-22,-20){$+\;\; z^2{\bf a}^-\!\otimes 1 \otimes 1$}}

\put(402,0){\put(-25,24){$+$}
\put(0,40){\color{red}\line(1,0){50}} \put(50,40){\color{red}\vector(0,1){13}}
\put(0,20){\color{red}\line(1,0){30}} \put(30,20){\color{red}\vector(0,1){33}}
\put(0,0){\color{red}\line(1,0){10}}
\put(10,0){\color{red}\vector(0,1){53}}
\put(-21,-20){$+\;z^31\otimes 1\otimes 1$.}}

\end{picture}
\end{equation*}
\end{example}

\begin{proposition} 
The operators $X_i(z)$'s  are contained in 
the layer to layer transfer matrices at $q=0$ as follows:
\begin{equation} \label{S00}
S(z)^{00{\cdots}0}_{00{\cdots}0}=\sum_{i=0}^nX_i(z){\ot}\underbrace {\overbrace {{\bf a}^+{\ot}{\cdots}{\ot}{\bf a^+}}^{i}{\ot}\overbrace{1{\ot}{\cdots}{\ot}1}^{n-i}}_{\rm diagonal}{\ot}1{\ot}{\cdots}{\ot}1,
\end{equation}
\begin{equation} \label{S10}
S(z)^{10{\cdots}0}_{10{\cdots}0}=z^{-1}\sum_{i=0}^nX_i(z){\ot}\underbrace{\overbrace {1{\ot}{\cdots}{\ot}1}^{i}{\ot}\overbrace{{\bf a}^-{\ot}{\cdots}{\ot}{\bf a}^-}^{n-i}}_{\rm diagonal}{\ot}\overbrace{{\bf a}^+{\ot}{\cdots}{\ot}{\bf a}^+}^{n-1}{\ot}1{\ot}{\cdots}{\ot}1.
\end{equation}
Here `diagonal' signifies the part of the tensor components
corresponding to the vertices on the NE-SW diagonal in \eqref{S} with $m=n$.
\end{proposition}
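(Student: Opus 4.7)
The plan is to prove both (\ref{S00}) and (\ref{S10}) by directly analyzing the configurations of the $q=0$ five-vertex model (\ref{5v}) that contribute to $S(z)^{\bf a}_{\bf j}\big|_{q=0}$. At $q=0$ the ``horizontal pass-through'' vertex $\L(z)^{1,0}_{1,0}=q{\bf k}$ vanishes, so any surviving configuration must avoid it. The strategy is to show that each surviving configuration factorizes into three regions of the $n\times n$ layer: a rigid SE-triangle part forced by the boundary together with the $q{\bf k}$-vanishing, an anti-diagonal part parametrized by a single integer $k\in\{0,\ldots,n\}$, and an NW-triangle part whose configuration sum is, up to a $z$-power discrepancy, $X_k(z)$.

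For (\ref{S00}) the SE boundary has $a_i=j_j=0$. I first prove by downward induction on the anti-diagonals $D_{2n},D_{2n-1},\ldots,D_{n+2}$ that every SE-triangle vertex is the ``all $0$'' vertex: at each such $(i,j)$ the SE-boundary zeros or previously-established ``all $0$'' neighbors force $a=j=0$, so the vertex is either ``all $0$'' or $z{\bf a}^+$, and the $z{\bf a}^+$ option forces the vertex immediately to its left to have $a=1,j=0$, which must equal $q{\bf k}$ and vanishes. The same $q{\bf k}$-forbidding argument, applied on the anti-diagonal $D_{n+1}$, shows that the $z{\bf a}^+$ positions there form a contiguous SW-to-NE block of some size $k$. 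For each block size $k$ the NW triangle inherits a staircase SE-boundary whose $\ell$-th corner label (for $\ell=1,\ldots,n$) is $1$ iff $\ell\le k$, matching the NE-SW hypotenuse of $X_k(z)$ in (\ref{Xhn}); the identification works because the two edges incident to each anti-diagonal vertex from the NW side always carry a common value in any surviving configuration. After tracking the $z$ factors, summation over $k$ yields (\ref{S00}).

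For (\ref{S10}) the SW and NE corners carry $j_1=a_1=1$, forcing a single $1$-path from $(n,1)$ to $(1,n)$. The same downward induction gives $D_k=$ ``all $0$'' for $k\ge n+3$. For the sub-anti-diagonal $D_{n+2}$, any ``all $0$'' vertex would obstruct the $1$-path and propagate to produce a $q{\bf k}$ at the adjacent anti-diagonal vertex, so $D_{n+2}$ is forced to be uniformly $z{\bf a}^+$. On the anti-diagonal $D_{n+1}$ every vertex then has $a=j=1$, and is either $z^{-1}{\bf a}^-$ or the ``all $1$ crossing'' $\L(z)^{1,1}_{1,1}=1$; by the same argument the crossings form a SW-to-NE block of size $k$, and the NW triangle is again identified with $X_k(z)$. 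The overall $z^{-1}$ in (\ref{S10}) arises from the balance $z^{n-1}\cdot z^{-(n-k)}\cdot z^{-k}=z^{-1}$ between the sub-diagonal $z{\bf a}^+$'s, the anti-diagonal $z^{-1}{\bf a}^-$'s, and the $z$-power discrepancy of the NW part.

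The main obstacle I expect is the bookkeeping for the NW-to-$X_k(z)$ identification: one must match the staircase corners of the $X_k(z)$ diagram with the anti-diagonal vertices of the layer and reconcile the $z$-power conventions (the definition in (\ref{Xhn}) uses an overall $z^{\alpha_1+\cdots+\alpha_n}$ over configurations weighted by $\L(z{=}1)$, whereas the layer uses full $\L(z)$-weights). This is tedious but involves no deep identity. The structural content lies in the SE-rigidity arguments, where the $q=0$ vanishing of $\L(z)^{1,0}_{1,0}$ propagates the boundary constraints into the interior and forces the factorized form above; everything else is a matter of organizing the resulting sum into the tensor pattern appearing on the right-hand sides of (\ref{S00}) and (\ref{S10}).
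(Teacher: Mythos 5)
Your proposal is correct and follows essentially the same route as the paper's own (pictorial) proof: the vanishing of $\L(z)^{1,0}_{1,0}=q{\bf k}$ at $q=0$ rigidifies the SE triangle, the anti-diagonal vertices act as the ``no-change'' corners of the $X_k(z)$ staircase because their left and top edges necessarily coincide, the value-$1$ corners form a contiguous SW block, and the $z$-powers telescope to the stated prefactors. The one wrinkle is in your anti-diagonal induction: when you kill the $z{\bf a}^+$ option at $(i,j)\in D_s$ by forcing a $q{\bf k}$ at $(i,j-1)$, the bottom edge of $(i,j-1)$ is the top edge of $(i+1,j-1)\in D_s$, which is not yet known to vanish at that stage of the induction — this is easily repaired by inducting row by row from the bottom instead, or by taking the southwesternmost $z{\bf a}^+$ vertex on $D_s$ and cascading the forced $1$'s down to the bottom boundary, where $j_{j-1}=0$ gives the contradiction.
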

\begin{proof}
We regard the triangle shape region in (\ref{Xhn}) as 
embedded into the 
$n\times n$ square lattice in $(\ref{S})|_{m=n}$.
When $q=0$, the rightmost vertex of $\L(z)$ in (\ref{6v}) is absent. 
This means that the red lines for the allowed configurations 
tend to be confined in the upper left region. 
Also, once an edge on the SW-NE boundary in (\ref{Xhn}) becomes black,
then the subsequent ones continue to be black in its further NE.
These properties imply the claimed expansion formulas.
See the following example from $n=3$, where 
black and red edges are fixed to $0$ and $1$ respectively, 
whereas the dotted ones are to be summed over $0$ and $1$\footnote{
Some of them are actually fixed to $0$ or $1$ by (\ref{5v}),  but 
they are left dotted for the sake of exposition.}.
The four diagrams correspond to $i=0,\ldots, 3$ terms in (\ref{S00}) and 
(\ref{S10}) from the left to the right.  
General case is similar.
\begin{equation*}
\begin{picture}(600,58)(-59,-18)
\setlength\unitlength{0.26mm}
\thicklines

\put(-71,15){$S(z)^{000}_{000}  \ = \ \sum $}
\put(20,-5){
\put(50,40){\vector(0,1){15}}
\put(30,40){\line(1,0){20}}
\put(10,20){\line(1,0){20}}
\put(30,20){\line(0,1){20}}
\put(0,0){\line(1,0){10}}
\put(10,0){\line(0,1){20}}

\put(0,17){$\cdot$}\put(3,17){$\cdot$}
\put(6,17){$\cdot$}
\put(7.8,20){$\cdot$}\put(7.8,23.5){$\cdot$}\put(7.8,26.8){$\cdot$}\put(7.8,30.2){$\cdot$}
\put(7.8,33.3){$\cdot$}\put(7.8,36.5){$\cdot$}\put(7.8,39.5){$\cdot$}
\put(7.8,42){$\cdot$}
\put(7.8,44.5){$\cdot$}
\put(9.7,52){\vector(0,1){3}}

\put(0,36.5){$\cdot$}\put(2.5,36.5){$\cdot$}\put(5,36.5){$\cdot$}
\put(10.5,36.5){$\cdot$}\put(13,36.5){$\cdot$}\put(16.2,36.5){$\cdot$}\put(19.3,36.5){$\cdot$}\put(22.5,36.5){$\cdot$}\put(25.5,36.5){$\cdot$}
\put(28,39.5){$\cdot$}
\put(28,42){$\cdot$}

\put(30,52){\vector(0,1){3}}

\put(10,-10){\line(0,1){10}}
\put(10,0){\line(1,0){20}}
\put(30,-10){\line(0,1){10}}
\put(30,0){\line(0,1){20}}
\put(50,20){\line(0,1){20}}
\put(30,20){\line(1,0){20}}
\put(50,40){\vector(1,0){15}}
\put(50,20){\vector(1,0){15}}
\put(30,0){\vector(1,0){35}}
\put(50,-10){\line(0,1){33}}
}

\put(110,-5){\put(-20,20){$+ \ \sum$}
\put(20,0){
\put(0,17){$\cdot$}\put(3,17){$\cdot$}
\put(6,17){$\cdot$}
\put(7.8,20){$\cdot$}\put(7.8,23.5){$\cdot$}\put(7.8,26.8){$\cdot$}\put(7.8,30.2){$\cdot$}\put(7.8,33.3){$\cdot$}\put(7.8,36.5){$\cdot$}\put(7.8,39.5){$\cdot$}
\put(7.8,42){$\cdot$}

\put(10,52){\vector(0,1){3}}

\put(0,36.5){$\cdot$}\put(2.5,36.5){$\cdot$}\put(5,36.5){$\cdot$}
\put(10.5,36.5){$\cdot$}\put(13,36.5){$\cdot$}\put(16.2,36.5){$\cdot$}\put(19.3,36.5){$\cdot$}\put(22.5,36.5){$\cdot$}\put(25.5,36.5){$\cdot$}
\put(28,39.5){$\cdot$}
\put(28,42){$\cdot$}

\put(30,52){\vector(0,1){3}}

\put(10,-10){\line(0,1){10}}
\put(10,0){\line(1,0){20}}
\put(30,-10){\line(0,1){10}}
\put(30,0){\line(0,1){20}}
\put(50,20){\line(0,1){20}}
\put(30,20){\line(1,0){20}}
\put(50,40){\vector(1,0){15}}
\put(50,20){\vector(1,0){15}}
\put(30,0){\vector(1,0){35}}
\put(50,-10){\line(0,1){30}}

\put(30,40){\line(1,0){20}} \put(50,40){\vector(0,1){15}}
\put(10,20){\line(1,0){20}} \put(30,20){\line(0,1){20}}
\put(0,0){\color{red}\line(1,0){10}}
\put(10,0){\color{red}\line(0,1){20}}
}
}

\put(220,-5){\put(-20,20){$+ \ \sum$}
\put(20,0){
\put(0,17){$\cdot$}\put(3,17){$\cdot$}
\put(6,17){$\cdot$}
\put(7.8,20){$\cdot$}\put(7.8,23.5){$\cdot$}\put(7.8,26.8){$\cdot$}\put(7.8,30.2){$\cdot$}\put(7.8,33.3){$\cdot$}\put(7.8,36.5){$\cdot$}\put(7.8,39.5){$\cdot$}
\put(7.8,42){$\cdot$}

\put(10,52){\vector(0,1){3}}

\put(0,36.5){$\cdot$}\put(2.5,36.5){$\cdot$}\put(5,36.5){$\cdot$}
\put(10.5,36.5){$\cdot$}\put(13,36.5){$\cdot$}\put(16.2,36.5){$\cdot$}\put(19.3,36.5){$\cdot$}\put(22.5,36.5){$\cdot$}\put(25.5,36.5){$\cdot$}
\put(28,39.5){$\cdot$}
\put(28,42){$\cdot$}

\put(30,52){\vector(0,1){3}}

\put(10,-10){\line(0,1){10}}
\put(10,0){\line(1,0){20}}
\put(30,-10){\line(0,1){10}}
\put(30,0){\line(0,1){20}}
\put(50,20){\line(0,1){20}}
\put(30,20){\line(1,0){20}}
\put(50,40){\vector(1,0){15}}
\put(50,20){\vector(1,0){15}}
\put(30,0){\vector(1,0){35}}
\put(50,-10){\line(0,1){30}}

\put(30,40){\line(1,0){20}} \put(50,40){\vector(0,1){15}}
\put(10,20){\color{red}\line(1,0){20}} \put(30,20){\color{red}\line(0,1){20}}
\put(0,0){\color{red}\line(1,0){10}}
\put(10,0){\color{red}\line(0,1){20}}
}
}
\put(330,-5){\put(-20,20){$+ \ \sum$}
\put(20,0){
\put(0,17){$\cdot$}\put(3,17){$\cdot$}
\put(6,17){$\cdot$}
\put(7.8,20){$\cdot$}\put(7.8,23.5){$\cdot$}\put(7.8,26.8){$\cdot$}\put(7.8,30.2){$\cdot$}\put(7.8,33.3){$\cdot$}\put(7.8,36.5){$\cdot$}\put(7.8,39.5){$\cdot$}
\put(7.8,42){$\cdot$}

\put(10,50){\vector(0,1){5}}
\put(0,36.5){$\cdot$}\put(2.5,36.5){$\cdot$}\put(5,36.5){$\cdot$}
\put(10.5,36.5){$\cdot$}\put(13,36.5){$\cdot$}\put(16.2,36.5){$\cdot$}\put(19.3,36.5){$\cdot$}\put(22.5,36.5){$\cdot$}\put(25.5,36.5){$\cdot$}
\put(28,39.5){$\cdot$}
\put(28,42){$\cdot$}

\put(30,52){\vector(0,1){3}}

\put(10,-10){\line(0,1){10}}
\put(10,0){\line(1,0){20}}
\put(30,-10){\line(0,1){10}}
\put(30,0){\line(0,1){20}}
\put(50,20){\line(0,1){20}}
\put(30,20){\line(1,0){20}}
\put(50,40){\vector(1,0){15}}
\put(50,20){\vector(1,0){15}}
\put(30,0){\vector(1,0){35}}
\put(50,-10){\line(0,1){30}}

\put(30,40){\color{red}\line(1,0){20}} \put(50,40){\color{red}\vector(0,1){15}}
\put(10,20){\color{red}\line(1,0){20}} \put(30,20){\color{red}\line(0,1){20}}
\put(0,0){\color{red}\line(1,0){10}}
\put(10,0){\color{red}\line(0,1){20}}
}
}
\end{picture}
\end{equation*}
\begin{equation*}
\begin{picture}(600,55)(-59,-20)
\setlength\unitlength{0.26mm}
\thicklines

\put(-71,15){$S(z)^{100}_{100} \ = \ \sum $}
\put(20,-5){
\put(50,40){\vector(0,1){15}}
\put(30,40){\line(1,0){20}}
\put(10,20){\line(1,0){20}}
\put(30,20){\line(0,1){20}}
\put(0,0){\line(1,0){10}}
\put(10,0){\line(0,1){20}}

\put(0,17){$\cdot$}\put(3,17){$\cdot$}
\put(6,17){$\cdot$}
\put(7.8,20){$\cdot$}\put(7.8,23.5){$\cdot$}\put(7.8,26.8){$\cdot$}\put(7.8,30.2){$\cdot$}\put(7.8,33.3){$\cdot$}\put(7.8,36.5){$\cdot$}\put(7.8,39.5){$\cdot$}
\put(7.8,42){$\cdot$}

\put(10,52){\vector(0,1){3}}

\put(0,36.5){$\cdot$}\put(2.5,36.5){$\cdot$}\put(5,36.5){$\cdot$}
\put(10.5,36.5){$\cdot$}\put(13,36.5){$\cdot$}\put(16.2,36.5){$\cdot$}\put(19.3,36.5){$\cdot$}\put(22.5,36.5){$\cdot$}\put(25.5,36.5){$\cdot$}
\put(28,39.5){$\cdot$}
\put(28,42){$\cdot$}

\put(30,52){\vector(0,1){3}}
\put(10,-10){\color{red}\line(0,1){10}}
\put(10,0){\color{red}\line(1,0){20}}
\put(30,-10){\line(0,1){10}}
\put(30,0){\color{red}\line(0,1){20}}
\put(50,20){\color{red}\line(0,1){20}}
\put(30,20){\color{red}\line(1,0){20}}
\put(50,40){\color{red}\vector(1,0){15}}
\put(50,20){\vector(1,0){15}}
\put(30,0){\vector(1,0){35}}
\put(50,-10){\line(0,1){30}}
}

\put(110,-5){\put(-20,20){$+ \ \sum$}
\put(20,0){
\put(0,17){$\cdot$}\put(3,17){$\cdot$}
\put(6,17){$\cdot$}
\put(7.8,20){$\cdot$}\put(7.8,23.5){$\cdot$}\put(7.8,26.8){$\cdot$}\put(7.8,30.2){$\cdot$}\put(7.8,33.3){$\cdot$}\put(7.8,36.5){$\cdot$}\put(7.8,39.5){$\cdot$}
\put(7.8,42){$\cdot$}

\put(10,52){\vector(0,1){3}}
\put(0,36.5){$\cdot$}\put(2.5,36.5){$\cdot$}\put(5,36.5){$\cdot$}
\put(10.5,36.5){$\cdot$}\put(13,36.5){$\cdot$}\put(16.2,36.5){$\cdot$}\put(19.3,36.5){$\cdot$}\put(22.5,36.5){$\cdot$}\put(25.5,36.5){$\cdot$}
\put(28,39.5){$\cdot$}
\put(28,42){$\cdot$}

\put(30,52){\vector(0,1){3}}
\put(10,-10){\color{red}\line(0,1){10}}
\put(10,0){\color{red}\line(1,0){20}}
\put(30,-10){\line(0,1){10}}
\put(30,0){\color{red}\line(0,1){20}}
\put(50,20){\color{red}\line(0,1){20}}
\put(30,20){\color{red}\line(1,0){20}}
\put(50,40){\color{red}\vector(1,0){15}}
\put(50,20){\vector(1,0){15}}
\put(30,0){\vector(1,0){35}}
\put(50,-10){\line(0,1){30}}

\put(30,40){\line(1,0){20}} \put(50,40){\vector(0,1){15}}
\put(10,20){\line(1,0){20}} \put(30,20){\line(0,1){20}}
\put(0,0){\color{red}\line(1,0){10}}
\put(10,0){\color{red}\line(0,1){20}}
}
}
\put(220,-5){\put(-20,20){$+ \ \sum$}
\put(20,0){
\put(0,17){$\cdot$}\put(3,17){$\cdot$}
\put(6,17){$\cdot$}
\put(7.8,20){$\cdot$}\put(7.8,23.5){$\cdot$}\put(7.8,26.8){$\cdot$}\put(7.8,30.2){$\cdot$}\put(7.8,33.3){$\cdot$}\put(7.8,36.5){$\cdot$}\put(7.8,39.5){$\cdot$}
\put(7.8,42){$\cdot$}

\put(10,52){\vector(0,1){3}}
\put(0,36.5){$\cdot$}\put(2.5,36.5){$\cdot$}\put(5,36.5){$\cdot$}
\put(10.5,36.5){$\cdot$}\put(13,36.5){$\cdot$}\put(16.2,36.5){$\cdot$}\put(19.3,36.5){$\cdot$}\put(22.5,36.5){$\cdot$}\put(25.5,36.5){$\cdot$}
\put(28,39.5){$\cdot$}
\put(28,42){$\cdot$}

\put(30,52){\vector(0,1){3}}

\put(10,-10){\color{red}\line(0,1){10}}
\put(10,0){\color{red}\line(1,0){20}}
\put(30,-10){\line(0,1){10}}
\put(30,0){\color{red}\line(0,1){20}}
\put(50,20){\color{red}\line(0,1){20}}
\put(30,20){\color{red}\line(1,0){20}}
\put(50,40){\color{red}\vector(1,0){15}}
\put(50,20){\vector(1,0){15}}
\put(30,0){\vector(1,0){35}}
\put(50,-10){\line(0,1){30}}

\put(30,40){\line(1,0){20}} \put(50,40){\vector(0,1){15}}
\put(10,20){\color{red}\line(1,0){20}} \put(30,20){\color{red}\line(0,1){20}}
\put(0,0){\color{red}\line(1,0){10}}
\put(10,0){\color{red}\line(0,1){20}}
}
}
\put(330,-5){\put(-20,20){$+ \ \sum$}
\put(20,0){
\put(0,17){$\cdot$}\put(3,17){$\cdot$}
\put(6,17){$\cdot$}
\put(7.8,20){$\cdot$}\put(7.8,23.5){$\cdot$}\put(7.8,26.8){$\cdot$}\put(7.8,30.2){$\cdot$}\put(7.8,33.3){$\cdot$}\put(7.8,36.5){$\cdot$}\put(7.8,39.5){$\cdot$}
\put(7.8,42){$\cdot$}

\put(10,52){\vector(0,1){3}}

\put(0,36.5){$\cdot$}\put(2.5,36.5){$\cdot$}\put(5,36.5){$\cdot$}
\put(10.5,36.5){$\cdot$}\put(13,36.5){$\cdot$}\put(16.2,36.5){$\cdot$}\put(19.3,36.5){$\cdot$}\put(22.5,36.5){$\cdot$}\put(25.5,36.5){$\cdot$}
\put(28,39.5){$\cdot$}
\put(28,42){$\cdot$}

\put(30,52){\vector(0,1){3}}
\put(10,-10){\color{red}\line(0,1){10}}
\put(10,0){\color{red}\line(1,0){20}}
\put(30,-10){\line(0,1){10}}
\put(30,0){\color{red}\line(0,1){20}}
\put(50,20){\color{red}\line(0,1){20}}
\put(30,20){\color{red}\line(1,0){20}}
\put(50,40){\color{red}\vector(1,0){15}}
\put(50,20){\vector(1,0){15}}
\put(30,0){\vector(1,0){35}}
\put(50,-10){\line(0,1){30}}

\put(30,40){\color{red}\line(1,0){20}} \put(50,40){\color{red}\vector(0,1){15}}
\put(10,20){\color{red}\line(1,0){20}} \put(30,20){\color{red}\line(0,1){20}}
\put(0,0){\color{red}\line(1,0){10}}
\put(10,0){\color{red}\line(0,1){20}}
}
}
\end{picture}
\end{equation*}
For the weight of $z$, notice that it is calculated by
\#($1$ on the top edges) $-$\#($1$ on the bottom edges).
\end{proof}

\begin{example}\label{ex:mkr}
Consider the case $n=2$.
Setting $q=0$ in example \ref{ex:ts}, we have
\begin{align*}
S(z)^{00}_{00} &=
(1+z {\bf a}^+)\st 1 \st 1 \st 1+ 
z{\bf k}\st {\bf a}^+ \st 1 \st 1+
(z{\bf a}^-+z^21)\st {\bf a}^+ \st {\bf a}^+ \st 1\\ 
&= X_0(z)\st 1 \st 1 \st 1 +
X_1(z)\st {\bf a}^+ \st 1 \st 1+
X_2(z) \st {\bf a}^+ \st {\bf a}^+ \st 1
\end{align*}
by example \ref{ex:syr} in agreement with (\ref{S00}).
Similarly example \ref{ex:ts2} leads to
\begin{align*}
zS(z)^{10}_{10} &=
(1+z {\bf a}^+)\st {\bf a}^- \st {\bf a}^- \st {\bf a}^+ +
z{\bf k} \st 1 \st {\bf a}^- \st {\bf a}^+ +
(z{\bf a}^-+z^21)\st 1 \st 1 \st {\bf a}^+\\
&=X_0(z) \st {\bf a}^- \st {\bf a}^- \st {\bf a}^+ +
X_1(z)\st 1 \st {\bf a}^- \st {\bf a}^+ +
X_2(z)\st 1 \st 1 \st {\bf a}^+
\end{align*}
in agreement with (\ref{S10}).
\end{example}

Now we are going to extract the relations among $X_i(z)$'s 
from the $q=0$ limit of the bilinear identities in 
proposition \ref{prop:S comm} and corollary \ref{cor:S rel}.

\begin{proposition}[Difference analogue of the hat relation]\label{pr:akn}
The operators $X_i(z)$'s 
satisfy the following relations:
\begin{align}
[X_i(x),X_j(y)]&=[X_i(y),X_j(x)]\qquad (0 \le i,j \le n),
\label{XX=XX}\\
xX_i(y)X_j(x)&=yX_i(x)X_j(y)\qquad \;(0 \le j<i \le n).
\label{xXX}
\end{align}
\end{proposition}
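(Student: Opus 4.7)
The plan is to derive both relations by specializing the bilinear identities of Sections \ref{sec:LLT} and \ref{sec:fb} to $q=0$ and $m=n$, using the explicit expansions (\ref{S00}), (\ref{S10}). For (\ref{XX=XX}), apply Proposition \ref{prop:S comm} with ${\bf a}={\bf j}=(0,\ldots,0)\in\{0,1\}^n$ to obtain $[S(x)^{00\cdots 0}_{00\cdots 0}, S(y)^{00\cdots 0}_{00\cdots 0}]=0$. Writing $S(z)^{00\cdots 0}_{00\cdots 0}=\sum_i X_i(z)\otimes Y_i$ as in (\ref{S00}), each $Y_i$ is $z$-independent and built from pairwise-commuting factors ${\bf a}^+$ and $1$ on distinct tensor positions, so $Y_iY_j=Y_jY_i$. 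Moreover the diagonal pattern of $Y_iY_j$, namely $(({\bf a}^+)^2)^{\otimes\min(i,j)}\otimes({\bf a}^+)^{\otimes|i-j|}\otimes 1^{\otimes n-\max(i,j)}$, depends only on the unordered pair $\{i,j\}$ and distinguishes different pairs. Pairing the $(i,j)$- and $(j,i)$-contributions in the commutator then yields, for each unordered pair, $(X_i(x)X_j(y)-X_i(y)X_j(x))+(X_j(x)X_i(y)-X_j(y)X_i(x))=0$, which rearranges to (\ref{XX=XX}).

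The derivation of (\ref{xXX}) uses Corollary \ref{cor:S rel} together with a vanishing lemma: at $q=0$ with $m=n$, one has $S(z)^{\bf a}_{\bf 0}=0$ whenever ${\bf a}\neq{\bf 0}$. Indeed, at $q=0$ the vertex $\L^{1,0}_{1,0}=q{\bf k}$ vanishes, so every vertex with right-output $1$ must have bottom-input $1$. If some $a_{r^*}=1$, tracing the bottom input of $(r^*,n)$ downward through ${\bf k}$'s and sideways through the required ${\bf a}^+$ or crossing vertices produces a sequence of vertices $(r^{(k)},n-k)$ each with right-output $1$ and strictly increasing rows $r^*<r^{(1)}<\cdots<r^{(n-1)}\le m=n$. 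This forces $r^{(n-1)}=n$, i.e., a vertex at $(n,1)$ with right-output $1$, requiring $j_1=1$ in contradiction with $j_1=0$. In particular $S(z)^{10\cdots 0}_{00\cdots 0}=0$, so the two middle terms of Corollary \ref{cor:S rel} (taken with ${\bf a}={\bf j}={\bf 0}\in\{0,1\}^{n-1}$) vanish, leaving $x^2[S(y)^{00\cdots 0}_{00\cdots 0},S(x)^{10\cdots 0}_{10\cdots 0}]=y^2[S(x)^{00\cdots 0}_{00\cdots 0},S(y)^{10\cdots 0}_{10\cdots 0}]$.

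Substituting (\ref{S00}), (\ref{S10}) into this commutator identity and using the $q=0$ relations (\ref{rna})---in particular ${\bf a}^+{\bf a}^-=1-{\bf k}$ versus ${\bf a}^-{\bf a}^+=1$---both sides become tensor-valued expressions whose above-diagonal components carry the two-variable combinations $xX_i(y)X_j(x)-yX_i(x)X_j(y)$ on the left and $xX_i(x)X_j(y)-yX_i(y)X_j(x)$ on the right. The main obstacle, and the heart of the proof, is to isolate (\ref{xXX}) from this operator identity. The decisive observation is that the diagonal tensor factors on the two sides agree when $i\le j$ but disagree when $j<i$: the left one inserts factors $(1-{\bf k})$ at the middle diagonal positions $j+1,\ldots,i$, whereas the right one inserts plain $1$'s there. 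Extracting the coefficient of any diagonal tensor carrying at least one ${\bf k}$ on this middle range then receives a nonzero contribution only from the single index ordering $(i,j)=(q,p)$ with $p<q$ on the left and nothing from the right, forcing $xX_q(y)X_p(x)-yX_q(x)X_p(y)=0$, which is (\ref{xXX}).
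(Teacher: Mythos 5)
Your proof is correct and follows essentially the same route as the paper: both relations are extracted from the $q=0$, $m=n$ specialization of Proposition \ref{prop:S comm} and Corollary \ref{cor:S rel} via the expansions (\ref{S00})--(\ref{S10}), with the contrast ${\bf a}^+{\bf a}^-=1-{\bf k}$ versus ${\bf a}^-{\bf a}^+=1$ used to isolate the $j<i$ terms by taking coefficients of diagonal tensors containing ${\bf k}$. The only difference is that you supply an explicit path-tracing argument for the vanishing $S(z)^{\bf a}_{\bf 0}=0$ (${\bf a}\neq{\bf 0}$), which the paper dismisses as an obvious property.
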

\begin{proof}
Substituting \eqref{S00} into \eqref{S comm} and taking the coefficient of 
\[
\overbrace{({\bf a}^+)^2{\ot}{\cdots}{\ot}({\bf a}^+)^2}^{j}{\ot}
\overbrace{{\bf a}^+{\ot}{\cdots}{\ot}{\bf a}^+}^{i-j}
{\ot}1{\ot}{\cdots}{\ot}1\qquad (0 \le j \le i \le n),
\] 
we get (\ref{XX=XX}).
Set
${\bf a}=(0,{\ldots},0)$, ${\bf j}=(0,{\ldots},0)$ in 
corollary \ref{cor:S rel} and use the obvious property  
$S(z)^{10\cdots0}_{00{\cdots}0}=S(z)^{00\cdots0}_{10{\cdots}0}=0$ 
to derive
\begin{equation*}
x^2S(y)^{00{\cdots}0}_{00{\cdots}0}S(x)^{10{\cdots}0}_{10{\cdots}0}
+y^2S(y)^{10{\cdots}0}_{10{\cdots}0}
S(x)^{00{\cdots}0}_{00{\cdots}0}=(x \longleftrightarrow y).
\end{equation*}
Substitute \eqref{S00}, \eqref{S10} into this and take the coefficient of  
\[
\overbrace{{\bf a}^+{\ot}{\cdots}{\ot}{\bf a}^+}^{j}
{\ot}\overbrace{{\bf k}{\ot}{\cdots}{\ot}{\bf k}}^{i-j}
{\ot}\overbrace{{\bf a}^-{\ot}{\cdots}{\ot}{\bf a}^-}^{n-i}
\ot(\text{off diagonal})\qquad(0 \le j < i \le n).
\]
Noting that such term comes only from 
$(\overbrace{{\bf a}^+{\ot}{\cdots}
{\ot}{\bf a}^+}^{i}{\ot}1{\ot}{\cdots}{\ot}1)
(\overbrace{1{\ot}{\cdots}{\ot}1}^{j}
{\ot}{\bf a}^-{\ot}{\cdots}{\ot}{\bf a}^-)$, we obtain (\ref{xXX}).
\end{proof}

\begin{remark}\label{re:sae}
The relations in proposition \ref{pr:akn} are rearranged as
\begin{align*}
X_i(x)X_j(y) = \begin{cases}
X_i(y)X_j(x) + (1-\frac{x}{y})X_j(y)X_i(x) & i<j,\\
X_i(y)X_i(x) & i=j,\\
\frac{x}{y}X_i(y)X_j(x) & i>j.
\end{cases}
\end{align*}
This exchange rule satisfies the Yang-Baxter relation in that 
the two ways of rewriting $X_i(x)X_j(y)X_k(z)$ 
as linear combinations of 
$X_{k'}(z)X_{j'}(y)X_{i'}(x)$ with 
$\{i',j',k' \} = \{i,j,k\}$ lead to the identical result.
They are equivalent to 
the $t=0$ case of eqs. (25) and (26) in \cite{CDW}
under the formal correspondence 
$X_i(z) = A_{n-i}(z^{-1})$.
\end{remark}

Finally we introduce the $n$-TASEP operators 
$X_i, {\hat X}_i \in ({\mathscr A}_{q=0})^{\ot n(n-1)/2}$ \cite{KMO} by
\begin{equation}\label{ykn}
X_i=X_i(z=1), \quad  {\hat X}_i=\frac{d}{dz}X_i(z) |_{z=1}
\qquad (0 \le i \le n).
\end{equation}
From (\ref{Xhn}) we see that they coincide with 
those defined in (\ref{ngm}) 
as the configuration sums of the $0$-oscillator valued five-vertex model
whose vertices are specified in (\ref{5v0}).

\vspace{0.3cm}
{\it Proof of theorem \ref{th:mho}}. 
Differentiate \eqref{XX=XX} and  \eqref{xXX}  with
respect to $y$ and set $x,y=1$.
\qed

\section{Summary}\label{sec:sum}
In this paper we have proved the hat relation in theorem \ref{th:mho} among 
the operators $X_i$ and ${\hat X}_i$ defined by (\ref{ngm}).
It provides an alternative derivation of 
the matrix product formula for the steady state probability (\ref{mho})
of the $n$-TASEP, which was obtained earlier in \cite{KMO}
by identifying the Ferrari-Martin algorithm with a composition of the 
combinatorial $R$.

Reversing the order of presentation in this paper, 
our proof of the hat relation may be summarized as follows.
The hat relation (theorem \ref{th:mho}) is first upgraded to 
the difference analogue in proposition \ref{pr:akn}.
By introducing $q$ and embedding into the 3D lattice model,
it is further upgraded to bilinear relations among 
layer to layer transfer matrices (theorem \ref{prop:S rel}).
Finally these relations are attributed to the most local property, 
the tetrahedron equation in proposition \ref{prop:tetrahedron}.

The present paper and \cite{KMO} reveal a hidden 3D integrable structure
in the multispecies TASEP.
It deserves further investigation whether such results can be generalized 
to the large list of matrix product constructions 
of the quantum and combinatorial $R$ by the tetrahedron equation \cite{KOS,Ku}.
It turns out that another prototype model of stochastic dynamics
known as the multispecies 
{\em totally asymmetric zero range process}  (so called TAZRP)
can be analyzed in a completely similar manner based on the scheme given
in this paper.
We plan to present the detail in a future publication.

\section*{Acknowledgments}
A. K. thanks organizers of ``Baxter 2015: Exactly Solved Models $\&$ Beyond",
Palm Cove, Australia, 19-25 July 2015 for warm hospitality, 
where a part of the work was presented.
This work is supported by 
Grants-in-Aid for Scientific Research No.~15K04892,
No.~15K13429 and No.~23340007 from JSPS.

\end{document}